\documentclass{lmcs} %%% last changed 2014-08-20

\keywords{Comprehension categories, dependent type theory, categorical semantics, type morphisms, free constructions}

\usepackage{tikz-cd}
\usepackage{amssymb}
\usepackage{mathpartir}
\usepackage{etoolbox}
\usepackage[hidelinks]{hyperref}
\usepackage{zref-clever}

\zcsetup{
  cap,              % Capitalize all labels (Section, Figure, Table)
  abbrev = false,   % Avoid abbreviations (e.g., Figure instead of Fig.)
  nameinlink = false,  % Makes ONLY the reference number clickable
  countertype = {
    thm = thm,
    cor = cor,
    lem = lem,
    prop = prop,
    defi = defi,
    rem  = rem,
    exa  = exa,
  }
}

\zcRefTypeSetup{thm}{ Name-sg = Theorem, Name-pl = Theorems }
\zcRefTypeSetup{cor}{ Name-sg = Corollary, Name-pl = Corollaries }
\zcRefTypeSetup{lem}{ Name-sg = Lemma, Name-pl = Lemmas}
\zcRefTypeSetup{prop}{ Name-sg = Proposition, Name-pl = Propositions}
\zcRefTypeSetup{defi}{ Name-sg = Definition, Name-pl = Definitions}
\zcRefTypeSetup{rem}{ Name-sg = Remark, Name-pl = Remarks}
\zcRefTypeSetup{exa}{ Name-sg = Example, Name-pl = Examples}

\DeclareMathAlphabet{\mathpzc}{OT1}{pzc}{m}{it}

\mathcode`\:="603A     % : = \colon
\mathchardef\colon="303A  % :=
\newcommand{\typing}{\colon}

\newcommand{\define}[1]{\emph{#1}}
\newcommand{\arcmp}{\circ}

\newcommand{\rwhisk}[1]{#1} %writes the 1-cell to the right
\newcommand{\lwhisk}[1]{#1} %writes the 2-cell to the right
\newcommand{\mate}[1]{#1^{\#}} 

\def\ct#1{\ensuremath{\mathpzc{#1}}}

\def\cttwo#1{\ensuremath{\mathbf{#1}}}
\def\op{\ensuremath{^{\mathrm{op}}}}

\def\rstn{\ensuremath{\mathord{\restriction}}}
\def\gint{\ensuremath{\mathord{\int}\kern-.2ex}}
\def\Hom#1{\ensuremath{#1}}
\def\OPR#1{\ensuremath{\mathop{\mathrm{#1}}\nolimits}} % perché \mathop?
\def\Obj{\OPR{Obj}}

\def\Arr#1{\ensuremath{#1^{2}}}
\def\cod{\OPR{cod}}
\def\dom{\OPR{dom}}

\def\id#1{\ensuremath{\mathrm{id}_{#1}}} %identity 1-cells
\def\idtwo#1{\ensuremath{\mathrm{i}_{#1}}} %identity 2-cells
\def\Id#1{\ensuremath{\mathrm{Id}_{#1}}} %identity functors between categories
\def\twoId#1{\ensuremath{\boldsymbol{\mathrm{Id}_{#1}}}} %identity functors between 2-categories

\newcommand{\obzero}[1]{\ensuremath{0}^#1}

\def\fp#1{\ensuremath{|#1|}}  %components of arrows in jacobs completion
\def\sp#1{\ensuremath{\num{#1}}}
\def\tp#1{\ensuremath{\widetilde{#1}}}

\def\ini#1#2{\ensuremath{#1\hookrightarrow #2}}  %initial segment inclusion
\def\carr#1{\ensuremath{#1^+}}  %comprehension arrow
\def\Cat{\ensuremath{\ct{Cat}}}  %categoria cat
\def\Cattwo{\ensuremath{\cttwo{Cat}}}  %2-categoria cat
\newcommand{\ctset}{\ct{Set}}  %categoria set
\newcommand{\ctsetp}{\ct{Set}_*}  %categoria set puntata
\newcommand{\Fam}[1]{\mathrm{Fam}(#1)}
\newcommand{\Famfib}[1]{\mathrm{Fam}_{#1}}
\newcommand{\tfib}[1]{\dot{#1}}  %term fibration
\newcommand{\tcat}[1]{\dot{#1}}  %term category
\newcommand{\tsigma}{\dot{\Sigma}}  %term sigma
\newcommand{\gtfib}[1]{\tilde{#1}}  %global type morphism fibration
\newcommand{\gtcat}[1]{\tilde{#1}}  %global type morphism category
\newcommand{\gtsigma}{\tilde{\Sigma}}  %global type morphism sigma

\newcommand{\ftransp}{\#}  %transposition functor on fibrations
\newcommand{\ttransp}{\#}  %transposition functor on total categories

\newcommand{\nt}{\Rightarrow} %natural transformations
\newcommand{\twofun}[1]{\mathbf{#1}}
\newcommand{\twont}[1]{\boldsymbol{#1}}
\newcommand{\forgT}{\twofun{U_T}} %forgets terminal JTComp -> JComp
\newcommand{\forgLE}{\twofun{U_{LE}}} %forgets LE LECC -> JTComp
\newcommand{\forgJ}{\twofun{U_J}} %forgets J JComp -> Fib
\newcommand{\reind}[1]{#1^*} %reindex functor along argument
\newcommand{\carar}[2]{#1^{#2}} %cartesian lifting of the argument 1 at the argument 2

\newcommand{\num}[1]{\underline{#1}}

\def\Fib{\ensuremath{\cttwo{Fib}}}
\newcommand{\Fibone}{\ct{Fib}}
\newcommand{\Fibdisc}{\ct{Fib}_d}

\newcommand{\fb}[1]{{#1}^{\mathrm{b}}} %freccia sotto in quadrati
\newcommand{\ft}[1]{{#1}^{\mathrm{t}}} %freccia sopra in quadrati
\def\FFFP#1{\ensuremath{\ct{FFP}(#1)}}  %category

\def\fffp#1{\ensuremath{\mathbf{ffp}(#1)}}  %fibration
\def\Fffp{\ensuremath{\mathbf{ffp}}}  %functor, but actually used for the fibration
\newcommand{\Fp}{\ensuremath{\mathbf{ffp}}}  %2-functor

\def\LComp{\ensuremath{\cttwo{LECC}}} %Lawvere comprehension category
\def\JComp{\ensuremath{\cttwo{JCC}}} %Jacobs comprehension category
\def\JTComp{\ensuremath{\cttwo{JCC_T}}} %Jacobs comp. cat. w. fib. term.

\newcommand{\unitty}{\ensuremath{\top}}
\newcommand{\unittr}{\ensuremath{*}}
\newcommand{\type}{\ensuremath{\mathrm{Type}}}
\newcommand{\ctx}{\ensuremath{\mathrm{Ctx}}}

\def\LFib#1{\ensuremath{\hat{#1}}}  %fibrazione Lawvere libera
\def\LFun#1{\ensuremath{\hat{#1}}}  %azione del funtore libero su frecce
\newcommand{\term}[1]{\ensuremath{\mathrm{T}^{#1}}} %funtore terminale fibrazione
\newcommand{\comp}[1]{\ensuremath{\mathrm{C}^{#1}}} %funtore comprensione fibrazione
\newcommand{\coun}[1]{\ensuremath{\mathrm{\epsilon}^{#1}}} %counità in aggiunzione di Lawvere
\newcommand{\un}[1]{\ensuremath{\mathrm{\eta}^{#1}}} %unità in aggiunzione di Lawvere
\newcommand{\comon}[1]{\OPR{T}}  %fibred comonad of products
\newcommand{\LtoJ}{\ensuremath{\twofun{LEJ}}}

\def\JDom#1#2{\ensuremath{{#1}^{\twofun{J}}_{#2}}}  %dominio fibrazione Jacobs libera
\def\JTDom#1#2{\ensuremath{{#1}^{\twofun{T}}_{#2}}}  %domain free comp. cat. w. fib. prod.
\def\JDomE#1{\JDom{\ct{E}}{#1}}  %dominio fibrazione Jacobs libera
\def\JTDomE#1{\JTDom{\ct{E}}{#1}}  %domain free comp. cat. w. fib. prod.
\def\JFib#1{\ensuremath{{#1}^{\twofun{J}}}}  %fibrazione Jacobs libera
\def\JTFib#1{\ensuremath{{#1}^{\twofun{T}}}}  %free comprehension preserves fib. term.
\def\JFun#1{\ensuremath{{#1}^{\twofun{J}}}}  %azione del funtore libero su frecce

\def\JTFun#1{\ensuremath{{#1}^{\twofun{T}}}}
\def\Jun#1{\ensuremath{\twont{\eta}^{\twofun{J}}_{#1}}}  
\def\JTun#1{\ensuremath{\twont{\eta}^{\twofun{T}}_{#1}}}  
\def\Jcoun#1{\ensuremath{\twont{\epsilon}^{\twofun{J}}_{#1}}}
\def\JTcoun#1{\ensuremath{\twont{\epsilon}^{\twofun{T}}_{#1}}}
\def\JFree{\ensuremath{\twofun{F^J}}}  %funtore libero da Fib a JComp
\def\JTFree{\ensuremath{\twofun{F^T}}}  %free functor from JComp to JTComp

\def\C#1#2{\ensuremath{c^{#1}_{#2}}}  %frecce c_i^k
\newcommand{\jcomp}[1]{\chi^{#1}} %comprensione di Lawvere
\def\JFC#1{\ensuremath{\jcomp{\JFib{#1}}}}  %funtore comprensione libero su fibrazione
\newcommand{\weak}[2]{\mathrm{w}_{#1}^*#2}  %weakening di 2 lungo comprensione di 1
\newcommand{\weakar}[2]{\mathrm{w}_{#1}#2}  %weakening arrow di 2 lungo comprensione di 1
\newcommand{\gel}[1]{\mathrm{g}_{#1}}  %elemento generico su 1
\newcommand{\var}[1]{x}  %variabile generica

\newcommand{\Tr}{\OPR{Tr}}  %category pre-quotient
\newcommand{\Trfib}[1]{\OPR{Trfib} #1}  %category pre-quotient
\newcommand{\Const}{\OPR{Const}}  %constants functor
\newcommand{\ConstCat}{\OPR{ConstC}}  %constants functor
\newcommand{\ConstFib}[1]{\overline{#1}}  %functor from constcat to base
\newcommand{\triang}{tr}  %constant to triangles functor
\newcommand{\incl}{incl}  %constant to old morphisms
\newcommand{\LEFree}{\twofun{F^{LE}}}  
\newcommand{\LEFib}[1]{{#1}^{\twofun{LE}}}
\newcommand{\LEFun}[1]{{#1}^{\twofun{LE}}}
\newcommand{\LEDom}[2]{{#1}^{\twofun{LE}}_{#2}}
\newcommand{\LEDomE}[1]{\LEDom{\ct{E}}{#1}}
\newcommand{\LEun}[1]{\twont{\eta}^{\twofun{LE}}_{#1}}
\newcommand{\LEcoun}[1]{\twont{\epsilon}^{\twofun{LE}}_{#1}}
\newcommand{\quot}[1]{\OPR{Q}^{#1}}  %quotient towards LEFib
\newtheorem*{notation}{Notation}

\newcommand{\fibre}[3][]{#2\ifblank{#1}{}{^{#1}}_{#3}} 

\theoremstyle{plain} %\zcrefname{satz}{Satz}{S\"atze}
\def\eg{{\em e.g.}}

\tikzset{Rightarrow/.style={double equal sign distance,>={Implies},->},
	triple/.style={-,preaction={draw,Rightarrow}}}

\begin{document}

% If the title is longer than 55 characters, then specify a shorter running title as the optional argument to \title. The running title should be roughyl at most 55 characters:
\title[Free constructions for comprehension categories]{Free constructions for comprehension categories}
% \titlecomment{{\lsuper*}OPTIONAL comment concerning the title, \eg,
% 	if a variant or an extended abstract of the paper has appeared elsewhere.}
% \thanks{thanks, optional.}	%optional

% affiliations are numbered automatically with a, b, c (see below)
% use the optional argument to indicate the affiliation(s) of each author
% omit the argument if there is only one author, or only one affiliation
\author[F.~Dagnino]{Francesco Dagnino}[a]
\author[J.~Emmenegger]{Jacopo Emmenegger}[b]
\author[A.~Giusto]{Andrea Giusto}[a]

% affiliation 1 (automatically numbered a)
\address{University of Genoa, DIBRIS, Italy}	%optional
% write emails for all authors having that affiliation
\email{francesco.dagnino@unige.it, andrea.giusto@edu.unige.it}  %optional

% affiliation 2 (automatically numbered b)
\address{University of Genoa, DIMA, Italy}	%optional
\email{jacopo.emmenegger@gmail.com}  %optional

%% etc.

%% required for running head on odd and even pages, use suitable
%% abbreviations in case of long titles and many authors:

%%%%%%%%%%%%%%%%%%%%%%%%%%%%%%%%%%%%%%%%%%%%%%%%%%%%%%%%%%%%%%%%%%%%%%%%%%%

%% the abstract has to PRECEDE the command \maketitle:
%% be sure not to issue the \maketitle command twice!

\begin{abstract}
	\noindent 
Jacobs comprehension categories subsume a large class of categorical models of type dependency, 
supporting also the description of morphisms between types.
We study the relationship between comprehension categories and a particular subclass, which we call Lawvere-Ehrhard comprehension categories. 
First, we characterize this subclass by comparing a fibration of terms and a fibration of type morphisms associated to a given comprehension category. 
Next, we provide the construction of the free comprehension category over a fibration. 
Finally, we construct the free Lawvere-Ehrhard comprehension category over a Jacobs comprehension category.
\end{abstract}

\maketitle

%% start the paper here:
\section{Introduction}\label{S:one}

Type dependency has been fruitfully studied using category theory, leading to the introduction of a wide range of models such as
contextual categories \cite{cartmell1986generalised}, categories with families \cite{dybjer1995internal}, natural models \cite{awodey2018natural}, and many others. 
All these structures are essentially built around three  cornerstones: 
a category of contexts and substitutions, 
for every context $\Gamma$ a collection of types depending on $\Gamma$, and 
operations capturing the substitution and context extension rules, i.e. 
\begin{mathpar}
\inferrule*{
  \sigma : \Delta\to\Gamma \\
  \Gamma \vdash A\ \type 
}{ \Delta \vdash A[\sigma]\ \type }
\and 
\inferrule*{
  \Gamma\vdash A \ \type
}{ \vdash \Gamma,\, x\typing A \ \ctx}
\end{mathpar}

In \cite{ahrens2024comparing} the authors give a summary of the relationships between these and more structures,
recognizing that all the models can be described as comprehension categories \cite{JACOBS1993169}, 
which thus provide a unifying framework for studying type dependency. 
Essentially, these are (Grothendieck) fibrations with additional structure specifically designed to capture the context extension rule. 

Fibrations provide a compact and manageable way of describing families of categories indexed by a base category: 
the categories in the family are packed into a single category, the total category, together with a functor into the base category whose fibres are the original categories,
and the action of the arrows in the base is encoded by making sure that the total category contains enough arrows, called cartesian, that satisfy a certain universal property with respect to the functor into the base.
A \emph{comprehension category} consists of a fibration $p: \ct{E}\to \ct{B}$ together with a functor over $\ct{B}$, often written $\chi$, that assigns an arrow in $\ct{B}$ to every object in $\ct{E}$, and commutative squares in $\ct{B}$ to arrows in $\ct{E}$, in such a way that cartesian arrows are mapped to pullback squares.

There is a class of examples arising from dependent type theories:
objects and arrows in the base category represent (telescopic) contexts and substitutions, respectively;
the fibre over each context $\Gamma$ is the set of types depending on $\Gamma$;
and cartesian arrows are there to witness the fact that a certain type is obtained by applying an appropriate substitution to another type, exactly one arrow for each such occurrence.
The comprehension structure $\chi$ is obtained from the operation of context extension: the functor assigns to a type $A$ in context $\Gamma$ the substitution $\Gamma, x:A\to \Gamma$ that forgets the variable of type $A$ and fixes all the others, sometimes called display map.
Under this assignment, cartesian arrows are mapped to pullback squares involving two parallel display maps and this is, one could say, the main reason for the definition of comprehension category and of its instances mentioned above.

For usual dependent type theories, the collection of types over a given context is a set rather than a category, as in the example above.
As a consequence, fibrations modelling them have a special property: they are \emph{discrete}, meaning that the fibres are discrete categories, i.e.~sets. 
The reason is that usually there is no primitive syntactic notion of ``type morphism''. 
However, recent works \cite{adjedj2026adaptt, coraglia2023categorical, najmaei2026semantics} highlight the importance of (non-trivial) type morphisms as a way of incorporating forms of subtyping into a dependent type theory, thus requiring arbitrary (not necessarily discrete) comprehension categories for modelling them. 
More precisely, in \cite{coraglia2023categorical} the authors show that faithful comprehension categories, that is, those whose fibres are preorders, 
naturally support a notion of coercive subtyping in the sense of \cite{luo2013coercive}: 
the unique type morphism from a type $A$ to another type $B$ over a context $\Gamma$ is the coercion  realizing  the subtyping relation $A\leq B$. 
Following this perspective, they observe that arbitrary comprehension categories can interpret 
a generalized form of coercive subtyping that is ``proof-relevant'', in the sense that the same subtyping relation can be realized by different coercions. 
In the same spirit, \cite{najmaei2026semantics, adjedj2026adaptt} introduced new dependent type theories where type morphisms are first-class citizens, having a dedicated judgement to construct them, which thus need to be modelled by unrestricted comprehension categories. 

Comprehension categories are designed precisely to mirror in a categorical framework the standard structural rules of a dependent type theory without anything else. 
As a result, they provide a very general framework, which however may be quite wild, as the behaviour of extended contexts having few constraints is weakly characterized. 
Ehrhard~\cite{ehrhard1988} followed a different strategy:
he introduced another class of models of type dependency based on fibrations, called D-categories, aimed at characterizing extended contexts by a universal property. 
In type-theoretic terms, D-categories are fibrations with a unit type $\top$ in every contexts and, for every type $A$ in context $\Gamma$, the extended context $\Gamma, x : A$ is characterized as the one such that substitutions from another context $\Delta$ into $\Gamma,x:A$ are in one to one correspondence with  type morphisms over $\Delta$ from $\top$ into an appropriate substitution of $A$. 
In technical terms, the extended context is a representing object for the above type morphisms.
Note that this definition relies on type morphisms and so, except for degenerate cases, it requires non-discrete fibrations. 
Although apparently different, D-categories and comprehension categories are tightly related. 
Specifically, Jacobs \cite{JACOBS1993169} proved that any D-category gives rise to a comprehension category on the same fibration, showing that Ehrhard's characterization of extended contexts is compatible with the structural rules of dependent type theories.

Another interesting fact about D-categories is that they create a bridge between the context extension rule of dependent type theories and the comprehension schema of logic. 
Indeed, D-categories generalise a categorical description of the set-theoretic comprehension schema given earlier by Lawvere~\cite{MR0257175}, as shown by Jacobs~\cite[Example 4.18]{JACOBS1993169}.
For this reason, from now on we will refer to D-categories as \emph{Lawvere-Ehrhard comprehension categories}, abbreviated LECC. 

The purpose of this paper is to study in detail the relationships between these two categorical approaches to comprehensions, notably, Jacobs and Lawvere-Ehrhard comprehension categories. First of all, we identify the structural principles that distinguish the latter from the former. In particular, we show that Lawvere-Ehrhard comprehension categories correspond to dependent type theories with type morphisms and a unit type, such that terms of type $A$ are in one-to-one correspondence with type morphisms from the unit type into $A$. This highlights that the difference between these two structures lies not in how they handle types and context extension, but rather in how they treat type morphisms: in Jacobs comprehension categories, they are an independent piece of data, whereas in Lawvere-Ehrhard comprehension categories, they are intrinsically connected to terms, which they completely determine.

Building on this characterization, the core contribution of the paper is the development of free constructions relating these two kinds of comprehension categories with each other and with plain fibrations. These free constructions capture, in a principled categorical way, the syntactic structures that differentiate these models from one another. At the same time, they provide a modular method for extending models with new features and for constructing a wide variety of new, free examples.

\subsubsection*{Outline} 
In \zcref{sect:prelim} we recall basic concepts and results about (Grothendieck) fibrations. 
\zcref{sect:jcomp} introduces the main characters of this paper, that is, (Jacobs) comprehension categories and Lawvere-Ehrhard comprehension categories. 
We first provide a characterization of those Jacobs comprehension categories that are actually Lawvere-Ehrhard  by comparing a fibration of terms with a fibration of type morphisms from the unit type. 
Then, we show how the two structures have different behaviours with respect to faithfulness and proof-irrelevance. 
In \zcref{sect:free-jcomp} 
we give a first universal construction 
describing the free (Jacobs) comprehension category over a fibration, 
Then, in \zcref{sect:JtoL} we give another universal construction turning any Jacobs comprehension category into a Lawvere-Ehrhard one. 
We do this in two steps: 
first we show how to freely add fibred terminal objects to a comprehension category, and 
then we build the free Lawvere-Ehrhard comprehension category over a Jacobs  comprehension category with fibred terminal objects. 
Note that, 
by composing all these three steps, we also get a construction of the free Lawvere-Ehrhard comprehension category over a fibration. 
Finally, \zcref{sect:conclu} summarizes our contribution and concludes the paper.

\section{Preliminaries on fibrations}\label{sect:prelim}

In this section we recall basic notions and results about (Grothendieck) fibrations, referring the reader to \cite{StreicherT:fibc} for more details. 
We assume familiarity with basic  concepts about (1-)categories and 2-categories, which can be found  in \eg, \cite{kelly2006review, johnson20212, riehl2017category}.
In the following we will write $\ctset$ for the category of sets and functions and $\Cattwo$ for the 2-category of categories, functors and natural transformations.
Given a category $\ct{C}$, we denote by $\Arr{\ct{C}}$ the category of arrows and commutative squares in $\ct{C}$. 
Similarly, given a 2-category $\cttwo{K}$, we denote by $\Arr{\cttwo{K}}$ the 2-category of 1-arrows, commutative squares and ``modifications" between such commutative squares in $\cttwo{K}$.
			
				Let $p:\ct{E}\to\ct{B}$ be a functor. An arrow $g:A\to B$ in $\ct{E}$ is \define{cartesian} if
				for every $g':A'\to B$ and $v:pA'\to pA$ such that $p(g')=p(g)\arcmp v$ there exist unique $h:A'\to A$ such that $g'=g\arcmp h$ and $p(h)=v$, as depicted in the following diagram
				\[\begin{tikzcd}
					{A'} \\
					A & B \\
					{pA'} \\
					pA & pB
					\arrow["h"', dashed, from=1-1, to=2-1]
					\arrow["{g'}", from=1-1, to=2-2]
					\arrow["g"', from=2-1, to=2-2]
					\arrow["v"', from=3-1, to=4-1]
					\arrow["{pg'}", from=3-1, to=4-2]
					\arrow["pg"', from=4-1, to=4-2]
				\end{tikzcd}\]
where the triangle below lies in $\ct{B}$ and the triangle above in $\ct{E}$.

				The functor $p$ is a  \define{fibration}  if, for every arrow $f:X\rightarrow Y$ in $\ct{B}$ and object $B$ in $\ct{E}$ over $Y$, there exist a  cartesian arrow $g:A\rightarrow B$ over $f$. 
				The arrow $g$ is a \define{cartesian lifting} of $f$ at $B$, $\ct{B}$ is the base category while $\ct{E}$ is the total category. 

We say that an arrow $f$ in $\ct{E}$ is \define{vertical} if $p(f)$ is an identity. 
It is easy to see that in a fibration  every arrow in the total category factorizes as a vertical arrow followed by a cartesian one. 
			
				A fibration is \define{cloven} if it is equipped with a choice, named \define{cleavage}, of cartesian liftings. 
That is, for every arrow $f:X\rightarrow Y$ in the base  and every object $B$ over $Y$, we have a cartesian arrow $\carar{f}{B}:\reind{f}{B}\to B$ over $f$. 
Note that, assuming the axiom of choice, every fibration is cloven. We will hence assume in the following that every fibration is endowed with a cleavage.
Another important fact to notice is that in general a cleavage does not preserve composition and identities, that is, 
$\carar{g}{\reind{f}B}\carar{f}{B}\neq \carar{(f\arcmp g)}{B}$ and $\carar{(\id{Y})}{B}\neq \id{B}$.
A cloven fibration where these equalities holds is called \define{split}. 

As already mentioned, fibrations provide a compact representation of families of categories. 
Indeed, given a fibration $p : \ct{E}\to\ct{B}$, for every object $X$ in $\ct{B}$ we can consider a category, dubbed \define{fibre} over $X$ and denoted by $\fibre{\ct{E}}{X}$, consisting of objects of $\ct{E}$ over $X$ and vertical arrows between them. 
Moreover, every arrow $f:X\to Y$ in $\ct{B}$ induces a \define{reindexing functor} $\reind{f} : \fibre{\ct{E}}{Y}\to\fibre{\ct{E}}{X}$. 
These data together give rise to a pseudofunctor $\ct{B}\op\to\Cattwo$, which is actually strict precisely when $p$ is split. 
This construction is known as the Grothendieck construction  and provides an equivalence between (cloven) fibrations and indexed categories. 
A fibration where the fibres are discrete categories, i.e., sets, is called \define{discrete}. 
These fibrations are necessarily cloven and split and correspond via the Grothendieck construction to standard $\ctset$-valued presheaves. 

			\begin{exa}\label{ex:cod}
				The codomain functor $\cod:\Arr{\ct{C}}\rightarrow \ct{C}$ is a fibration if and only if $\ct{C}$ has pullbacks, in fact an arrow in $\Arr{\ct{C}}$ is cartesian if and only if it is a pullback in $\ct{C}$. Hence a cartesian lifting of $f:A\to B$ at $g:C\to B$ is their pullback square.
			\end{exa}
			
			\begin{exa}\label{ex:typeth}
				Every dependent type theory induces a syntactic fibration $p : \ct{E}\to\ct{B}$ defined below.
				Objects in the base category $\ct{B}$ are contexts $\Gamma$, and arrows $\Gamma\to \Delta$, where $\Delta = y_1:\tau_1, ..., y_n:\tau_n$, is a $n$-tuple of terms $(M_1,...,M_n)$ satisfying $\Gamma \vdash M_i:\tau_i[M_1/y_1,..., M_{i-1}/y_{i-1}]$. These terms are to be interpreted as substitutions, and their composition is then the composition of substitutions. The objects of the total category $\ct{E}$ are type judgements of the form $\Gamma\vdash \sigma: \type$. The arrows $(\Gamma\vdash \sigma:\type)\to (\Delta\vdash \tau:\type)$ are pairs $(\vec{M},N)$ with $\vec{M}:\Gamma\to \Delta$ arrow in $\ct{B}$ and $N$ a term satisfying $\Gamma, x:\sigma\vdash N:\tau [\vec{M}/\vec{y}]$. Then the projection on the first component is a fibration.
				A cartesian lifting of an arrow $\vec{M}$ at a type judgement $\Delta\vdash \tau:\type$ is $(\vec{M},x)$ with $\Gamma, x:\tau[\vec{M}/\vec{y}]\vdash x:\tau[\vec{M}/\vec{y}]$.
				
				Let us notice that we are able to perform context extension: given $\sigma$ over $\Gamma$, one can consider the extended context $\Gamma, x:\sigma$. There is also a canonical projection $\chi\sigma:\Gamma, x:\sigma\to\Gamma$ given by the $n$-tuple of variables. Moreover, given an arrow $(\Gamma\vdash \sigma:\type)\to (\Delta\vdash \tau:\type)$, one can consider the following square:
				\[\begin{tikzcd}\label{diag:pb}
					{\Gamma,x:\sigma} && \Gamma \\
					{\Delta, y:\tau} && \Delta
					\arrow["{\chi\sigma}", from=1-1, to=1-3]
					\arrow["{(\vec{M},N)}"', from=1-1, to=2-1]
					\arrow["{\vec{M}}", from=1-3, to=2-3]
					\arrow["{\chi\tau}"', from=2-1, to=2-3]
				\end{tikzcd}\]
				It is not hard to see that this is a pullback in $\ct{B}$.

			This will give rise in \zcref{ex:typeth2} to a full comprehension category. We could have taken a complementary choice, obtaining a discrete comprehension category. The only difference is in the definition of morphisms in the total category: in the latter case a morphism $(\Gamma\vdash \sigma:\type)\to (\Delta\vdash \tau:\type)$ is defined as a morphism $\vec{M}:\Gamma\to\Delta$ in $\ct{B}$ such that $\tau[\vec{M}/\vec{y}]$ is (definitionally) equal to $\sigma$.
		 
			\end{exa}

In the rest of the paper we will make extensive use of fibrations with fibred terminal objects so we recall here some basic facts about them. 
Consider a fibration $p : \ct{E}\to\ct{B}$. 
A fibred terminal object is 
an object $A$ in a fibre $\fibre{\ct{E}}{Y}$ such that, for any morphism $f:X\to Y$ of $\ct{B}$ and object $B\in \fibre{\ct{E}}{Y}$, there exist a unique arrow $!:B\to A$ over $f$.

It is easy to see that this definition is equivalent to requiring that every fibre has terminal object and, for every $f:X\to Y$ in $\ct{B}$, the reindexing functor $\reind{f}:\fibre{\ct{E}}{Y}\to\fibre{\ct{E}}{X}$ preserves it. 
Fibrations with fibred terminal objects admit a further equivalent characterization, which we will often use throughout the paper. 

\begin{rem}
  A fibration $p : \ct{E}\to\ct{B}$ has fibred terminal objects if and only if there is a functor 
  $\term{p} : \ct{B}\to\ct{E}$ that is a right adjoint right inverse of $p$. 
  The functor $\term{p}$ picks for every object $X$ in $\ct{B}$ a fibred terminal object over it and transposition along the adjunction $p\dashv\term{p}$ provides us with the unique arrow $!_f : A \to \term{p} X$ over $f : pA\to X$, for every object $A$. 
Moreover, the functor $\term{p}$, being right adjoint and right inverse, is also full and faithful. 
\end{rem}
			
			\begin{exa}\label{ex:term}
				Consider the fibration $\cod:\Arr{\ct{C}}\to\ct{C}$ of \zcref{ex:cod}. A terminal object in $\fibre{\Arr{\ct{C}}}{X}$ is given by $\id{X}$, which is clearly stable under pullback. Hence, $\cod$ has fibred terminal objects.
			\end{exa}

Fibrations can be organized into a 2-category $\Fib$. 
Given fibrations $p:\ct{E}\to\ct{B}$ and $q:\ct{E'}\to\ct{B'}$, a 1-cell $F:p\to q$, also called fibration morphism, consists of a pair of functors $(\ft{F}, \fb{F})$ such that the square
				\[\begin{tikzcd}
					\ct{E} \ar[r, "\ft{F}"] 
					\ar[d,"p"'] & 
					\ct{E'} \ar[d,"q"] \\ 
					\ct{B} \ar[r,"\fb{F}"] &
					\ct{B'} 
				\end{tikzcd}\] 
commutes and $\ft{F}$ preserves cartesian arrows. 
Given fibration morphisms $F,G : p \to q$, 
 a 2-cell $\alpha:F\nt G$ is a pair of natural transformations $\alpha=(\ft{\alpha}, \fb{\alpha})$ where $\ft{\alpha}:\ft{F}\nt \ft{G}$ and $\fb{\alpha}:\fb{F}\nt\fb{G}$ and, moreover, $q\ft\alpha = \fb\alpha p$, i.e. the following diagram of 1- and 2-cells commutes 
				\[\begin{tikzcd}
					{\ct{E}} & {\ct{E'}} \\
					{\ct{B}} & {\ct{B'}}
					\arrow[""{name=0, anchor=center, inner sep=0}, "{{\ft{G}}}"{description}, bend right, from=1-1, to=1-2]
					\arrow[""{name=1, anchor=center, inner sep=0}, "{{\ft{F}}}"{description}, bend left, from=1-1, to=1-2]
					\arrow["p"', from=1-1, to=2-1]
					\arrow["q", from=1-2, to=2-2]
					\arrow[""{name=2, anchor=center, inner sep=0}, "{{\fb{F}}}"{description}, bend left, from=2-1, to=2-2]
					\arrow[""{name=3, anchor=center, inner sep=0}, "{{\fb{G}}}"{description}, bend right, from=2-1, to=2-2]
					\arrow["{\ft{\alpha}}", shorten <=3pt, shorten >=3pt, Rightarrow, from=1, to=0]
					\arrow["{\fb{\alpha}}", shorten <=3pt, shorten >=3pt, Rightarrow, from=2, to=3]
				\end{tikzcd}\]
Note that $\Fib$ is a 2-full 2-subcategory of $\Cattwo^2$. 
Given a category $\ct{B}$ we will denote by $\Fib(\ct{B})$ the 2-subcategory of $\Fib$ 
where objects are fibrations with base $\ct{B}$, 
1-cells are those $F$ such that $\fb{F} = \Id{\ct{B}}$, and 
2-cells are thos $\alpha$ such that $\fb{\alpha}_X = \id{X}$.

\section{Comprehension structures}\label{sect:jcomp}
	
In this section, after recalling the definition of (Jacobs) comprehension categories (JCC) and Lawvere-Ehrhard comprehension categories (LECC), 
we will provide a detailed comparison between them, highlighting the key features that distinguish the latter from the former ones. 
More precisely, if JCCs can be regarded as dependent type theories with type morphisms, 
we will show that LECCs are dependent type theories with type morphisms and an inhabited unit type and such that terms are completely determined by type morphisms from the unit type. 
We will also show that, when the considered fibrations are faithful, that is, type morphisms are actually a preorder between types over the same context, 
LECCs become proof irrelevant, i.e., every type has at most one inhabitant, while this does not happen for JCCs. 

Let us start by recalling the definitions of comprehension categories along with some examples. 

			\begin{defi}[{\cite[Def.~4.1]{JACOBS1993169}}]\label{def:compcat}
				A \define{Jacobs comprehension category} (JCC) is a fibration $p:\ct{E}\to\ct{B}$ together with a functor $\jcomp{p}:\ct{E}\to\Arr{\ct{B}}$ such that $\cod\arcmp \jcomp{p} = p$ and that $\jcomp{p}$ preserves cartesian arrows, i.e. $f$ cartesian in $\ct{E}$ implies $\jcomp{p} f$ is a pullback in $\ct{B}$. The functor $\jcomp{p}$ is called \define{comprehension functor}.
				\[\begin{tikzcd}
					{\ct{E}} & {\Arr{\ct{B}}} \\
					{\ct{B}}
					\arrow["{\jcomp{p}}", from=1-1, to=1-2]
					\arrow["p"', from=1-1, to=2-1]
					\arrow["\cod", from=1-2, to=2-1]
				\end{tikzcd}\]
				A comprehension category is \define{full} if its comprehension functor $\jcomp{}$ is full and faithful.
			\end{defi}

Given a JCC $p : \ct{E} \to \ct{B}$, 
we will denote by $\comp{p}$ the composition $\dom\arcmp\jcomp{p}$. For simplicity we will omit the fibration index whenever it will be clear from the context. 
			
			\begin{exa}\label{ex:cod-gen}
				The fibration $\cod$ of \zcref{ex:cod}, together with the identity $\Id{\Arr{\ct{B}}}$, is trivially a full comprehension category. A generalization to this is given by taking a family of arrows closed under pullback and considering the full subcategory of $\Arr{\ct{B}}$ on these morphisms. A particular case of this is obtained when the family of arrows consists of all monos. In this case the subcategory corresponds to the category of subobjects of $\ct{B}$.
			\end{exa}
			
			\begin{exa}\label{ex:typeth2}
				Consider the syntactic fibration defined in \zcref{ex:typeth}. The functor $\jcomp{}:\ct{E}\to\Arr{\ct{B}}$ is described by $\jcomp{}(\Gamma\vdash \sigma :\type)\colon =(\Gamma, x:\sigma)\to \Gamma$, so it sends a judgement of type in a context to the projection from the extended context to the old one. Explicitly, it is the list of variables of $\Gamma$, that is a list of terms in the extended context. With this definition we have a full comprehension category.

				We can also consider the discrete version, which is a comprehension category as well: cartesian morphisms correspond to those in the full case, hence they are still mapped to pullbacks.
			\end{exa}
			
The last example shows how every dependent type theory gives rise to a comprehension category. 
As already mentioned in the introduction, 
the connection is even tighter as one can build a type theory from any comprehension category, provided that one considers type morphisms as well. 
However, while contexts, substitutions, types and type morphisms can be easily extracted from a comprehension category, 
terms are less evident. 
Hence, it is useful to recall an equivalent presentation of comprehension categories, namely, generalized categories with families \cite[Def. 3.15]{emmenegger}. This has the advantage of being closer to the syntax by explicitly making terms as part of the structure. 

			\begin{defi}
				A \define{generalized category with families} (gcwf) consists of the data in the following diagram:
				\[\begin{tikzcd}
					{\tcat{\ct{E}}} && {\ct{E}} \\
					& {\ct{B}}
					\arrow[""{name=0, anchor=center, inner sep=0}, "\tsigma"', bend right=3ex, from=1-1, to=1-3]
					\arrow["{\tfib{p}}"', from=1-1, to=2-2]
					\arrow[""{name=1, anchor=center, inner sep=0}, "\Delta"', bend right=3ex, from=1-3, to=1-1]
					\arrow["p", from=1-3, to=2-2]
					\arrow["\dashv"{anchor=center, rotate=90}, draw=none, from=0, to=1]
				\end{tikzcd}\]
				
				where $p$, $\tfib{p}$ are fibrations and $\tsigma$ is a morphism of fibrations. The adjunction is such that both the components of unit and counit are cartesian with respect to $\tfib{p}$ and $p$ respectively. Note that $\Delta$ is not required to make the triangle involving $p$, $\tfib{p}$ to commute. If we ask that $p$, $\tfib{p}$ are both discrete, we get the definition of a \define{category with families} (cwf).
			\end{defi}

Intuitively, 
$\ct{B}$ is the category of contexts and substitutions, 
$p$ is the fibration of types and 
$\tfib{p}$ is the fibration of terms. 
The functor $\Sigma$ maps each term to its type, while the functor $\Delta$ 
maps a type to the generic element (i.e., the variable) of that type in the extended context, thus modelling the following rule: 
\[\inferrule*{\Gamma\vdash A\ \type}{\Gamma, \var{A}:A\vdash \var{A}:A}\]
The fact that $\Sigma$ is a morphism of fibrations ensures that 
every term should live in the same context of its type, and that given a substitution $\sigma:\Gamma\to \Delta$ and a term $t$ of type $A$ in context $\Delta$ we have that $t[\sigma]$ is of type $A[\sigma]$.
On the other hand, $\Delta$ cannot be in general a morphism of fibrations as the generic element of type $A$ lives in a context different from the one of $A$. 
This intuitive reading of a gcwf is justified by the following example. 

\begin{exa} 
\label{ex:gcwfsyntax} 
				Consider the syntactic fibration $p$ of \zcref{ex:typeth2}. We get a gcwf in which $\tcat{\ct{E}}$ is the category whose objects are judgements of the form $\Gamma \vdash t:A$ (i.e. terms in context), and morphisms $\Gamma \vdash t:A \to \Delta \vdash s:B$ are pairs $(\sigma, r)$ with $\sigma:\Gamma\to\Delta$ and $\Gamma, x:A\vdash r:B$ such that $r[t/x]=s[\sigma]$. The fibration $\tfib{p}$ sends $\Gamma \vdash t:A$ to $\Gamma$. 
				
				The functor $\tsigma$ returns the type of a term, i.e. $\tsigma(\Gamma \vdash t:A)= \Gamma \vdash A:\type$. Finally, the functor $\Delta$ assigns to each type the variable of that type in the extended context, following a rule of the form 
				\[\inferrule{\Gamma\vdash A:\type}{\Gamma, \var{A}:A\vdash \var{A}:A}\]
			\end{exa}

Let us notice that, in the example above, terms of type $A$ in context $\Gamma$ correspond to sections of the substitution $\jcomp{}A : \comp{}A \to \Gamma$, that is, of the comprehension of $A$. 
Indeed, such a section is by definition 
a list of terms in context $\Gamma$ such that its postcomposition with $\jcomp{}A$ is the identity. 
Hence, this list must consist of all the variables from $\Gamma$, plus a term of type $A$ (again in context $\Gamma$). 

This correspondence is the core of the construction of the gcwf associated with a comprehension category \cite{emmenegger}. 
Given a JCC $p:\ct{E}\to\ct{B}$, we define $\tfib{p}:\tcat{\ct{E}}\to \ct{B}$ in the following way:
			an object over $X$ is a pair of $A\in\fibre{\ct{E}}{X}$ together with a section $t:X\to \comp{}A$ of $\jcomp{}A$. A morphism $(A,t)\to(B,s)$ is then a morphism $f:A\to B$ over $g:X\to Y$ such that $\comp{}f\arcmp t=s\arcmp g$. Cartesian liftings with respect to $\tfib{p}$ are easily constructed considering cartesian liftings with respect to $p$ and the fact that $\jcomp{}$ maps cartesian morphisms to pullbacks.
			
			The action of $\tsigma$ is obvious: it maps $(A,t)$ to $A$. Furthermore, its fibres are discrete: given $f:(A,t)\to (A,s)$ over $\id{A}$, we have that $t=\comp{}\id{A}\arcmp t= s\arcmp \id{X}=s$. The action of $\Delta$ is less transparent: given $A$ over $X$, one can consider its reindex along $\jcomp{}A$, namely $\carar{(\jcomp{}A)}{A}:\reind{A}\to A$. Its image under $\jcomp{}$ is a pullback since it is cartesian, so one can define $\gel{A}:\comp{}A\to \comp{}\reind{A}$ by its universal property, as depicted below:
			\[\begin{tikzcd}
				{\comp{}A} \\
				& {\comp{}\reind{A}} && {\comp{}A} \\
				& {\comp{}A} && X
				\arrow["{{\gel{A}}}", dashed, from=1-1, to=2-2]
				\arrow["{{\id{\comp{}A}}}", bend left=3ex, from=1-1, to=2-4]
				\arrow["{{\id{\comp{}A}}}"', bend right, from=1-1, to=3-2]
				\arrow["{{\jcomp{}\reind{A}}}", from=2-2, to=2-4]
				\arrow["{{\comp{}\carar{(\jcomp{}A)}{A}}}", from=2-2, to=3-2]
				\arrow["{{\jcomp{}A}}", from=2-4, to=3-4]
				\arrow["{{\jcomp{}A}}"', from=3-2, to=3-4]
			\end{tikzcd}\]
			Since $\gel{A}$ is a section of $\jcomp{}\reind{A}$ by construction, we can define $\Delta(A)$ as the pair $(\reind{A}, \gel{A})$.
		
This construction formally justifies the type-theoretic reading of comprehension categories. 
Indeed, we will call \define{types} the objects of the total category, and \define{terms of type $A$} the sections of the comprehension $\jcomp{}A$. Then we will call \define{type morphisms} vertical arrows in $\ct{E}$, and a type morphism is said \define{global} if its domain is terminal in the fibre.
			Moreover one can consider a cartesian lifting $\weakar{A}{B}:\weak{A}{B}\to B$ of $\jcomp{}A$ at $B$. We call $\weak{A}{B}$ the \define{weakening of B along A}. Finally we call \define{generic element of type $A$} the unique arrow $\gel{A}$ given by the universal property of the pullback, as above.

We now recall the definition of Lawvere-Ehrhard comprehension categories.

			\begin{defi}[{\cite[Def.~5]{ehrhard1988}}]
				A \define{Lawvere-Ehrhard comprehension category} (LECC) is a fibration $p:\ct{E}\rightarrow\ct{B}$ together with two functors $\term{p}:\ct{B}\rightarrow\ct{E}$ and $\comp{p}:\ct{E}\rightarrow\ct{B}$ such that $\term{p}$ is a fibred terminal object functor (or equivalently $\term{p}$ is right adjoint and right inverse to $p$) and $\comp{p}$ is right adjoint to $\term{p}$.
				\[\begin{tikzcd}
					{\ct{E}} \\
					\\
					{\ct{B}}
					\arrow[""{name=0, anchor=center, inner sep=0}, "{\comp{p}}", bend left=10ex, from=1-1, to=3-1]
					\arrow[""{name=1, anchor=center, inner sep=0}, "p"', bend right=10ex, from=1-1, to=3-1]
					\arrow[""{name=2, anchor=center, inner sep=0}, "{\term{p}}"{description}, from=3-1, to=1-1]
					\arrow["\dashv"{anchor=center}, draw=none, from=1, to=2, pos=0.4]
					\arrow["\dashv"{anchor=center}, draw=none, from=2, to=0, pos=0.55]
				\end{tikzcd}\]
			\end{defi}
			For simplicity we will omit the fibration index whenever it will be clear from the context.
			
			Notice that LECCs are determined by adjointness, hence there is at most one (up to iso) Lawvere-Ehrhard comprehension structure over a given fibration. This allows us to say that a fibration (or a JCC) \emph{is} a LECC whenever it satisfies the appropriate condition.
						
			\begin{exa}
				Consider the fibration $\cod$ of \zcref{ex:cod}. It is a LECC: the fibred terminal object functor sends an object in the base to the identity, as described in \zcref{ex:term}, while the comprehension functor is $\dom$. It is easy to see that the adjunctions $\cod\dashv \term{p}\dashv \dom$ hold.
			\end{exa}
			
			\begin{exa}
				Consider the fibration $\mathcal{P}:\ct{S}\to\ctset$. Objects in $\ct{S}$ are pairs of a set $X$ and a subset $S\subseteq X$, and an arrow $(X,S)\to(Y,T)$ is a function $f:X\to Y$ such that $S\subseteq f^{-1}(T)$. This fibration is a LECC: the comprehension functor $\comp{\mathcal{P}}$ is obtained by taking the second component of the pair, and on an arrow $f:(X,S)\to(Y,T)$ it gives the restriction and corestriction of $f$ to, respectively, $S$ and $T$. 
Notice that this fibration is obtained by applying the Grothendieck construction \cite{jacobs1999categorical} to the powerset functor $\mathcal{P}:\ctset\op\to\Cat$ mapping a set to its powerset ordered by inclusion and regarded as a category. This implies the faithfulness of the fibration.
			\end{exa}

			\subsection{Comparing Jacobs and Lawvere-Ehrhard comprehension categories}
			\label{sect:lcomp-vs-jcomp}

In this section we show that Lawvere-Ehrhard comprehension categories are indeed a special type of (Jacobs) comprehension categories by describing a 2-functor from the 2-category of LECCs to the one of JCCs (\zcref{thm:l-to-j}). 
Moreover, we provide a characterization of the essential image of such a 2-functor (\zcref{thm:ess-im}), 
thus highlighting the key features that  distinguish Lawvere-Ehrhard from Jacobs comprehension categories. 
Finally, in \zcref{ex:ourfamsetp} we show that the JCC structure over a fibration is not unique. 
This points out an important difference between LECCs and JCCs: 
the former, being defined by adjunctions, is a property of the fibration, while the latter consists of structure on top of it. 

We start by introducing the 2-categories of Jacobs comprehension categories and Lawvere-Ehrhard comprehension categories. 

For Jacobs comprehension categories, 1-cells are fibration morphisms preserving comprehensions up to isomorphism 
and 2-cells are those of fibrations which are compatible with the specified isomorphisms of 1-cells. 

More precisely,  given two JCCs $p:\ct{E}\to\ct{B}$ and $q:\ct{E'}\to \ct{B'}$, a \define{morphism of Jacobs comprehension categories} from $p$ to $q$ consists of a fibration morphism $F:p\to q$ together with a natural isomorphism $\alpha:(\jcomp{q}\arcmp \ft{F})\nt (\Arr{(\fb{F})}\arcmp \jcomp{p})$ such that $\lwhisk{\cod}\alpha=\idtwo{\fb{F}\arcmp p}$. Here $\idtwo{\fb{F}\arcmp p}$ denotes the identity natural transformation on $\fb{F}\arcmp p$.

			Given two morphisms of JCCs $F:p\to q$ and $G:q\to s$ together with $\alpha$ and $\beta$ respectively, their composition is given by $G\arcmp F$ together with $\beta * \alpha\colon=(\Arr{(\JFun{G})}\lwhisk{\alpha})\arcmp (\beta\rwhisk{\ft{F}})$.
				\[\begin{tikzcd}
				& {\Arr{\ct{B}}} && {\Arr{\ct{B'}}} && {\Arr{\ct{B''}}} \\
				{\ct{E}} && {\ct{E'}} && {\ct{E''}} \\
				{\ct{B}} && {\ct{B'}} && {\ct{B''}}
				\arrow[""{name=0, anchor=center, inner sep=0}, "{{\Arr{(\fb{F})}}}", from=1-2, to=1-4]
				\arrow["\cod"{pos=0.7}, bend left, from=1-2, to=3-1]
				\arrow[""{name=1, anchor=center, inner sep=0}, "{{\Arr{(\fb{G})}}}", from=1-4, to=1-6]
				\arrow["\cod"{pos=0.7}, bend left, from=1-4, to=3-3]
				\arrow["\cod"{pos=0.7}, bend left, from=1-6, to=3-5]
				\arrow["{{\jcomp{p}}}"', from=2-1, to=1-2]
				\arrow["{{\ft{F}}}", from=2-1, to=2-3, crossing over, pos=0.65]
				\arrow["p"', from=2-1, to=3-1]
				\arrow[""{name=2, anchor=center, inner sep=0}, "{{\jcomp{q}}}"', from=2-3, to=1-4]
				\arrow["{{\ft{G}}}", from=2-3, to=2-5, crossing over, pos=0.65]
				\arrow["q", from=2-3, to=3-3]
				\arrow[""{name=3, anchor=center, inner sep=0}, "{{\jcomp{s}}}"', from=2-5, to=1-6]
				\arrow["s", from=2-5, to=3-5]
				\arrow["{{\fb{F}}}"', from=3-1, to=3-3]
				\arrow["{{\fb{G}}}"', from=3-3, to=3-5]
				\arrow["\alpha", bend left, shorten <=6pt, shorten >=6pt, Rightarrow, from=2, to=0]
				\arrow["\beta", bend left, shorten <=6pt, shorten >=6pt, Rightarrow, from=3, to=1]
			\end{tikzcd}\]
			
Let $F,G:p\to q$ together with $\alpha,\beta$ be morphisms of JCCs. A 2-cell of Jacobs comprehension categories $(F,\alpha)\nt(G,\beta)$ is a 2-cell $\phi:F\nt G$ in $\Fib$ such that $\lwhisk{\jcomp{q}}\ft{\phi}\arcmp\alpha^{-1}=\beta^{-1}\arcmp\Arr{(\fb{\phi})}\rwhisk{\jcomp{p}}$.
			\[\begin{tikzcd}[row sep=5ex,column sep=7ex]
				{\ct{E}} & {\ct{E'}} & {\Arr{\ct{B}}} & {\Arr{\ct{B'}}} & {\Arr{\ct{B}}} & {\Arr{\ct{B'}}} \\
				{\ct{B}} & {\ct{B'}} & {\ct{E}} & {\ct{E'}} & {\ct{E}} & {\ct{E'}}
				\arrow[""{name=0, anchor=center, inner sep=0}, "{\ft{G}}"{description}, bend right, from=1-1, to=1-2]
				\arrow[""{name=1, anchor=center, inner sep=0}, "{\ft{F}}"{description}, bend left, from=1-1, to=1-2]
				\arrow["p"', from=1-1, to=2-1]
				\arrow["q", from=1-2, to=2-2]
				\arrow[""{name=2, anchor=center, inner sep=0}, "{\Arr{(\fb{F})}}"{description}, bend left, from=1-3, to=1-4]
				\arrow[""{name=3, anchor=center, inner sep=0}, "{\Arr{(\fb{G})}}"{description}, bend right, from=1-5, to=1-6]
				\arrow[""{name=4, anchor=center, inner sep=0}, "{\Arr{(\fb{F})}}"{description}, bend left, from=1-5, to=1-6]
				\arrow[""{name=5, anchor=center, inner sep=0}, "{\fb{F}}"{description}, bend left, from=2-1, to=2-2]
				\arrow[""{name=6, anchor=center, inner sep=0}, "{\fb{G}}"{description}, bend right, from=2-1, to=2-2]
				\arrow["{\jcomp{p}}", from=2-3, to=1-3]
				\arrow[""{name=7, anchor=center, inner sep=0}, "{\ft{G}}"{description}, bend right, from=2-3, to=2-4]
				\arrow[""{name=8, anchor=center, inner sep=0}, "{\ft{F}}"{description}, bend left, from=2-3, to=2-4]
				\arrow["{\jcomp{q}}"', from=2-4, to=1-4]
				\arrow["{\jcomp{p}}", from=2-5, to=1-5]
				\arrow[""{name=9, anchor=center, inner sep=0}, "{\ft{G}}"{description}, bend right, from=2-5, to=2-6]
				\arrow["{\jcomp{q}}"', from=2-6, to=1-6]
				\arrow["\fb{\phi}", shorten <=3pt, shorten >=3pt, Rightarrow, from=1, to=0]
				\arrow["{\alpha^{-1}}", shorten <=4pt, shorten >=4pt, Rightarrow, from=2, to=8]
				\arrow["{\Arr{(\fb{\phi})}}", shorten <=3pt, shorten >=3pt, Rightarrow, from=4, to=3]
				\arrow["{\beta^{-1}}", shorten <=4pt, shorten >=4pt, Rightarrow, from=3, to=9]
				\arrow["\ft{\phi}", shorten <=3pt, shorten >=3pt, Rightarrow, from=5, to=6]
				\arrow["\fb{\phi}", shorten <=3pt, shorten >=3pt, Rightarrow, from=8, to=7]
			\end{tikzcd}\]	
			We denote by $\JComp$ the 2-category of JCCs with 1-cells and 2-cells as described above. 
			
For Lawvere-Ehrhard comprehension categories  definitions are similar but with some simplifications due to the presence of adjunctions. 
Roughly, 1-cells are fibration morphisms preserving fibred terminal objects and comprehensions up to iso and 2-cells are the same as those of fibrations. 
More precisely, 
given Lawvere-Ehrhard comprehension categories  $p:\ct{E}\rightarrow\ct{B}$ and $q:\ct{E'}\rightarrow\ct{B'}$, 
a \define{morphism of Lawvere-Ehrhard comprehension categories} (LE-morphism for short) from $p$ to $q$ is a fibration morphism $F:p\to q$ such that the natural isomorphism $\theta:\ft{F}\arcmp \term{p}\nt \term{q}\arcmp \fb{F}$ determined as the mate of the identity natural transformation $\idtwo{\fb{F}\arcmp p}$ is invertible and its inverse's mate $\mate{(\theta^{-1})}:\fb{F}\arcmp \comp{p}\nt \comp{q}\arcmp \ft{F}$ is again invertible.
Note that, differently from Jacobs comprehension categories, the natural isomorphisms witnessing the preservation of fibred terminal objects and comprehensions are uniquely determined by the rest of the data. 
Observe also that, 
given LE-morphisms $F:p\to q$ and $G:q\to s$, their composition is again a LE-morphism. 
Indeed, 
given the natural isomorphisms $\theta:\ft{F}\arcmp \term{p}\nt \term{q}\arcmp \fb{F}$ and $\sigma:\ft{G}\arcmp \term{q}\nt \term{s}\arcmp \fb{G}$ associated with the two LE-morphisms, 
the natural morphism  $(\sigma\rwhisk{\fb{F}})\arcmp (\lwhisk{\ft{G}}\theta)$ is invertible. Furthermore, the mate of its inverse is invertible as well, showing that $G\arcmp F$ is a LE-morphism.
				
			\[\begin{tikzcd}
				{\ct{E}} & {\ct{E'}} & {\ct{E''}} \\
				{\ct{B}} & {\ct{B'}} & {\ct{B''}}
				\arrow[""{name=0, anchor=center, inner sep=0}, "{\ft{F}}", from=1-1, to=1-2]
				\arrow[""{name=1, anchor=center, inner sep=0}, "{\ft{G}}", from=1-2, to=1-3]
				\arrow["{\term{p}}", from=2-1, to=1-1]
				\arrow["{\fb{F}}"', from=2-1, to=2-2]
				\arrow[""{name=2, anchor=center, inner sep=0}, "{\term{q}}"', from=2-2, to=1-2]
				\arrow["{\fb{G}}"', from=2-2, to=2-3]
				\arrow[""{name=3, anchor=center, inner sep=0}, "{\term{s}}"', from=2-3, to=1-3]
				\arrow["\theta"', bend right, shorten <=4pt, shorten >=4pt, Rightarrow, from=0, to=2]
				\arrow["\sigma"', bend right, shorten <=4pt, shorten >=4pt, Rightarrow, from=1, to=3]
			\end{tikzcd}\]
			Finally, the 2-category $\LComp$ is the 2-full 2-subcategory of $\Fib$ spanned by Lawvere-Ehrhard comprehension categories and LE-morphisms between them. 
			
Jacobs in \cite[Def.~4.12]{JACOBS1993169} proves that a LECC gives rise to a JCC by setting $\jcomp{}A=p\epsilon_A$, with $\epsilon$ the counit of the adjunction $\term{}\dashv\comp{}$. 
The following theorem extends Jacobs' result by showing that this assignment gives rise to a 2-functor. 

			\begin{thm} \label{thm:l-to-j}
				Let $p:\ct{E}\rightarrow \ct{B}$ be a LECC. Then $(p,\jcomp{p})$ is a JCC, where $\jcomp{p}:\ct{E}\rightarrow\Arr{\ct{B}}$ is defined by $\jcomp{p}(A)\colon= p\coun{p}_A$, with $\coun{p}$ the counit of the comprehension-terminal adjunction. Furthermore this assignment extends to a 2-functor $\LtoJ:\LComp\to\JComp$.
			\end{thm}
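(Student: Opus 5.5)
The proof splits into two parts: first that $(p,\jcomp{p})$ is a JCC, then functoriality. Write $\term{}\dashv\comp{}$ with counit $\coun{}_A:\term{}\comp{}A\to A$. Set $\jcomp{}(A)=p\coun{}_A:\comp{}A\to pA$, using $p\term{}=\Id{}$. On an arrow $f:A\to B$, take $\jcomp{}(f)$ to be the square with top $\comp{}f$ and bottom $pf$. This square commutes by applying $p$ to naturality of $\coun{}$, namely $f\coun{}_A=\coun{}_B\,\term{}\comp{}f$. Functoriality and $\cod\arcmp\jcomp{}=p$ are then immediate.

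The main step is to show that cartesian $f:A\to B$ over $u:X\to Y$ is sent to a pullback. Take $g:Z\to\comp{}B$ and $v:Z\to X$ with $p\coun{}_B\,g=u v$. By the adjunction, $g$ corresponds to $\hat g=\coun{}_B\,\term{}g:\term{}Z\to B$, which lies over $p\coun{}_B\,g=uv$. Cartesianness of $f$ gives a unique $h:\term{}Z\to A$ over $v$ with $fh=\hat g$. Transposing back yields $\bar h:Z\to\comp{}A$ with $\coun{}_A\term{}\bar h=h$. Then $p\coun{}_A\bar h=ph=v$, and $\comp{}f\,\bar h$ transposes to $f h=\hat g$, so $\comp{}f\,\bar h=g$.

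For uniqueness, suppose $k$ is another mediating arrow. Its transpose $\coun{}_A\term{}k$ lies over $p\coun{}_A k=v$, using $p\term{}=\Id{}$. It also satisfies $f\coun{}_A\term{}k=\coun{}_B\term{}\comp{}f\,\term{}k$, which is the transpose of $\comp{}f\,k=g$. Uniqueness in the cartesian property then gives $\coun{}_A\term{}k = h$, hence $k=\bar h$. This bijection bookkeeping is, in my view, the only delicate point. It relies crucially on $p\term{}=\Id{}$, so that transposes over $Z$ lie over the expected base arrows.

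For the 2-functor, let $F$ be an LE-morphism with $\theta$ and $\psi=\mate{(\theta^{-1})}:\fb{F}\comp{p}\nt\comp{q}\ft{F}$ invertible. Define $\LtoJ(F)=(F,\alpha)$ with $\alpha_A$ the square whose top is $\psi_A^{-1}:\comp{q}\ft{F}A\to\fb{F}\comp{p}A$ and whose bottom is the identity on $\fb{F}pA=q\ft{F}A$. This is a square from $\jcomp{q}\ft{F}A$ to $\fb{F}\jcomp{p}A$, hence a natural iso with $\cod\alpha=\idtwo{}$. Commutativity, $\fb{F}(p\coun{p}_A)\,\psi^{-1}_A=q\coun{q}_{\ft{F}A}$, follows from the standard mate identity relating $\psi$ with the counits, $\coun{q}_{\ft{F}A}\,\term{q}\psi_A=\ft{F}\coun{p}_A\,\theta^{-1}_{\comp{p}A}$, after applying $q$ and using that $q\theta$ is the identity ($\theta$ is a mate of an identity, so it is vertical). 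On 2-cells, $\LtoJ$ is the identity. The required compatibility $\jcomp{q}\ft{\phi}\arcmp\alpha^{-1}=\beta^{-1}\arcmp\Arr{(\fb{\phi})}\jcomp{p}$ has trivial bottom component, since both sides give $\fb{\phi}_{pA}$. Its top component reduces to naturality of mates with respect to the 2-cell $\phi$: from $\ft{\phi}\term{}= \term{}\fb{\phi}$ up to $\theta$, one gets $\comp{q}\ft{\phi}\,\psi^F=\psi^G\,\fb{\phi}\comp{p}$. Preservation of identities and composition follows from functoriality of mates under pasting, since the composite's $\theta$ is $(\sigma\fb{F})(\ft{G}\theta)$ as noted above, which makes $\alpha$ of the composite equal $\beta*\alpha$.
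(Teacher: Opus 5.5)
Your proposal is correct and follows the same route as the paper. You transpose along $\term{}\dashv\comp{}$, use cartesianness of $f$, and transpose back to get the pullback property; you then build $\alpha$ from the mate $\mate{(\theta^{-1})}$, using that $q\theta$ is an identity, and send 2-cells to themselves via naturality of mates. You are in fact slightly more thorough than the paper, which leaves implicit both the uniqueness of the mediating arrow and the preservation of identities and composition.
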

			
			\begin{proof}
				Of course $\cod\arcmp\jcomp{p}=p$. So we only need to verify that if $f:A\to B$ is cartesian, then $\jcomp{}f$ is a pullback in $\ct{B}$. Consider a pair of arrows $g:Z\to\comp{}B$ and $h:Z\to X$, where $X=pA$, such that $\jcomp{}B\arcmp g=pf\arcmp h$. The transpose $\coun{}_B\arcmp\term{}g:\term{}Z\to B$ of $g$ is over $\jcomp{}B\arcmp g=pf\arcmp h$, so by cartesianity of $f$ there is a unique $s:\term{}Z\to A$ over $h$ such that $f\arcmp s=\coun{}_B\arcmp \term{}g$. This yields a unique arrow $s^{\#}:Z\to \comp{}A$ by taking the transpose of $s$. Furthermore we have the following
				\[\comp{}f\arcmp s^{\#}=\comp{}(f\arcmp s)\arcmp \un{}_Z=\comp{}(\coun{}_B\arcmp \term{}g)\arcmp\un{}_Z=g\] 
				\[\jcomp{}A\arcmp s^{\#}=p(\coun{}_A\arcmp\term{}\comp{}s\arcmp\term{}\un{}_Z)=p(s\arcmp\coun{}_{\term{}Z}\arcmp\term{}\un{}_Z)=p(s)=h\]
				For the first equation we used the characterization of transposes via unit and counit, and for the second we used also the naturality of the counit on $s$.
				\[\begin{tikzcd}
					{\term{}\comp{}\term{}Z} & {\term{}Z} & {\term{}\comp{}B} \\
					& {\term{}\comp{}A} & A & B \\
					& Z \\
					&& {\comp{}A} & X \\
					&& {\comp{}B} & Y
					\arrow["{\coun{}_{\term{}Z}}", from=1-1, to=1-2]
					\arrow["{\term{}\comp{}s}"', from=1-1, to=2-2]
					\arrow["{\term{}g}", from=1-2, to=1-3]
					\arrow["s"', dashed, from=1-2, to=2-3]
					\arrow["{\coun{}_B}", from=1-3, to=2-4]
					\arrow["{\coun{}_A}"', from=2-2, to=2-3]
					\arrow["f"', from=2-3, to=2-4]
					\arrow["{s^{\#}}"{description}, dashed, from=3-2, to=4-3]
					\arrow["h", bend left, from=3-2, to=4-4]
					\arrow["g"', bend right, from=3-2, to=5-3]
					\arrow["{\jcomp{}A}", from=4-3, to=4-4]
					\arrow["{\comp{}f}"', from=4-3, to=5-3]
					\arrow["pf", from=4-4, to=5-4]
					\arrow["{\jcomp{}B}"', from=5-3, to=5-4]
				\end{tikzcd}\]
				
				Now, let $F:p\to q$ together with $\theta:\ft{F}\arcmp \term{p}\nt \term{q}\arcmp \fb{F}$ be a 1-cell in $\LComp$. In order to show that $F$ is a morphism in $\JComp$ as well we only need to define a natural isomorphism $\alpha:(\Arr{(\fb{F})}\arcmp \jcomp{p})\nt (\jcomp{q}\arcmp \ft{F})$ such that $\lwhisk{\cod}\alpha=\idtwo{\fb{F}p}$. Let $A$ be an object in $\ct{E}$ over $X$ and consider the square 
				\[\begin{tikzcd}
					{\term{q}\fb{F}\comp{p}A} & {\ft{F}\term{p}\comp{p}A} \\
					{\term{q}\comp{q}\ft{F}A} & {\ft{F}A}
					\arrow["{\theta^{-1}_{\comp{p}A}}", from=1-1, to=1-2]
					\arrow["{\term{q}\mate{(\theta^{-1})}_A}"', from=1-1, to=2-1]
					\arrow["{\ft{F}\coun{p}_A}", from=1-2, to=2-2]
					\arrow["{\coun{q}_{\ft{F}A}}"', from=2-1, to=2-2]
				\end{tikzcd}\]
				It commutes as a consequence of the definition of mate. Applying $q$ to this gives a commutative square in $\ct{B'}$ whose top side is the identity, since $\lwhisk{\cod}\theta=\idtwo{\Id{\ct{B'}}}$. Then we can set $\alpha_A\colon=(\mate{(\theta^{-1})}_A,\id{\fb{F}X})$. It is a natural iso since both its components are invertible.
				\[\begin{tikzcd}
					{\fb{F}\comp{p}A} & {\fb{F}\comp{p}A} \\
					{\comp{q}\ft{F}A} & {\fb{F}X}
					\arrow["{\id{\fb{F}\comp{p}A}}", from=1-1, to=1-2]
					\arrow["{\mate{(\theta^{-1})}_A}"', from=1-1, to=2-1]
					\arrow["{\fb{F}p\coun{p}_A}", from=1-2, to=2-2]
					\arrow["{q\coun{q}_{\ft{F}A}}"', from=2-1, to=2-2]
				\end{tikzcd}\]
				
				Finally its action on the 2-cells is given by the identity. 
				In fact, consider $\phi:F\to G$ a 2-cell in $\LComp$. Then $\sigma\arcmp\ft{\phi}\rwhisk{\term{p}}=\lwhisk{\term{q}}\fb{\phi}\arcmp \theta$ and $\mate{(\sigma^{-1})}\arcmp\fb{\phi}\rwhisk{\comp{p}}=\lwhisk{\comp{q}}\ft{\phi}\arcmp\mate{(\theta^{-1})}$, where $\theta$ and $\sigma$ are the isomorphisms depicted below. Diagrammatically, the following squares of natural transformations commute:
				\[\begin{tikzcd}
					{\ft{F}\term{p}} & {\term{q}\fb{F}} && {\comp{q}\ft{F}} & {\fb{F}\comp{p}} \\
					{\ft{G}\term{p}} & {\term{q}\fb{G}} && {\comp{q}\ft{G}} & {\fb{G}\comp{p}}
					\arrow["\theta", Rightarrow, from=1-1, to=1-2]
					\arrow["{\ft{\phi}\rwhisk{\term{p}}}"', Rightarrow, from=1-1, to=2-1]
					\arrow["{\lwhisk{\term{q}}\fb{\phi}}", Rightarrow, from=1-2, to=2-2]
					\arrow["{\lwhisk{\comp{q}}\ft{\phi}}"', Rightarrow, from=1-4, to=2-4]
					\arrow["{\mate{(\theta^{-1})}}"', Rightarrow, from=1-5, to=1-4]
					\arrow["{\fb{\phi}\rwhisk{\comp{p}}}", Rightarrow, from=1-5, to=2-5]
					\arrow["\sigma"', Rightarrow, from=2-1, to=2-2]
					\arrow["{\mate{(\sigma^{-1})}}", Rightarrow, from=2-5, to=2-4]
				\end{tikzcd}\]
				This implies that 2-cells of $\LComp$ preserve comprehension and terminal objects. It is not hard to see that they satisfy the coherence required.
			\end{proof}

Our next goal is to provide an ``intrinsic'' characterization of those comprehension categories which are actually Lawvere-Ehrhard, that is, 
to characterize the essential image of the 2-functor $\LtoJ$. 
We start by observing an interesting property of 
comprehension categories with fibred terminal objects:  
we can not only construct the fibration of terms $\tfib{p}$, but also a fibration of global type morphisms. 
Fix a comprehension category $p:\ct{E}\to\ct{B}$ with fibred terminal objects $\term{}$. 
The fibration $\gtfib{p}:\gtcat{\ct{E}}\to \ct{B}$ of global type morphisms is defined as follows. 
An object of $\gtcat{\ct{E}}$ over $X$ is a pair $(A,t)$ of an object $A$ over $X$ in $p$ and a vertical arrow $t:\term{}X\to A$, and 
a morphism from $(A,t)$ to $(B,s)$ over $g:X\to Y$ is a morphism $f:A\to B$ over $g$ such that $f\arcmp t=s\arcmp\term{}g$.
\[\begin{tikzcd}
	{\term{}X} & {\term{}Y} \\
	A & B
	\arrow["{\term{}g}", from=1-1, to=1-2]
	\arrow["t"', from=1-1, to=2-1]
	\arrow["s", from=1-2, to=2-2]
	\arrow["f"', from=2-1, to=2-2]
\end{tikzcd}\]
The fibration $\gtfib{p}:\gtcat{\ct{E}}\to \ct{B}$ obviously maps $(A,t)$ to $X$, and $f$ to $g$. Cartesian liftings are obtained straightforwardly using cartesian liftings of $p$.
As in the construction of the gcwf associated to a JCC, there is a morphism of fibrations $\gtsigma:\gtfib{p}\to p$ mapping $(A,t)$ to $A$ and morphisms to themselves. Also in this case the fibres of $\gtsigma$ are discrete: given $f:(A,t)\to(A,s)$ over $\id{A}$, we have that $t=\id{A}\arcmp t=s\arcmp\term{}\id{X}=s$.

We are going to prove in \zcref{thm:ess-im} that a comprehension category with fibred terminal objects is Lawvere-Ehrhard if and only if the associated fibrations of terms and global type morphisms are isomorphic. So, we start by showing that this condition is necessary.
To this end, it is useful to notice first another necessary condition for a comprehension category to be Lawvere-Ehrhard: the associated comprehension functor à la Jacobs must preserve fibred terminal objects, i.e., the comprehension of a fibred terminal must be an isomorphism. 
From a type-theoretic perspective, this means that in a LECC there is always a unique term of the unit type in any context. 

\begin{lem}\label{prop:comp-term-inv}
	Let $p:\ct{E}\to\ct{B}$ be a LECC, and $X$ an object in $\ct{B}$. Then the arrow $p\coun{}_{\term{}X}:\comp{}\term{}X\to X$ is invertible, with inverse given by $\un{}_X$. 
\end{lem}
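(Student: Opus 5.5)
We show that $\un{}_X : X\to\comp{}\term{}X$ is a two-sided inverse of $p\coun{}_{\term{}X}$. The tools are the triangle identity of $\term{}\dashv\comp{}$, the fact that $p\term{}=\Id{\ct{B}}$, and the universal property of fibred terminal objects. Throughout, $\un{}$ and $\coun{}$ are the unit and counit of $\term{}\dashv\comp{}$.

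First, $p\coun{}_{\term{}X}\arcmp\un{}_X=\id{X}$. The triangle identity for $\term{}\dashv\comp{}$ gives $\coun{}_{\term{}X}\arcmp\term{}\un{}_X=\id{\term{}X}$. Applying $p$ and using $p\term{}=\Id{\ct{B}}$, we get $p\coun{}_{\term{}X}\arcmp\un{}_X=\id{X}$.

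Second, $\un{}_X\arcmp p\coun{}_{\term{}X}=\id{\comp{}\term{}X}$. Since both sides are arrows into $\comp{}\term{}X$, it suffices to show they have the same transpose under $\term{}\dashv\comp{}$. The transpose of $\id{\comp{}\term{}X}$ is $\coun{}_{\term{}X}:\term{}\comp{}\term{}X\to\term{}X$. The transpose of $\un{}_X\arcmp p\coun{}_{\term{}X}$ is $\coun{}_{\term{}X}\arcmp\term{}(\un{}_X\arcmp p\coun{}_{\term{}X})$, which is again an arrow $\term{}\comp{}\term{}X\to\term{}X$.

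These two arrows have the same target $\term{}X$, which is fibred terminal. Hence they coincide as soon as they lie over the same base arrow, since there is a unique arrow into $\term{}X$ over any given base arrow. Over the base, the first lies over $p\coun{}_{\term{}X}$. The second lies over
\[p\coun{}_{\term{}X}\arcmp\un{}_X\arcmp p\coun{}_{\term{}X}=p\coun{}_{\term{}X},\]
where the equality uses the first step. So the two transposes are equal, and therefore $\un{}_X\arcmp p\coun{}_{\term{}X}=\id{\comp{}\term{}X}$.

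The one point needing care is in the second step. Any arrow into $\term{}X$ is determined solely by its image in $\ct{B}$, because $\term{}$ is right adjoint to $p$ with $p\term{}=\Id{\ct{B}}$. This is exactly the fibred terminal property stated in the Remark, and the rest is routine.
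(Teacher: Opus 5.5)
Your proof is correct, but it is more hands-on than the paper's. The paper argues abstractly. Since $\term{}$ is a right adjoint right inverse of $p$, it is fully faithful. The standard lemma that a left adjoint is fully faithful iff its unit is invertible (which the paper cites from Riehl) then makes the unit $\un{}$ of $\term{}\dashv\comp{}$ an isomorphism. Finally, the triangle identity $\coun{}_{\term{}X}\arcmp\term{}\un{}_X=\id{\term{}X}$, pushed down along $p$, identifies the inverse as $p\coun{}_{\term{}X}$.

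You use the same triangle identity for one composite. For the other composite, instead of quoting the lemma, you verify it directly: you compare transposes and use that arrows into $\term{}X$ are determined by their image under $p$. That uniqueness property is exactly the full faithfulness of $\term{}$, with $p$ inverting it on hom-sets. So your argument is essentially the proof of the cited lemma unfolded in this special case. Yours is self-contained and shows explicitly where fibred terminality enters; the paper's is shorter and more conceptual.

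Your second step can also be shortened. By the triangle identity, the transpose of $\un{}_X\arcmp p\coun{}_{\term{}X}$ is just $\term{}p\coun{}_{\term{}X}$. This equals $\coun{}_{\term{}X}$ because both are arrows into $\term{}X$ lying over $p\coun{}_{\term{}X}$. So the first step is not needed there.
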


\begin{proof}
	Since $\term{}$ is both a right adjoint and a section, it is full and faithful. This implies that $\eta$ is an iso (see \cite[Lemma ~4.5.13]{riehl2017category}). Triangular identities show that its inverse is given by $p\epsilon_\term{}$.
\end{proof}

Next, recall that in a Lawvere-Ehrhard comprehension category $p : \ct{E}\to\ct{B}$ 
transposition along the adjunction $\term{} \dashv \comp{}$ is a bijection between morphisms in $\ct{E}$ of the form $\term{}X\to A$ and morphisms in $\ct{B}$ of the form $X\to\comp{}A$. This bijection restricts to global type morphisms on one side and terms on the other.

			\begin{lem}\label{lem:prf-trm}
				Let $p:\ct{E}\to\ct{B}$ be a LECC. Then transposition restricts to a bijection $\Obj\tcat{\ct{E}}\cong\Obj\gtcat{\ct{E}}$.
			\end{lem}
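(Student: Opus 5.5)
Since objects of both $\tcat{\ct{E}}$ and $\gtcat{\ct{E}}$ are pairs whose first component is an object $A$ of $\ct{E}$, I will fix $A$ over $X$. The plan is to show that the transposition bijection
\[\ct{E}(\term{}X, A)\cong\ct{B}(X,\comp{}A),\]
given by the adjunction $\term{}\dashv\comp{}$, restricts to a bijection between vertical arrows $t:\term{}X\to A$ and sections $s$ of $\jcomp{}A=p\coun{}_A$. Here the Jacobs comprehension is the one from \zcref{thm:l-to-j}. Keeping $A$ as the first component then yields the required bijection on objects. Recall that transposition sends $s:X\to\comp{}A$ to $\coun{}_A\arcmp\term{}s$, and sends $t$ to $\comp{}t\arcmp\un{}_X$.

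The first direction is immediate. For $s$ a section, $p(\coun{}_A\arcmp\term{}s)=p\coun{}_A\arcmp p\term{}s=\jcomp{}A\arcmp s=\id{X}$, using $p\term{}=\Id{\ct{B}}$. Hence the transpose of a section is vertical.

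For the converse, let $t$ be vertical. I will compute
\[\jcomp{}A\arcmp\comp{}t\arcmp\un{}_X=p\coun{}_A\arcmp\comp{}t\arcmp\un{}_X .\]
Naturality of $\coun{}$ along $t$ gives $\coun{}_A\arcmp\term{}\comp{}t=t\arcmp\coun{}_{\term{}X}$. Applying $p$ yields $p\coun{}_A\arcmp\comp{}t=pt\arcmp p\coun{}_{\term{}X}=p\coun{}_{\term{}X}$, because $pt=\id{X}$. Therefore the composite above equals $p\coun{}_{\term{}X}\arcmp\un{}_X$, which is $\id{X}$ by \zcref{prop:comp-term-inv}. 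So the transpose of a vertical arrow is a section.

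Since transposition is already a bijection and it maps each subset into the other, the restrictions are mutually inverse, which gives the bijection $\Obj\tcat{\ct{E}}\cong\Obj\gtcat{\ct{E}}$. The only step that is not purely formal is the converse direction. There one must identify $p\coun{}_{\term{}X}\arcmp\un{}_X$ with the identity, and that is exactly what \zcref{prop:comp-term-inv} provides.
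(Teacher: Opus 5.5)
Your proof is correct and follows the paper's own argument. Sections transpose to vertical arrows because $p\term{}=\Id{\ct{B}}$. For the converse, you apply $p$ to the naturality square of $\coun{}$ at $t$ and then use \zcref{prop:comp-term-inv} to get $p\coun{}_{\term{}X}\arcmp\un{}_X=\id{X}$. You also state explicitly that the two restricted maps are mutually inverse because both are restrictions of the transposition bijection, which the paper leaves implicit.
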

			
			\begin{proof}
				Given $A$ in $\ct{E}$ over $X$ and a section $f:X\to \comp{}A$ of $\jcomp{}A:\comp{}A\to X$, its transpose is $f^{\#}=\coun{}_A\arcmp\term{}f:\term{}X\to A$. This is a global type morphism since it is vertical ($f$ is a section) and it is from a terminal object.
				\[\begin{tikzcd}
					{\term{}X} & {\term{}\comp{}A} & A \\
					\\
					X & {\comp{}A} & X
					\arrow["{\term{}f}"', from=1-1, to=1-2]
					\arrow["{f^{\#}}", bend left, from=1-1, to=1-3]
					\arrow["{\coun{}_A}"', from=1-2, to=1-3]
					\arrow["f"', from=3-1, to=3-2]
					\arrow["{\id{X}}", bend left, from=3-1, to=3-3]
					\arrow["{\jcomp{}A}"', from=3-2, to=3-3]
				\end{tikzcd}\]
				
				Conversely, if we transpose a global type morphism $g:\term{}Y\to B$, we get $g^{\#}=\comp{}g\arcmp\un{}_Y:Y\to\comp{}B$. Its postcomposition with $\jcomp{}B$ is the identity: consider the naturality square of the counit on $g$. Applying $p$ to it yields the equality $\jcomp{}B\arcmp\comp{}g=\jcomp{}{\term{}Y}$. We observed in \zcref{prop:comp-term-inv} that the unit $\un{}_Y$ is invertible, and that its inverse is given by $p\coun{}_{\term{}Y}=\jcomp{}{\term{}Y}$, thus proving that $g^{\#}$ is a section of $\jcomp{}B$.
				\[\begin{tikzcd}
					& {\term{}\comp{}\term{}Y} & {\term{}Y} \\
					& {\term{}\comp{}B} & B \\
					\\
					Y & {\comp{}\term{}Y} & Y \\
					& {\comp{}B} & Y
					\arrow["{\coun{}_{\term{}Y}}", from=1-2, to=1-3]
					\arrow["{\term{}\comp{}g}"', from=1-2, to=2-2]
					\arrow["g", from=1-3, to=2-3]
					\arrow["{\coun{}_B}"', from=2-2, to=2-3]
					\arrow["{\un{}_Y}"', from=4-1, to=4-2]
					\arrow["{\id{Y}}", bend left, from=4-1, to=4-3]
					\arrow["{\jcomp{}\term{}Y}"', from=4-2, to=4-3]
					\arrow["{\comp{}g}"', from=4-2, to=5-2]
					\arrow["{\id{Y}}", from=4-3, to=5-3]
					\arrow["{\jcomp{}B}"', from=5-2, to=5-3]
				\end{tikzcd}\]
			\end{proof}

Given a Lawvere-Ehrhard comprehension category $p : \ct{E}\to\ct{B}$, 
the previous lemma provides us 
with a bijective correspondence, given by transposition, between 
the objects of $\tcat{\ct{E}}$ and $\gtcat{\ct{E}}$, i.e. terms and global type morphisms. 
This bijection extends then to an isomorphism of fibrations.

			\begin{prop}\label{prop:prf-trm}
				Let $p:\ct{E}\to\ct{B}$ be a LECC. Then transposition induces an isomorphism $\ftransp:\gtsigma\to\tsigma$ in $\Fib(\ct{B})/p$.
			\end{prop}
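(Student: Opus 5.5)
We define $\ftransp:\gtcat{\ct{E}}\to\tcat{\ct{E}}$ on objects by the bijection of \zcref{lem:prf-trm}, $(A,t)\mapsto (A,t^{\#})$ with $t^{\#}=\comp{}t\arcmp\un{}_X$. On arrows we take the identity on underlying arrows: a morphism $f:(A,t)\to(B,s)$ over $g:X\to Y$ is sent to $f$ itself. Then $\tfib{p}\arcmp\ftransp=\gtfib{p}$ and $\tsigma\arcmp\ftransp=\gtsigma$ hold by construction, so $\ftransp$ is a 1-cell of $\Fib(\ct{B})/p$ provided it is well defined and preserves cartesian arrows. The inverse is defined symmetrically, using the other direction of \zcref{lem:prf-trm} on objects and the identity on arrows. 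The two functors are mutually inverse because transposition is a bijection and both act as the identity on arrows.

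The key step is to show that $f$ satisfies $f\arcmp t=s\arcmp\term{}g$ if and only if $\comp{}f\arcmp t^{\#}=s^{\#}\arcmp g$. This says that the two categories have literally the same arrows, as subsets of arrows of $\ct{E}$ over $g$.
\begin{itemize}
\item By naturality of transposition for the adjunction $\term{}\dashv\comp{}$, the transpose of $f\arcmp t:\term{}X\to B$ is $\comp{}f\arcmp t^{\#}$.
\item The transpose of $s\arcmp\term{}g$ is $s^{\#}\arcmp g$.
\end{itemize}
Both are the standard facts that $(k\arcmp u)^{\#}=\comp{}k\arcmp u^{\#}$ and $(u\arcmp\term{}h)^{\#}=u^{\#}\arcmp h$. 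Since transposition is a bijection on hom-sets $\ct{E}(\term{}X,B)\cong\ct{B}(X,\comp{}B)$, the two equations are equivalent. Functoriality is then inherited from $\ct{E}$.

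It remains to check that cartesian arrows are preserved in both directions. The cartesian arrows of $\gtfib{p}$ and $\tfib{p}$ are built from $p$-cartesian arrows. More intrinsically, an arrow is cartesian when its underlying arrow is $p$-cartesian, because $\gtsigma$ and $\tsigma$ have discrete fibres and lifts are unique. Since $\ftransp$ is the identity on underlying arrows, cartesianness transfers immediately.

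We expect the main obstacle to be exactly this characterization of cartesian arrows in $\tfib{p}$ and $\gtfib{p}$, which the paper only sketches. To handle it, we will argue directly. Let $f$ be $p$-cartesian over $g$, and let $h:(C,r)\to(B,s)$ lie over $g\arcmp v$. We factor $h=f\arcmp k$ in $\ct{E}$ and then check that $k$ respects the chosen sections or global elements. For the term side this uses that $\jcomp{}f$ is a pullback; for the global side it uses the uniqueness clause of cartesianness of $f$, applied to $f\arcmp k\arcmp r$ and $f\arcmp(\text{section})\arcmp\term{}v$. If this becomes delicate, we would instead note that an isomorphism of categories over $\ct{B}$ automatically preserves cartesian arrows, which avoids the issue entirely. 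Since $\ftransp$ is bijective on objects and arrows and commutes with the projections, this is the fallback we would use.
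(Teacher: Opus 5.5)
Your proposal is correct and follows essentially the same route as the paper. Both define $\ftransp$ as transposition on objects and as the identity on underlying arrows of $\ct{E}$, and both reduce the argument to the equivalence of $f\arcmp t=s\arcmp\term{}g$ with $\comp{}f\arcmp t^{\#}=s^{\#}\arcmp g$, using naturality of transposition for $\term{}\dashv\comp{}$. Your treatment of cartesian arrows fills in a point the paper leaves implicit: an isomorphism of categories commuting with the projections to $\ct{B}$ automatically preserves and reflects cartesian arrows.
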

			
			\begin{proof}
				A morphism $\ftransp{}:\gtsigma\to\tsigma$ in $\Fib(\ct{B})/p$ is just a functor $\ttransp:\gtcat{\ct{E}}\to\tcat{\ct{E}}$ which preserves cartesian arrows and makes the triangles below to commute.
				\[\begin{tikzcd}
					{\gtcat{\ct{E}}} &&&& {\tcat{\ct{E}}} \\
					&& {\ct{E}} \\
					\\
					&& {\ct{B}}
					\arrow["{\ttransp{}}", from=1-1, to=1-5]
					\arrow["\gtsigma"{description}, from=1-1, to=2-3]
					\arrow["{\gtfib{p}}"', from=1-1, to=4-3, bend right=3ex]
					\arrow["\tsigma"{description}, from=1-5, to=2-3]
					\arrow["{\tfib{p}}", from=1-5, to=4-3, bend left=3ex]
					\arrow["p"{description}, from=2-3, to=4-3]
				\end{tikzcd}\]
				
				This last condition is obvious by \zcref{lem:prf-trm} at the level of objects. For morphisms, it is sufficient to prove that given $f:A\to B$ in $\ct{E}$ we have that each one of the following two squares commutes if and only if the other does.
				\[\begin{tikzcd}
					X & {\comp{}A} && {\term{}X} & A \\
					Y & {\comp{}B} && {\term{}Y} & B
					\arrow["t", from=1-1, to=1-2]
					\arrow["g"', from=1-1, to=2-1]
					\arrow["{\comp{}f}", from=1-2, to=2-2]
					\arrow["{t^\#}", from=1-4, to=1-5]
					\arrow["{\term{}g}"', from=1-4, to=2-4]
					\arrow["f", from=1-5, to=2-5]
					\arrow["s"', from=2-1, to=2-2]
					\arrow["{s^\#}"', from=2-4, to=2-5]
				\end{tikzcd}\]
				This is true just by unfolding the definition of transposition using the unit and the counit of the adjunction $\term{}\dashv\comp{}$.
			\end{proof}

The next step is to show that the isomorphism between the fibrations of terms and global type morphisms is sufficient for a comprehension category to be Lawvere-Ehrhard. 
In \zcref{lem:term-pres} we show that, whenever a comprehension category has an inhabited unit type, 
we have a way to map a global type morphism into a term. 
Having inhabited unit types 
means that 
the comprehension category has fibred terminal objects and their comprehensions $\jcomp{}\term{}$ have a natural section $\eta$. 
In type-theoretic terms, this means that 
the following rules hold 
\begin{mathpar}
\inferrule{ \vdash \Gamma\ \ctx }{ \Gamma \vdash \unitty\ \type }
\and 
\inferrule{\vdash \Gamma \ \ctx}{\Gamma\vdash \unittr:\unitty}
\end{mathpar}
Then, \zcref{lem:ess-im} characterizes Lawvere-Ehrhard comprehension categories as those comprehension categories with inhabited unit types with a bijective correspondence between terms and global type morphisms. 
			
			\begin{lem}\label{lem:term-pres}
				Let $p:\ct{E}\to\ct{B}$ be a comprehension category with fibred terminal objects $\term{}$. Then there is a morphism $F:\gtsigma\to\tsigma$ in $\Fib(\ct{B})/p$ if and only if there for any $X$ in $\ct{B}$ there is a section $\eta_X:X\to\comp{}\term{}X$ of the comprehension $\jcomp{}\term{}X$ natural in $X$.
				Moreover, the components of $\eta$ are obtained as the second components of $F(\term{}, \id{\term{}})$.
			\end{lem}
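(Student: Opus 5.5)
The two directions are handled separately, and in both the key object is the distinguished global type morphism $(\term{}X,\id{\term{}X})$ in $\gtcat{\ct{E}}$ over $X$.

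\emph{Morphism gives section.} Assume $F:\gtsigma\to\tsigma$ is a morphism in $\Fib(\ct{B})/p$. Since $\tsigma\arcmp F=\gtsigma$, the object $F(\term{}X,\id{\term{}X})$ has first component $\term{}X$. It is therefore a pair $(\term{}X,\eta_X)$ with $\eta_X:X\to\comp{}\term{}X$ a section of $\jcomp{}\term{}X$; this is the ``moreover'' clause. For naturality, take $g:X\to Y$. The arrow $\term{}g:\term{}X\to\term{}Y$ is a morphism $(\term{}X,\id{})\to(\term{}Y,\id{})$ in $\gtcat{\ct{E}}$ over $g$, because the required square $\term{}g\arcmp\id{}=\id{}\arcmp\term{}g$ trivially commutes. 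Since $F$ commutes with the projections to $\ct{E}$, its image is $\term{}g$ regarded as a morphism of $\tcat{\ct{E}}$. The defining condition of morphisms in $\tcat{\ct{E}}$ then gives $\comp{}\term{}g\arcmp\eta_X=\eta_Y\arcmp g$, which is exactly naturality of $\eta$ as a transformation $\Id{\ct{B}}\nt\comp{}\term{}$.

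\emph{Section gives morphism.} Conversely, assume a natural section $\eta$ is given. Define $F(A,t):=(A,\comp{}t\arcmp\eta_X)$ for a global type morphism $t:\term{}X\to A$, and let $F$ act as the identity on underlying arrows of $\ct{E}$. The steps to check are the following.
\begin{itemize}
\item The term property: $\jcomp{}A\arcmp\comp{}t=\jcomp{}\term{}X$, because $\jcomp{}t$ is a commutative square whose bottom side is $pt=\id{X}$. Hence $\jcomp{}A\arcmp\comp{}t\arcmp\eta_X=\id{X}$.
\item Functoriality: for $f:(A,t)\to(B,s)$ over $g$, compute
\[\comp{}f\arcmp\comp{}t\arcmp\eta_X=\comp{}s\arcmp\comp{}\term{}g\arcmp\eta_X=\comp{}s\arcmp\eta_Y\arcmp g,\]
using functoriality of $\comp{}$ and naturality of $\eta$. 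So $f$ is also a morphism in $\tcat{\ct{E}}$, and functoriality of $F$ is then immediate.
\item Commuting triangles over $\ct{E}$ and $\ct{B}$: these hold by construction.
\item Preservation of cartesian arrows: this is the only nonformal point, and I expect it to be the main obstacle. I would show that in both $\gtfib{p}$ and $\tfib{p}$ an arrow is cartesian iff its underlying arrow in $\ct{E}$ is $p$-cartesian. For $\gtfib{p}$ the universal property is checked directly using uniqueness in $p$; for $\tfib{p}$ one additionally uses that $\jcomp{}$ sends cartesian arrows to pullbacks, so that a lift is automatically compatible with the sections. Given this characterisation, $F$ preserves cartesian arrows because it is the identity on underlying arrows.
\end{itemize}

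Finally, the ``moreover'' clause holds for this $F$ as well, since $\comp{}\id{\term{}X}\arcmp\eta_X=\eta_X$. If one also wants the two constructions to be mutually inverse, a morphism $F$ is determined by its value on the identities $(\term{}X,\id{})$: apply $F$ to the arrow $t:(\term{}X,\id{})\to(A,t)$ in $\gtcat{\ct{E}}$. This shows that $F(A,t)$ must have second component $\comp{}t\arcmp\eta_X$, which is the formula used above.
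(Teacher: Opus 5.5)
Your proof is correct and follows the paper's argument essentially step for step. You read off $\eta_X$ from $F(\term{}X,\id{\term{}X})$ and get naturality from the image of $\term{}g$; conversely you set $F(A,t)=(A,\comp{}t\arcmp\eta_X)$ and verify it with the same two computations. Two things you add are also sound: the check that $F$ preserves cartesian arrows, which the paper leaves implicit (only "$\gtfib{p}$-cartesian $\Rightarrow$ $p$-cartesian" and "$p$-cartesian $\Rightarrow$ $\tfib{p}$-cartesian via the pullback $\jcomp{}f$" are actually needed), and the uniqueness remark, which is the same trick the paper uses in the proof of \zcref{thm:ess-im}.
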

			
			\begin{proof}
				Assume first that there is $F:\gtcat{\ct{E}}\to\tcat{\ct{E}}$ in $\Fib(\ct{B})/p$. Then a section $\un{}_X$ of $\jcomp{}\term{}X$ is easily obtained as the second component of $F(\term{}X, \id{\term{}X})$. The naturality is easily deduced from the commutative square associated with $F(\term{}f)$.
				
				Conversely, assume that $\un{}_X$ is a section of $\jcomp{}\term{}X$. Then we define a functor $F:\gtcat{\ct{E}}\to\tcat{\ct{E}}$ by $F(A,t)=(A,\comp{}t\arcmp\un{}_X)$. By applying $\jcomp{}$ to $t$ one obtains that $\jcomp{}A\arcmp\comp{}t=\id{X}\arcmp\jcomp{}\term{}X$, hence $\jcomp{}A\arcmp\comp{}t\arcmp\un{}_X=\id{X}\arcmp\jcomp{}\term{}X\arcmp\un{}_X=\id{X}$ and the functor is well defined on objects. Its action on arrows is trivial: it can only map $f:(A,t)\to(B,s)$ to $f:(A,\comp{}t\arcmp\un{}_X)\to(B,\comp{}s\arcmp\un{}_Y)$. The only thing to check is that $\comp{}f\arcmp\comp{}t\arcmp\un{}_X=\comp{}s\arcmp\un{}_Y\arcmp g$. But by hypothesis $f\arcmp t=s\arcmp\term{}g$, so $\comp{}f\arcmp\comp{}t\arcmp\un{}_X=\comp{}s\arcmp\comp{}\term{}g\arcmp\un{}_X=\comp{}s\arcmp\un{}_Y\arcmp g$ by functoriality of $\comp{}$ and naturality of $\un{}$.
			\end{proof}

			\begin{lem}\label{lem:ess-im}
				Let $p:\ct{E}\to\ct{B}$ be a comprehension category with fibred terminal objects $\term{}$. Then $p$ is Lawvere-Ehrhard if and only if the following conditions hold:
				\begin{enumerate}
					\item Given an object $X$ in $\ct{B}$, there is a section $s_X:X\to\comp{}\term{}X$ of the comprehension $\jcomp{}\term{}X$ natural in $X$;
					\item Given $A$ over $X$ and a section $t:X\to\comp{}A$ of the comprehension $\jcomp{}A$, there exist a unique vertical arrow $t^{\#}:\term{}X\to A$ such that $\comp{}t^{\#}\arcmp \un{}_X= t$ and natural in $(A,t)$ in the following sense: whenever $f:A\to B$, $t:X\to\comp{}A$ and $s:Y\to\comp{}B$ are such that $s\arcmp pf=\comp{}f\arcmp t$, then $s^\#\arcmp\term{}pf=f\arcmp t^\#$.
					\[\begin{tikzcd}
						& A \\
						& {\term{}X} \\
						X & {\comp{}A} & X \\
						& {\comp{}\term{}X}
						\arrow["{t^{\#}}", dashed, from=2-2, to=1-2]
						\arrow["t", from=3-1, to=3-2]
						\arrow["{s_X}"', bend right, from=3-1, to=4-2]
						\arrow["{\jcomp{}A}", from=3-2, to=3-3]
						\arrow["{\comp{}t^{\#}}", from=4-2, to=3-2]
						\arrow["{\jcomp{}\term{}X}"', bend right, from=4-2, to=3-3]
					\end{tikzcd}\]
				\end{enumerate}
			\end{lem}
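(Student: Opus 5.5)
We prove the two implications separately. Note that in condition 2 the unit $\un{}_X$ is the section $s_X$ of condition 1. For the forward direction, assume $p$ is a LECC. We take $s_X := \un{}_X$, the unit of $\term{}\dashv\comp{}$. By \zcref{prop:comp-term-inv} it is the inverse of $p\coun{}_{\term{}X} = \jcomp{}\term{}X$, hence in particular a section, and it is natural because it is a unit. For condition 2 we use transposition. Given a term $t : X\to\comp{}A$, the vertical arrow $t^{\#} = \coun{}_A\arcmp\term{}t$ satisfies $\comp{}t^{\#}\arcmp\un{}_X = t$ by the triangle identities. Any vertical $u:\term{}X\to A$ with $\comp{}u\arcmp\un{}_X = t$ is the transpose of $t$, so uniqueness follows from bijectivity of transposition. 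Naturality is exactly the morphism part of \zcref{prop:prf-trm}: the two squares there commute simultaneously.

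For the converse, assume conditions 1 and 2. We must show that $\comp{}$ is right adjoint to $\term{}$. We will exhibit, for each $Z$ in $\ct{B}$ and $A$ over $X$ in $\ct{E}$, a bijection $\ct{E}(\term{}Z, A)\cong\ct{B}(Z,\comp{}A)$ natural in both variables. Forward: send $u:\term{}Z\to A$ (over $pu = h: Z\to X$) to $\comp{}u\arcmp s_Z$. Backward: given $g:Z\to\comp{}A$, set $h := \jcomp{}A\arcmp g$ and take the cartesian lifting $\carar{h}{A}:\reind{h}A\to A$. Since $\jcomp{}$ sends it to a pullback, $g$ and $\id{Z}$ induce a unique section $t_g : Z\to\comp{}\reind{h}A$ of $\jcomp{}\reind{h}A$ with $\comp{}\carar{h}{A}\arcmp t_g = g$. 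We then define the backward map as $g\mapsto \carar{h}{A}\arcmp t_g^{\#}$, which uses condition 2.

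We then check that the two maps are inverse.
\begin{itemize}
\item Starting from $g$: $\comp{}(\carar{h}{A}\arcmp t_g^{\#})\arcmp s_Z = \comp{}\carar{h}{A}\arcmp t_g = g$, by the defining equation of $t_g^{\#}$.
\item Starting from $u$ over $h$: factor $u = \carar{h}{A}\arcmp v$ with $v$ vertical. Then $\comp{}v\arcmp s_Z$ is a section of $\jcomp{}\reind{h}A$, because $\jcomp{}\reind{h}A\arcmp\comp{}v = \jcomp{}\term{}Z$ and $s_Z$ is a section. By the pullback property this section equals $t_{\comp{}u\arcmp s_Z}$. Uniqueness in condition 2 then gives $v = t^{\#}$, hence the backward map returns $u$.
\end{itemize}

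Naturality of the forward map in $A$ is immediate from functoriality of $\comp{}$. Naturality in $Z$ follows from naturality of $s$: for $k:Z'\to Z$ we have $\comp{}(u\arcmp\term{}k)\arcmp s_{Z'} = \comp{}u\arcmp\comp{}\term{}k\arcmp s_{Z'} = \comp{}u\arcmp s_Z\arcmp k$. Since the forward map is a natural bijection, we do not need to check naturality of the inverse separately. This yields $\term{}\dashv\comp{}$, so $p$ is a LECC.

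We expect the main obstacle to be the round-trip step starting from $u$. There we must identify $\comp{}v\arcmp s_Z$ with the section induced by the pullback, and this relies on the equation $\jcomp{}\reind{h}A\arcmp\comp{}v=\jcomp{}\term{}Z$. That equation holds because $\jcomp{}$ is a functor over $\ct{B}$ and $v$ is vertical. The naturality clause of condition 2 is not needed for the bijection itself. It would only be used to show that the backward map, i.e.\ the induced counit, is natural directly, which is redundant given the naturality of the forward map.
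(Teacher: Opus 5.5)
Your proof is correct. The forward direction is essentially the paper's: you read the naturality clause off the morphism part of \zcref{prop:prf-trm}, while the paper redoes that computation with a diagram involving $\coun{}$. The converse, however, follows a genuinely different route. The paper writes down the counit explicitly as $\coun{}_A := \weakar{A}{A}\arcmp\gel{A}^{\#}$ using the generic element. It then checks the triangle identities, the naturality of $\coun{}$ (this is where the naturality clause of condition 2 enters), and independence from the cleavage. You instead show that $u\mapsto\comp{}u\arcmp s_Z$ is a bijection natural in $Z$ and $A$. As a result, the triangle identities and the naturality of the counit come for free, and cleavage-independence is automatic because your backward map is the inverse of a cleavage-free map. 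Your argument also shows that the naturality clause of condition 2 is redundant, which is part of what the paper's subsequent remark asserts.

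The two adjunctions coincide: at $g=\id{\comp{}A}$ you get $h=\jcomp{}A$ and $t_g=\gel{A}$, so your backward map returns exactly the paper's counit. The paper's route buys the explicit ``generic variable'' description of $\coun{}_A$; yours buys economy.

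Two small points are worth writing out. First, in the second round trip, $\jcomp{}A\arcmp\comp{}u\arcmp s_Z=pu\arcmp\jcomp{}\term{}Z\arcmp s_Z=pu$, so the $h$ produced by the backward map really is $pu$. Second, your backward map lands over $\jcomp{}A\arcmp g$, hence $p\coun{}_A=\jcomp{}A$. That is, the LECC you build induces the given Jacobs comprehension, matching the convention $\jcomp{}=p\coun{}$ tacitly used in the forward direction.
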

			
			\begin{proof}
				One implication follows by \zcref{prop:comp-term-inv} and \zcref{lem:prf-trm}, except for the naturality. To prove the latter, consider $f:A\to B$, $t:X\to\comp{}A$ and $s:Y\to\comp{}B$ such that $s\arcmp pf=\comp{}f\arcmp t$. Then applying $\term{}$ yields $\term{}s\arcmp\term{}pf=\term{}\comp{}f\arcmp\term{}t$. This, together with the definition of $t^\#$ and $s^\#$ and the naturality of $\coun{}$, proves that the diagram below commutes, proving the claim.
				\[\begin{tikzcd}
					& {\term{}X} &&& {\term{}Y} & \\
					&& A &&& B \\
					{\term{}\comp{}A} &&& {\term{}\comp{}B}
					\arrow["{\term{}pf}", from=1-2, to=1-5]
					\arrow["{t^\#}"{description}, from=1-2, to=2-3]
					\arrow["{\term{}t}"', from=1-2, to=3-1]
					\arrow["{s^\#}", from=1-5, to=2-6]
					\arrow["{\term{}s}"', from=1-5, to=3-4, pos=0.4]
					\arrow["f", from=2-3, to=2-6, crossing over, pos=0.35]
					\arrow["{\coun{}_A}"', from=3-1, to=2-3]
					\arrow["{\term{}\comp{}f}"', from=3-1, to=3-4]
					\arrow["{\coun{}_B}"', from=3-4, to=2-6]
				\end{tikzcd}\]
				
				For the converse, suppose that the conditions hold. We want to show that there is an adjunction $\term{}\dashv\comp{}$. We start by defining the natural transformation $\un{}:\Id{\ct{B}}\to \comp{}\term{}$ whose components are the sections $\un{}_X\colon= s_X$.	Now we can define the counit $\coun{}:\term{}\comp{}\to\Id{\ct{E}}$. First, fix a cleavage of $p$ and consider the generic element of type $A$, $\gel{A}$. It is by definition a section of the comprehension $\jcomp{}\weak{A}{A}$, so by hypothesis we get a unique vertical arrow $\gel{A}^{\#}:\term{}\comp{}A\to\weak{A}{A}$ such that $\comp{}\gel{A}^{\#}\arcmp s_X= \gel{A}$. Finally, we define $\coun{}_A\colon = \weakar{A}{A}\arcmp \gel{A}^{\#}$. 
				\[\begin{tikzcd}
					& {\term{}\comp{}A} \\
					& {\weak{A}{A}} & A \\
					{\comp{}A} \\
					& {\comp{}\weak{A}{A}} & {\comp{}A} \\
					& {\comp{}A} & X
					\arrow["{\gel{A}^{\#}}"', dashed, from=1-2, to=2-2]
					\arrow["{\coun{}_A}", from=1-2, to=2-3]
					\arrow["{\weakar{A}{A}}"', from=2-2, to=2-3]
					\arrow["{{\gel{A}}}", dashed, from=3-1, to=4-2]
					\arrow["{{\id{\comp{}A}}}", bend left, from=3-1, to=4-3]
					\arrow["{{\id{\comp{}A}}}"', bend right, from=3-1, to=5-2]
					\arrow["{\jcomp{}\weak{A}{A}}", from=4-2, to=4-3]
					\arrow["{{\comp{}\weakar{A}{A}}}", from=4-2, to=5-2]
					\arrow["{{\jcomp{}A}}", from=4-3, to=5-3]
					\arrow["{{\jcomp{}A}}"', from=5-2, to=5-3]
				\end{tikzcd}\]
				This definition does not depend on the particular choice of cleavage: there is a unique vertical iso between two different choices of a cleavage, and its mediation with the different reindexing functors does not change the composition. Triangular identities are easy to show. For $X$ in $\ct{B}$, we have that $\coun{}_{\term{}X}=\term{}\jcomp{}\term{}X$. Then one has $\coun{}_{\term{}X}\arcmp\term{}\un{}_X=\id{\term{}X}$ since $\un{}_X$ is a section of $\jcomp{}\term{}X$.	Instead, for $A$ in $\ct{E}$, we have that $\comp{}\coun{}_A\arcmp\un{}_{\comp{}A}=\id{\comp{}A}$ by definition of $\coun{}_A$.
				
				The naturality of $\coun{}$ is easy to prove using naturality in $(A,t)$: given $f:A\to B$, it is sufficient to check that $\reind{f}\arcmp\gel{A}^{\#}=\gel{B}^{\#}\arcmp\term{}\comp{}f$. This is true since $\comp{}\reind{f}\arcmp\gel{A}=\gel{B}\arcmp\comp{}f$ holds.
				
			\end{proof}
			
			\begin{rem}
				The hypothesis of the previous lemma can be loosened a bit: one can drop the requirement of naturalities, asking for only a family of sections and the universal property below. The idea to prove it consists in using the characterization of adjunctions via universal morphisms, which allows to create a left adjoint to $\comp{}$ whose action on objects coincide with $\term{}$. Then it is an easy check that this is a section of $p$ and hence the same as $\term{}$ thanks to its universal property.
			\end{rem}

We can finally prove our characterization of Lawvere-Ehrhard comprehension categories. 

			\begin{thm}\label{thm:ess-im}
				Let $p:\ct{E}\to\ct{B}$ be a comprehension category with fibred terminal objects. 
Then $p$ is Lawvere-Ehrhard if and only if there is an isomorphism $\ftransp{}:\gtsigma\to\tsigma$ in $\Fib(\ct{B})/p$.
			\end{thm}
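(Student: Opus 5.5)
The forward implication is exactly \zcref{prop:prf-trm}: if $p$ is a LECC, transposition along $\term{}\dashv\comp{}$ gives an isomorphism $\ftransp:\gtsigma\to\tsigma$ in $\Fib(\ct{B})/p$. So the work is in the converse. There I would verify the two conditions of \zcref{lem:ess-im} and conclude from that lemma.

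So assume an isomorphism $\ftransp:\gtsigma\to\tsigma$ in $\Fib(\ct{B})/p$. For condition (1), apply \zcref{lem:term-pres} to the morphism $\ftransp$. This gives a natural section $s_X:X\to\comp{}\term{}X$ of $\jcomp{}\term{}X$, namely the second component of $\ftransp(\term{}X,\id{\term{}X})$.

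For condition (2), take $A$ over $X$ with a section $t$ of $\jcomp{}A$. Since $\ftransp$ commutes with $\gtsigma,\tsigma$ and is bijective on objects, there is a unique global type morphism $u:\term{}X\to A$ with $\ftransp(A,u)=(A,t)$. I will take $t^{\#}:=u$.

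\textbf{Key step: the formula $\ftransp(A,u)=(A,\comp{}u\arcmp s_X)$.} This is what makes the required equation $\comp{}t^{\#}\arcmp s_X=t$ hold. The arrow $u$ itself is an arrow $(\term{}X,\id{\term{}X})\to(A,u)$ in $\gtcat{\ct{E}}$, since $u\arcmp\id{}=u\arcmp\term{}\id{X}$. It lies over $\id{X}$ and is sent by $\gtsigma$ to $u$. Apply $\ftransp$; it commutes with $\gtsigma$ and $\tsigma$, so it sends this arrow to $u:(\term{}X,s_X)\to(A,t')$, where $t'$ denotes the second component of $\ftransp(A,u)$. The condition for this to be an arrow of $\tcat{\ct{E}}$ reads $\comp{}u\arcmp s_X=t'\arcmp\id{X}$. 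Hence $t'=\comp{}u\arcmp s_X$, i.e.\ $\ftransp$ agrees with the functor built in \zcref{lem:term-pres}.

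\textbf{Uniqueness of $t^{\#}$.} If $v$ is vertical with $\comp{}v\arcmp s_X=t$, then by the formula $\ftransp(A,v)=(A,t)=\ftransp(A,u)$. Injectivity on objects gives $v=u$.

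\textbf{Naturality.} Let $f:A\to B$ satisfy $\comp{}f\arcmp t=s\arcmp pf$. Then $f$ is an arrow $(A,t)\to(B,s)$ in $\tcat{\ct{E}}$. Since $\ftransp$ is an isomorphism of categories, $f$ has a preimage in $\gtcat{\ct{E}}$. Compatibility with $\gtsigma$ and $\tsigma$ forces that preimage to be $f$ itself, viewed as an arrow $(A,t^{\#})\to(B,s^{\#})$. The defining condition of arrows in $\gtcat{\ct{E}}$ is then exactly $f\arcmp t^{\#}=s^{\#}\arcmp\term{}pf$.

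With (1) and (2) established, \zcref{lem:ess-im} shows $p$ is Lawvere-Ehrhard. The only point needing care is the key step identifying $\ftransp$ with the formula $(A,u)\mapsto(A,\comp{}u\arcmp s_X)$, which rests on the arrow $u:(\term{}X,\id{})\to(A,u)$. After that, everything follows from $\ftransp$ being bijective on objects and on arrows over $p$.
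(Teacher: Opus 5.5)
Your proof is correct and follows the paper's argument. The forward direction is \zcref{prop:prf-trm}; for the converse you use \zcref{lem:ess-im}, with condition (1) from \zcref{lem:term-pres}, $t^{\#}$ defined as the preimage of $(A,t)$ under $\ftransp$, the equation $\comp{}t^{\#}\arcmp s_X=t$ obtained by applying $\ftransp$ to the arrow $t^{\#}:(\term{}X,\id{\term{}X})\to(A,t^{\#})$, and naturality from $\ftransp^{-1}$ acting on arrows. Your explicit uniqueness argument, via the formula $\ftransp(A,u)=(A,\comp{}u\arcmp s_X)$ and injectivity on objects, spells out a step the paper leaves implicit.
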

			
			\begin{proof}
				One implication follows directly from \zcref{prop:prf-trm}. For the converse, it is enough to show that the two conditions of \zcref{lem:ess-im} hold. The first of the two is equivalent to the existence of a morphism $\ttransp:\gtsigma\to\tsigma$ in $\Fib(\ct{B})/p$ by \zcref{lem:prf-trm}. Now it suffices to prove that the existence of an inverse to $\ttransp$ implies the second condition (in fact, it is equivalent).
				
				Consider $A$ over $X$ and $t:X\to\comp{}A$ a section of $\jcomp{}A$. Then we can define $t^\#$ as $\ttransp^{-1}t$.
				Applying $\ttransp$ to the morphism $t^\#:({\term{}X},\id{\term{}X})\to(A,t^\#)$ in $\gtcat{\ct{E}}$ shows that $\comp{}t^\#\arcmp\un{}_X=\ttransp t^\#=t$. The naturality of $t^\#$ corresponds exactly to the fact that $\ttransp^{-1}$ maps morphisms in morphisms.
				
			\end{proof}
			
			This result implies that a large class of comprehension categories are actually Lawvere-Ehrhard.
			
			\begin{cor}\label{cor:full}
Let $p:\ct{E}\to\ct{B}$ be a full comprehension category with fibred terminal objects preserved by the comprehension functor $\jcomp{}$. 
Then, $p$ is a Lawvere-Ehrhard comprehension category. 
			\end{cor}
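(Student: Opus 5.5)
By \zcref{thm:ess-im}, it suffices to build an isomorphism $\ftransp:\gtsigma\to\tsigma$ in $\Fib(\ct{B})/p$. Equivalently, by \zcref{lem:ess-im}, we must produce a natural family of sections of $\jcomp{}\term{}X$ together with a natural bijection between terms and global type morphisms. I would proceed via \zcref{lem:ess-im}, since both of its conditions can be read off from fullness and faithfulness of $\jcomp{}$.

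The first step produces the sections. By hypothesis $\jcomp{}$ preserves fibred terminal objects. Since $\id{X}$ is terminal in the fibre of $\cod$ over $X$ (\zcref{ex:term}), $\jcomp{}\term{}X$ is an isomorphism $\comp{}\term{}X\to X$. I set $s_X\colon=(\jcomp{}\term{}X)^{-1}$. It is a section, and it is natural in $X$: the square $\jcomp{}(\term{}g)$ commutes, and inverting its horizontal isomorphisms gives $\comp{}\term{}g\arcmp s_X=s_Y\arcmp g$.

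The second step produces $t^{\#}$. Let $t:X\to\comp{}A$ be a section of $\jcomp{}A$. The pair $(t,\id{X})$ is a commutative square in $\Arr{\ct{B}}$ from $\jcomp{}\term{}X$ to $\jcomp{}A$ once we precompose with the isomorphism: explicitly, $(t\arcmp\jcomp{}\term{}X,\id{X})$ is a morphism $\jcomp{}\term{}X\to\jcomp{}A$ in $\Arr{\ct{B}}$. The square commutes because $\jcomp{}A\arcmp t\arcmp\jcomp{}\term{}X=\jcomp{}\term{}X$. By fullness there is $t^{\#}:\term{}X\to A$ with $\jcomp{}t^{\#}=(t\arcmp\jcomp{}\term{}X,\id{X})$. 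Its image under $p=\cod\arcmp\jcomp{}$ is $\id{X}$, so $t^{\#}$ is vertical, and $\comp{}t^{\#}\arcmp s_X=t$. For uniqueness, any vertical $u$ with $\comp{}u\arcmp s_X=t$ has $\jcomp{}u=(\comp{}u,\id{X})$ with $\comp{}u=t\arcmp\jcomp{}\term{}X$, because $s_X$ is invertible. Faithfulness then gives $u=t^{\#}$.

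The third step, naturality, is the point needing the most care, though it is again handled by faithfulness. Suppose $f:A\to B$ over $g$ satisfies $s\arcmp g=\comp{}f\arcmp t$. Then both $s^{\#}\arcmp\term{}g$ and $f\arcmp t^{\#}$ are arrows $\term{}X\to B$ over $g$. Their images under $\jcomp{}$ have bottom component $g$. Their top components are $s\arcmp\jcomp{}\term{}Y\arcmp\comp{}\term{}g$ and $\comp{}f\arcmp t\arcmp\jcomp{}\term{}X$ respectively. By the naturality established in the first step, $\jcomp{}\term{}Y\arcmp\comp{}\term{}g=g\arcmp\jcomp{}\term{}X$, so the two top components agree by the hypothesis on $f$. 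Faithfulness of $\jcomp{}$ yields $s^{\#}\arcmp\term{}g=f\arcmp t^{\#}$.

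With both conditions of \zcref{lem:ess-im} verified, $p$ is Lawvere-Ehrhard. The only delicate point is the bookkeeping with the isomorphism $\jcomp{}\term{}X$ versus its inverse $s_X$ throughout these identities.
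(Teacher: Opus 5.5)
Your proof is correct and follows the paper's route. Both arguments go through \zcref{lem:ess-im}, take $s_X$ to be the section of the invertible arrow $\jcomp{}\term{}X$, and obtain $t^{\#}$ by fullness from the square $(t\arcmp\jcomp{}\term{}X,\id{X})$, with uniqueness from faithfulness. You also check explicitly the naturality of $s$ and of $(-)^{\#}$, using faithfulness of $\jcomp{}$; the paper leaves both implicit.
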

			
			\begin{proof}
We will denote by $\term{}$ the fibred terminal object functor of $p$, and use the characterization given in \zcref{lem:ess-im}. First, given $X$ in $\ct{B}$ the comprehension $\jcomp{}\term{}X$ has a section: since $\jcomp{}$ preserves fibred terminal objects, $\jcomp{}\term{}X$ is fibred terminal with respect to $\cod$. Then there exist a unique vertical morphism $\id{X}\to \jcomp{}X$, that corresponds exactly to a section $s_X$ of $\jcomp{}\term{}X$.
				\[\begin{tikzcd}
					X & X \\
					{\comp{}\term{}X} & X
					\arrow["{\id{}X}", from=1-1, to=1-2]
					\arrow["{s_X}"', dashed, from=1-1, to=2-1]
					\arrow["{\id{X}}", dashed, from=1-2, to=2-2]
					\arrow["{\jcomp{}\term{}X}"', from=2-1, to=2-2]
				\end{tikzcd}\]
				Secondly, given $A$ over $X$ and a section $t:X\to\comp{}A$ of the comprehension $\jcomp{}A$, we know that $(t\arcmp\jcomp{}\term{}X)\arcmp s_X=t$. Furthermore, we have that $\jcomp{}A\arcmp (t\arcmp\jcomp{}\term{}X)=\jcomp{}\term{}X$, so $(t\arcmp\jcomp{}\term{}X, \id{X}):\jcomp{}\term{}X\to \jcomp{}A$ is a morphism in $\Arr{\ct{B}}$. Then there exist a unique $f:\term{}X\to A$ such that $\jcomp{}f=(t\arcmp\jcomp{}\term{}X, \id{X})$ since $\jcomp{}$ is full and faithful.
				We conclude by \zcref{lem:ess-im}.
			\end{proof}

			\begin{exa}
				Consider the comprehension category given by a family of arrows closed under pullbacks in $\ct{B}$ (see \zcref{ex:cod-gen}). If we moreover suppose that the family contains the identities, then it is a LECC.
				Indeed, this comprehension category satisfies the conditions of \zcref{cor:full}.
			\end{exa}

			We use these results to show that the fibration of (small) discrete fibrations over (small) categories is a LECC, while the fibration of (small) fibration over (small) categories is a JCC which is not a LECC.
			
			\begin{exa}
Let $\Cat$ be the category of small categories and functors, 
$\Fibone$ the category of small fibrations and fibration morphisms, and 
$\Fibdisc$ the full subcategory of $\Fibone$ on discrete fibrations. 
Both $\Fibone$ and $\Fibdisc$ are subcategories of $\Arr{\Cat}$ closed under pullbacks and containing identities, hence 
the restriction of the codomain functor to them is a fibration with fibred terminal objects. 
Moreover, $\Fibdisc$ is a full subcategory of $\Arr{\Cat}$, because in a discrete fibration all arrows of the total category are cartesian. 
Therefore,  by \zcref{cor:full}, $\cod : \Fibdisc\to\Cat$ is a Lawvere-Ehrhard comprehension category. 

Instead, $\cod : \Fibone \to \Cat$ 
is not a particular case of \zcref{ex:cod-gen}, since 
$\Fibone$ is not a full subcategory of $\Arr{\Cat}$. 
In fact, this is an example of a comprehension category which is not Lawvere-Ehrhard. 
In particular, given an object $p:\ct{E}\to\ct{B}$ in $\Fibone$, 
the transpose of $\id{\ct{E}}\in\Hom{\Cat}(\ct{E},\comp{}p)$ should be a morphism $\id{\ct{E}}\to p$ in $\Hom{\Fibone}(\id{\ct{E}},p)$, 
but $(\id{\ct{E}},p)\in\Hom{\Arr{\Cat}}(\id{\ct{E}},p)$ is not a morphism of fibrations since $\id{\ct{E}}$ does not preserve cartesian arrows.
	
		\end{exa}
			
			Both the conditions of \zcref{lem:ess-im} are not necessarily verified. \zcref{ex:cond1} provides a comprehension category in which the first condition of the characterization does not hold. Next example shows that also the second condition is not necessarily verified.
			
			\begin{exa}[{\cite[Exs.~10.4.8]{jacobs1999categorical}}]\label{ex:famsetp}
				Consider the category $\ctsetp$ of pointed sets and the family fibration $\Famfib{\ctsetp}:\Fam{\ctsetp}\to\ctset$, obtained as described in \cite[Definition 1.2.1]{jacobs1999categorical}. This fibration, together with the functor $\jcomp{}:\Fam{\ctsetp}\to\Arr{\ctset}$ that maps $(I,\{X_i\}_{i\in I})$ to $\bigsqcup_{i\in I}X_i\to I$, is a comprehension category (see \cite{jacobs1999categorical}). It also has a fibred terminal object functor $\term{}$, since $\ctsetp$ has a terminal object $(\{*\}, *)$. We can see that this fibration is not Lawvere-Ehrhard: although the first condition of the characterization holds, the second is not satisfied. Indeed, given a set $I$, one has that $\jcomp{}\term{}X$ is an isomorphism, so it has a section. But given an object $(I,\{X_i\}_{i\in I})$ in $\Fam{\ctsetp}$ over $I$ there is a unique vertical arrow $\term{}X\to (I,\{X_i\}_{i\in I})$, while in general there are different sections of $\bigsqcup_{i\in I}X_i\to I$. 
			\end{exa}
			
			We recall that since Lawvere-Ehrhard comprehension is defined by adjointness, there is at most one (up to iso) structure of LECC over a fixed fibration. 
			
			\begin{exa}\label{prop:obzero}
				Consider a fibration $p:\ct{E}\to\ct{B}$ with fibred zero-object $\obzero{p}$. It is a LECC whose comprehension functor is $p$ itself: the adjunctions $p\dashv \obzero{p}$ and $\obzero{p}\dashv p$ hold by (fibred) terminality and initiality of $\obzero{p}$, respectively.
			\end{exa}
			
			We conclude by showing that it is possible to have more than one non-equivalent Jacobs comprehension structures over the same fibration. We do that by endowing the family fibration of \zcref{ex:famsetp} with a Lawvere-Ehrhard comprehension structure. This is possible since the JCC of that example was not in the essential image of $\LtoJ$.
			
			\begin{exa}\label{ex:ourfamsetp}
				Consider the family fibration $\Famfib{\ctsetp}:\Fam{\ctsetp}\to\ctset$. By \zcref{prop:obzero} we have that the triple $(\Famfib{\ctsetp},\term{},\Famfib{\ctsetp})$ is a LECC.
			\end{exa}
			
			\subsection{Faithful comprehension structures}
			
In this section, we compare Jacobs and Lawvere-Ehrhard comprehension categories in the special case where the underlying fibrations is faithful. 
These fibrations essentially correspond to families of preorders instead of families of arbitrary categories. 
They admit a substantially simpler technical treatment but, at the same time, they are general enough for dealing with many applications. 
For example, in type refinement systems \cite{mellies2015functors}, where the subtyping relation is a preorder, or in categorical logic \cite{lawvere1969adjointness, MR0257175} where fibres model the logical entailment relation. 

More in detail, we will show that for Lawvere-Ehrhard comprehension categories faithfulness is equivalent to proof-irrelevance, while this is not the case for arbitrary comprehension categories, where proof-irrelevance is a sufficient but not necessary condition. 
Let us start with a formal definition of proof-irrelevant comprehension category.

\begin{defi} \label{def:proof-irrelevant}
Let $p : \ct{E}\to\ct{B}$ be a comprehension category. 
We say that $p$ is \define{proof-irrelevant} if 
the comprehension functor $\jcomp{}$ is faithful and 
it factors through the subcategory of $\Arr{\ct{B}}$ spanned by monomorphisms. 
\end{defi}

The first condition states that if two arrows $f$ and $g$ in $\ct{E}$ lie over the same arrow of $\ct{B}$, i.e., $pf = pg$, and 
they have the same comprehension, i.e., $\comp{} f = \comp{} g$, then they must be equal. 
Note that this is essentially a restricted form of faithfulness, in fact, 
every comprehension category which is also a faithful fibration satisfies this condition. 
The second condition requires that, for every object $A$ of $\ct{E}$ over $X$ in $\ct{B}$, its comprehension $\jcomp{}A : \comp{}A \to X$ is monic. Keeping in mind that monos have at most one section, this means that every type is inhabited by at most one term. 

The next proposition shows that for comprehension categories proof-irrelevance implies faithfulness. 

			\begin{prop}\label{prop:faith}
Let $p : \ct{E} \to \ct{B}$  be a comprehension category. 
If $p$ is proof-irrelevant then $p$ is faithful. 
			\end{prop}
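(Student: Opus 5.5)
To prove that $p$ is faithful, I will take two arrows $f,g:A\to B$ in $\ct{E}$ with $pf=pg$ and show that $f=g$. Proof-irrelevance gives faithfulness of $\jcomp{}$, so it suffices to show $\jcomp{}f=\jcomp{}g$ in $\Arr{\ct{B}}$. An arrow of $\Arr{\ct{B}}$ is a commutative square. The codomain components of $\jcomp{}f$ and $\jcomp{}g$ are $\cod\jcomp{}f=pf$ and $\cod\jcomp{}g=pg$, and these agree by hypothesis. So the only thing left to prove is $\comp{}f=\comp{}g$.

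For this I will use that $\jcomp{}B:\comp{}B\to pB$ is monic, which is the second condition of \zcref{def:proof-irrelevant}. Since $\jcomp{}f$ and $\jcomp{}g$ are commutative squares, we have
\[\jcomp{}B\arcmp\comp{}f = pf\arcmp\jcomp{}A = pg\arcmp\jcomp{}A = \jcomp{}B\arcmp\comp{}g.\]
Cancelling the mono $\jcomp{}B$ gives $\comp{}f=\comp{}g$. Hence $\jcomp{}f=(\comp{}f,pf)=(\comp{}g,pg)=\jcomp{}g$, and faithfulness of $\jcomp{}$ yields $f=g$.

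I see no real obstacle. The only point to watch is the reading of ``$\jcomp{}$ is faithful'': I take it in the ordinary sense that $\jcomp{}$ is injective on each hom-set, as stated in \zcref{def:proof-irrelevant}, and I use equality of squares in $\Arr{\ct{B}}$ componentwise. The argument does not need vertical arrows specifically; it works for any parallel pair lying over the same base arrow, which is exactly faithfulness of $p$.
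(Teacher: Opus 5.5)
Your proof is correct and follows the paper's own argument. The commuting comprehension squares give $\jcomp{}B\arcmp\comp{}f=\jcomp{}B\arcmp\comp{}g$; cancelling the monic $\jcomp{}B$ yields $\comp{}f=\comp{}g$, so $\jcomp{}f=\jcomp{}g$, and faithfulness of $\jcomp{}$ gives $f=g$.
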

			
			\begin{proof}
				Consider $f,g:A\to B$ in $\ct{E}$ over $h:X\to Y$. Then $\jcomp{}B\arcmp \comp{}f=h\arcmp \jcomp{}A=\jcomp{}B\arcmp\comp{}g$. Since $\jcomp{}$ is mono, we conclude that $\comp{}f=\comp{}g$. This implies that $\jcomp{}f=\jcomp{}g$, which by faithfulness proves $f=g$.
			\end{proof}
			
			\begin{exa}
				Consider the discrete version of the syntactic fibration of \zcref{ex:typeth2}. This is trivially a faithful JCC, but it is not proof irrelevant. In fact the comprehension morphism $\Gamma, x:A\to \Gamma$ is not monic if there are two different terms of type $A$ in context $\Gamma$.
			\end{exa}

The next theorem shows that for Lawvere-Ehrhard comprehension categories proof-irrelevance is actually equivalent to faithfulness. 
Intuitively, this is due to the fact that in this case there is a tighter connection between  morphisms in the base and in the total category that allows us to prove that in a faithful LECC comprehension arrows are monic. 
This shows how the type theory modelled by a LECC becomes proof irrelevant whenever type morphisms form a preorder between types in the same context.

			\begin{thm}\label{thm:faith-char}
Let $p : \ct{E}\to\ct{B}$ be a Lawvere-Ehrhard comprehension category. Then, $p$ is faithful if and only if $p$ is proof-irrelevant. 
			\end{thm}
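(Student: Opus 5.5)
The direction ``proof-irrelevant implies faithful'' is exactly \zcref{prop:faith}, which holds for any comprehension category and in particular for $\LtoJ(p)$. So the work lies in the converse. Assume $p$ is a faithful LECC. We must show that the comprehension functor $\jcomp{}A = p\coun{}_A$ is faithful and that each $\jcomp{}A$ is monic.

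\emph{Faithfulness of $\jcomp{}$.} Take $f,g:A\to B$ with $\jcomp{}f=\jcomp{}g$. Comparing codomain components gives $pf=pg$, and since $p$ is faithful this already forces $f=g$. This step needs nothing beyond faithfulness of $p$.

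\emph{Monicity of $\jcomp{}A$.} This is the main step. Take $u,v:Z\to\comp{}A$ with $\jcomp{}A\arcmp u=\jcomp{}A\arcmp v=:h$. Transpose along $\term{}\dashv\comp{}$ to get $u^\#=\coun{}_A\arcmp\term{}u$ and $v^\#=\coun{}_A\arcmp\term{}v$, both arrows $\term{}Z\to A$. Since $p\term{}=\Id{\ct{B}}$, we have $p(u^\#)=p\coun{}_A\arcmp u=\jcomp{}A\arcmp u=h$, and likewise $p(v^\#)=h$. So $u^\#$ and $v^\#$ are parallel arrows in $\ct{E}$ lying over the same base arrow $h$. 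Faithfulness of $p$ then gives $u^\#=v^\#$, and since transposition is a bijection, $u=v$.

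The only point I expect to need care is the identification $p(u^\#)=\jcomp{}A\arcmp u$, which uses that $\term{}$ is a strict section of $p$ (so $p\term{}u=u$) together with the definition $\jcomp{}A=p\coun{}_A$ from \zcref{thm:l-to-j}. With both conditions of \zcref{def:proof-irrelevant} verified, $p$ is proof-irrelevant.
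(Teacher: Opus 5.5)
Your proposal is correct and follows the paper's proof. In both, monicity of $\jcomp{}A$ comes from transposing $u,v$ to parallel arrows $\term{}Z\to A$ over the same base arrow and applying faithfulness of $p$, faithfulness of $\jcomp{}$ follows from $\cod\arcmp\jcomp{}=p$, and the other direction is the earlier proposition.
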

			
			\begin{proof}
				One implication consists of \zcref{prop:faith}. For the converse, consider $f,g:X\to \comp{}A$ in $\ct{B}$ such that $p\coun{}_A\arcmp f=p\coun{}_A\arcmp g$, and the transposes $f^{\#},g^{\#}:\term{}X\to A$. They are equal to the composition, respectively, $\coun{}_A\arcmp \term{}f$ and $\coun{}_A\arcmp\term{}g$ by the definition of transposes through counit. These arrows are over $p\coun{}_A\arcmp f=p\coun{}_A\arcmp g$, and by faithfulness they have to be equal. This proves that $p\coun{}_A$ is a mono. Faithfulness trivially implies the other condition required for proof-irrelevance.
			\end{proof}

We conclude this section with two examples
showing that both the conditions of proof-irrelevance are necessary for characterizing faithfulness in Lawvere-Ehrhard comprehension categories. 

\begin{exa}
Consider the fibration $p$ of fibred monoids over sets, i.e. internal monoids in the slices of $\ctset$, which have finite products given by pullbacks.
This fibration has zero-objects, so by \zcref{prop:obzero} it is a LECC. The corresponding Jacobs comprehension functor maps an object $A$ to $\id{pA}$, which is trivially monic. Still, the other condition required for proof-irrelevance is not satisfied: the comprehension functor is not faithful, since $p$ is not and $\jcomp{}$ picks identities.
\end{exa}

\begin{exa} 
Consider the codomain fibration for a category $\ct{C}$ with pullbacks. It is a LECC with $\dom$ as comprehension functor, and the corresponding Jacobs comprehension functor is $\Id{\Arr{\ct{C}}}$. In this case comprehensions are not necessarily monic, but the comprehension functor $\jcomp{}$ is faithful since it is the identity.
\end{exa}

			\section{The free comprehension category over a fibration}
			\label{sect:free-jcomp}

In this section, we will construct the free (Jacobs) comprehension category over a fibration. In type-theoretic terms, this corresponds to adding the rule for context extension (together with the associated rules for  weakening and variables) to a type theory with just the substitution rule. 
We will describe the construction first at the level of objects, 
then we will briefly extend it to the 2-categorical level, finally proving that the obtained 2-functor is left biadjoint to the corresponding forgetful functor. 
We will prove the biadjunction defining its unit and counit and proving triangular identities.
To improve readability, auxiliary technical lemmas are moved to \zcref{app:A}. 

			\begin{notation}
We denote by $\num{n}$ the set $\{ i \in \mathbb{N} \mid 0 < i \leq n \}$, hence, $\num{0} = \emptyset$, $\num{1} = \{1\}$, $\num{2} = \{1,2\}$ and so on.
				Consider a function $g:\num{m}\to \num{n}$. We denote by $g+1$ the function $g+1:\num{m+1}\to \num{n+1}$ such that $(g+1)\rstn_{\num{m}}=g$ and $(g+1) (m+1)= n+1$.
			\end{notation}

We need an auxiliary construction in order to define the free comprehension category over a fibration. 
In particular, we will consider a 2-functor $\Fp:\Fib\to\Fib$ that freely adds finite products to the fibres of a fibrations. 
More precisely, this sends a fibration $p : \ct{E}\to\ct{B}$ to the fibration $\Fp(p) : \FFFP{p} \to \ct{B}$ defined as follows. 
The objects of $\FFFP{p}$ are pairs $(X,\vec{A})$ with $X$ in $\ct{B}$ and $\vec{A}$ a finite list of objects in $\fibre{\ct{E}}{X}$ and 
morphisms $f:(X,\vec{A})\to(Y,\vec{B})$ are triples $(\fp{f},\sp{f},\tp{f})$ 
where 
\begin{itemize} 
	\item $\fp{f}:X\rightarrow Y$ is an arrow in $\ct{B}$;
	\item $\sp{f}:\num{m}\rightarrow \num{n}$ is a function, where $m$ is the length of $\vec{B}$ and $n$ is the length of $\vec{A}$; 
	\item $\tp{f}= \{f_i\}_{i\in \num{m}}$ is a family of arrows in $\ct{E}$ such that $f_i:A_{\sp{f}(i)}\rightarrow B_i$.
\end{itemize} 
Finally, $\Fp(p)$ maps $f:(X,\vec{A})\to(Y,\vec{B})$ to $\fp{f}:X\to Y$.
A cartesian lifting of $f:X\to Y$ at $(Y,\vec{A})$ is obtained by taking the family of cartesian liftings with respect to $p$, that is to say it is $g:(X,\reind{f}\vec{A})\to (Y,\vec{A})$ such that $\fp{g}=f$, $\sp{g}=\id{\num{n}}$ and $\tp{g}=\{\carar{f}{A_i}\}_{i=1..n}$.
Given a morphism of fibrations $F:p\to q$, we define $\Fp(F):\Fp(p)\to\Fp(q)$ in the obvious way: it maps $(X,\vec{A})$ to $(\fb{F}X,\ft{F}\vec{A})$. Given a 2-cell $\alpha:F\to G$ in $\Fib$, we define $\Fp(\alpha)$ analogously.

Let $p : \ct{E}\to\ct{B}$ be  a fibration  and 
consider the following 2-pullback in $\Cattwo$: 
			\[\begin{tikzcd}
				{\JDomE{p}} & {\ct{E}} \\
				{\FFFP{p}} & {\ct{B}}
				\arrow[from=1-1, to=1-2]
				\arrow["{\JFib{p}}"', from=1-1, to=2-1]
				\arrow["p", from=1-2, to=2-2]
				\arrow["{{\fffp{p}}}"', from=2-1, to=2-2]
			\end{tikzcd}\]

Since the pullback of a fibration along any functor is again a fibration \cite{StreicherT:fibc}, 
we know that the functor $\JFib{p}$ is a fibration. 
More explicitly, objects in $\JDomE{p}$ are pairs $((X,\vec{A}), A_{n+1})$ with the first element in $\FFFP{p}$ and the second in $\ct{E}$ over $X$.
\footnote{To avoid ambiguity we make abundant use of parenthesis, sacrificing readability: $(X,(\vec{A}, A_{n+1}))$ indicates an object of the base category, while $((X,\vec{A}),A_{n+1})$ indicates an object in the fibre over $(X,\vec{A})$.}
The morphisms in this category are pairs of arrows $(f,g)$ such that $pg=\fp{f}$. Furthermore, $\JFib{p}$ is the first projection.

Intuitively, an object $(X,\vec{A})$ in $\FFFP{p}$ represents a context obtained by extending an original context $X$ in $\ct{B}$ with a list of types in that context. 
Similarly, an object $((X,\vec{A}),A_{n+1})$ of $\JDomE{p}$, i.e., a type in the context $(X,\vec{A})$, is just the weakening  of an original type $A$ over $X$ with  the new variables in $\vec{A}$. 
Following this intuitive reading of $\JFib{p}$, 
it is easy to see that it naturally support a context extension (a.k.a. comprehension) operation: 
the comprehension of a type $((X,\vec{A}),A_{n+1})$ is given by adding the specified type $A_{n+1}$ to the list $\vec{A}$ obtaining the context $(X, (\vec{A},A_{n+1}))$.
In this way, the objects of $\FFFP{p}$ can also be described as the result of the iterative application of comprehension and weakening: 
to construct $(X, A_1,\ldots, A_n)$, we start from $X$ and take the comprehension of $A_1$ obtaining $(X,A_1)$, 
then we weaken $A_2$, bringing it over $(X,A_1)$, and take its comprehension yielding $(X, (A_1 , A_2))$ and so on.

Formally, we can give to $\JFib{p}$ the structure of a comprehension category by defining 
a functor $\JFC{p}:\JDomE{p}\to\Arr{\FFFP{p}}$ as follows:

\begin{itemize}
	\item it maps objects $Y=((X,\vec{A}),A_{n+1})$ into $\JFC{p} Y:(X,(\vec{A}, A_{n+1}))\rightarrow(X,\vec{A})$ given by 
	\[
	\fp{\JFC{p} Y}=\id{X},
	\hspace{7ex}
	\sp{\JFC{p} Y}=\ini{\num{n}}{\num{n+1}},
	\hspace{7ex}
	\tp{\JFC{p} Y}=\{\id{A_i}\}_{i\in \num{n}}.
	\]
	
	\item it maps arrows $f:Y\rightarrow Z$, where $f=(g,h)$ and $Z=((X',\vec{B}), B_{m+1})$, into the square 
	\[\begin{tikzcd}
		{(X,(\vec{A}, A_{n+1}))} & {(X',(\vec{B}, B_{m+1}))} \\
		{(X,\vec{A})} & {(X',\vec{B})}
		\arrow["\carr{f}", from=1-1, to=1-2]
		\arrow["\JFC{p} Y"', from=1-1, to=2-1]
		\arrow["\JFC{p} Z", from=1-2, to=2-2]
		\arrow["g"', from=2-1, to=2-2]
	\end{tikzcd}\]
	where $\carr{f}\colon=(\fp{g},\sp{g}+1,\tp{g}\sqcup h)$. Sometimes we will use the notation $(g,\carr{f})$ for $\JFC{p}f$, since the vertical sides of the square are clear by the context.
	
\end{itemize}

			\begin{prop}
				Let $p:\ct{E}\to\ct{B}$ be a fibration. Then $\JFib{p}$ together with $\JFC{p}:\JDomE{p}\to \Arr{\FFFP{p}}$ is a comprehension category. 
			\end{prop}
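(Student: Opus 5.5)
We have to check three things: that $\JFC{p}$ is well defined on objects and arrows, that it is a functor with $\cod\arcmp\JFC{p}=\JFib{p}$, and that it sends cartesian arrows of $\JFib{p}$ to pullback squares in $\FFFP{p}$.

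\emph{Well-definedness and functoriality.} For $f=(g,h):Y\to Z$ with $Z=((X',\vec{B}),B_{m+1})$, we have $h:A_{n+1}\to B_{m+1}$ in $\ct{E}$ over $\fp{g}$. Hence $\carr{f}=(\fp{g},\sp{g}+1,\tp{g}\sqcup h)$ is a legitimate arrow of $\FFFP{p}$: its $(m+1)$-th component $h$ goes from $A_{(\sp{g}+1)(m+1)}=A_{n+1}$ to $B_{m+1}$. The square commutes by a componentwise check:
\begin{itemize}
\item the first components are both $\fp{g}$;
\item on indices, both composites send $i\in\num{m}$ to $\sp{g}(i)$, using $(\sp{g}+1)\arcmp\ini{\num{m}}{\num{m+1}}=\ini{\num{n}}{\num{n+1}}\arcmp\sp{g}$ in the contravariant direction;
\item the families agree, since the components of the vertical arrows are identities.
\end{itemize}
The bottom side of the square is $g=\JFib{p}f$, so $\cod\arcmp\JFC{p}=\JFib{p}$ holds on the nose. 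Functoriality follows because $(-)+1$ preserves composition and identities of index functions, and families compose componentwise.

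\emph{Cartesian arrows go to pullbacks.} I would first characterize cartesian arrows of $\JFib{p}$. Since $\JFib{p}$ is the pullback of $p$ along $\fffp{p}$, an arrow $(g,h)$ is cartesian exactly when $h$ is $p$-cartesian; $g$ may be arbitrary. So fix $f=(g,h)$ with $h$ cartesian, and take a cone in $\FFFP{p}$:
\[
u:(W,\vec{C})\to(X,\vec{A}),\qquad v:(W,\vec{C})\to(X',(\vec{B},B_{m+1})),\qquad g\arcmp u=\JFC{p}Z\arcmp v .
\]
We must produce a unique $w:(W,\vec{C})\to(X,(\vec{A},A_{n+1}))$ with $\JFC{p}Y\arcmp w=u$ and $\carr{f}\arcmp w=v$. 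The constraints force each component of $w$:
\begin{itemize}
\item $\fp{w}=\fp{u}$;
\item $\sp{w}$ agrees with $\sp{u}$ on $\num{n}$ and sends $n+1$ to $\sp{v}(m+1)$;
\item the components $w_i$ for $i\le n$ are the $u_i$;
\item $w_{n+1}:C_{\sp{v}(m+1)}\to A_{n+1}$ must satisfy $h\arcmp w_{n+1}=v_{m+1}$ and lie over $\fp{u}$.
\end{itemize}
The cone equation gives $\fp{v}=\fp{g}\arcmp\fp{u}$, so $p(v_{m+1})=p(h)\arcmp\fp{u}$. Cartesianness of $h$ then yields exactly one such $w_{n+1}$. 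It remains to verify $\carr{f}\arcmp w=v$ on the first $m$ indices. This follows from the cone equation read componentwise, since composing with $\JFC{p}Z$ just restricts $v$ to its first $m$ components. Uniqueness of $w$ is immediate because every component was forced.

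\emph{Main obstacle.} The main difficulty is the contravariant index bookkeeping for composition in $\FFFP{p}$ when checking that $w$ satisfies both equations, together with the correct characterization of cartesian arrows in the 2-pullback. For the latter I would argue directly from the explicit description of $\JDomE{p}$ rather than citing a general result.
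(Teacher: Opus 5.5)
Your proposal is correct and follows essentially the same route as the paper: a componentwise check of the square and of functoriality, then building the mediating arrow componentwise, with every component forced and the last one obtained from cartesianness of the $\ct{E}$-component of $f$. You are in fact more careful than the paper here. Its formula for the mediating arrow takes as last component the $(m+1)$-th component of the cone leg into $(X',(\vec{B},B_{m+1}))$, which lands in $B_{m+1}$ rather than $A_{n+1}$, so it must be replaced by exactly the cartesian factorization you construct. Like your argument, that factorization uses the requirement, implicit in the definition of $\FFFP{p}$, that family components lie over the base arrow.
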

			
			\begin{proof}
				It is easy to see that the following square commutes.
				
				\[\begin{tikzcd}
					{(X,(\vec{A}, A_{n+1}))} & {(X',(\vec{B}, B_{m+1}))} \\
					{(X,\vec{A})} & {(X',\vec{B})}
					\arrow["\carr{f}", from=1-1, to=1-2]
					\arrow["\JFC{p} Y"', from=1-1, to=2-1]
					\arrow["\JFC{p} Z", from=1-2, to=2-2]
					\arrow["g"', from=2-1, to=2-2]
				\end{tikzcd}\]
				
				Furthermore given a pair of composable arrows $f,f'$ one has that $\carr{f}\arcmp \carr{f'}=\carr{(f\arcmp f')}$ and $\carr{\id{Y}}=\id{\dom(\JFC{p} Y)}$. This implies that $\JFC{p}:\JDomE{p}\rightarrow \Arr{\FFFP{p}}$ is a functor. 
				
				Now we only need to prove that $\jcomp{\JFib{p}}$ preserves cartesian arrows. Given $f=(f_1,f_2)$ cartesian in $\JDomE{p}$, we have that $f_2$ is cartesian as well. Consider two morphisms $g:(X'',\vec{C})\rightarrow (X,\vec{A})$ and $h:(X'',\vec{C})\rightarrow (X', (\vec{B}, B_{m+1}))$ such that $f_2\arcmp g=\JFC{p} Z\arcmp h$.
				Then one can define $u:(X'', \vec{C})\rightarrow (X, (\vec{A}, A_{n+1}))$ as follows:
				
				\begin{align*}
					\fp{u} = \fp{g} &&
					\sp{u} : \num{n+1} \rightarrow \num{k} &&
					\tp{u} = \tp{g} \sqcup (\tp{h})_{m+1}
				\end{align*}
				
				where $\sp{u}(i)=\sp{g}(i)$ for $i\in \num{n}$ and $\sp{u}(n+1)= \sp{h}(m+1)$.
				Clearly $u$ is the unique arrow that makes the diagram below to commute, making $\JFC{p} f$ a pullback. 
				
				\[\begin{tikzcd}
					{(X'', \vec{C})} \\
					& {(X, (\vec{A}, A_{n+1}))} & {(X', (\vec{B}, B_{m+1}))} \\
					& {(X,\vec{A})} & {(X', \vec{B})}
					\arrow["u", dashed, from=1-1, to=2-2]
					\arrow["h", bend left, from=1-1, to=2-3]
					\arrow["g"', bend right, from=1-1, to=3-2]
					\arrow["{\carr{f}}", from=2-2, to=2-3]
					\arrow["{\JFC{p} Y}"', from=2-2, to=3-2]
					\arrow["{\JFC{p} Z}", from=2-3, to=3-3]
					\arrow["{f_2}"', from=3-2, to=3-3]
				\end{tikzcd}\] 
			\end{proof}
			
			We are now going to show that $\JFib{p}$ is the free comprehension category over an arbitrary fibration $p$. In particular, we will extend this construction to a 2-functor $\JFree:\Fib\to\JComp$ and prove that it is left bi-adjoint to the forgetful 2-functor $\forgJ:\JComp\to\Fib$.
			
			Given $F:p\to p'$ a morphism in $\Fib$, we define $\fb{\JFun{F}}\colon =\ft{\Fffp{F}}$ and $\ft{\JFun{F}}$ as the unique arrow given by the universal property of the 2-pullback defining $\JDom{\ct{E}'}{q}$, since $\fffp{p'}\arcmp\fb{\JFun{F}}\arcmp\JFib{p}=p'\arcmp \ft{F}\arcmp\pi_2$.
			\[\begin{tikzcd}
				{\ct{E}} \\
				{\JDomE{p}} & {\JDom{\ct{E}'}{q}} & {\ct{E}'} \\
				{\FFFP{p}} & {\FFFP{q}} & {\ct{B'}} \\
				{\ct{B}}
				\arrow["\ft{F}", from=1-1, to=2-3]
				\arrow["\pi_2", from=2-1, to=1-1]
				\arrow["{\ft{\JFun{F}}}", dashed, from=2-1, to=2-2]
				\arrow["{\JFib{p}}"', from=2-1, to=3-1]
				\arrow[from=2-1, to=3-2]
				\arrow[from=2-2, to=2-3]
				\arrow["{\JFib{q}}", from=2-2, to=3-2]
				\arrow["q", from=2-3, to=3-3]
				\arrow["{\fb{\JFun{F}}}"', from=3-1, to=3-2]
				\arrow["{\fffp{p}}"', from=3-1, to=4-1]
				\arrow["{\fffp{q}}", from=3-2, to=3-3]
				\arrow["\fb{F}"', from=4-1, to=3-3]
			\end{tikzcd}\]
			Now consider a 2-cell $\alpha:F\to G$ in $\Fib$. We define $\fb{\JFun{\alpha}}\colon = \ft{\Fffp{\alpha}}$ and $\ft{\JFun{\alpha}}$ as the unique 2-cell given by the universal property of the 2-pullback. 

			Explicitly, given $f:(X,\vec{A})\rightarrow (X',\vec{B})$ in $\FFFP{p}$ one has $\fb{\JFun{F}}(f):(\fb{F}X,\ft{F}\vec{A})\rightarrow (\fb{F}X',\ft{F}\vec{B})$ given by:
			
			\begin{align*}
				\fp{\fb{\JFun{F}}(f)} = \fb{F}\fp{f} &&
				\sp{\fb{\JFun{F}}(f)} = \sp{f} &&
				\tp{\fb{\JFun{F}}(f)} = \{\ft{F}(\tp{f})_i\}_{i\in\num{n}}
			\end{align*}		
			Moreover, given $f=(g,h):((X,\vec{A}), A_{n+1})\rightarrow ((X',\vec{B}), B_{m+1})$ in $\JDomE{p}$, one has $\ft{\JFun{F}}(f)=(\fb{\JFun{F}}(g), \ft{F}h):((\fb{F}X,\ft{F}\vec{A}), \ft{F}A_{n+1})\rightarrow ((\fb{F}X',\ft{F}\vec{B}), \ft{F}B_{n+1})$. This means that the functor does not change the length of the list, and applies the functors $\fb{F},\ft{F}$ to the corresponding components.

			\begin{lem}\label{lem:J2fun}
				$\JFree:\Fib\to\JComp$ is a 2-functor.
			\end{lem}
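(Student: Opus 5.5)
The plan is to check three things in order: that $\JFree$ sends objects to objects, 1-cells to 1-cells and 2-cells to 2-cells of $\JComp$; that it preserves composition and identities strictly, on both 1-cells and 2-cells; and that it respects vertical and horizontal composition of 2-cells.

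\textbf{Objects.} Preservation of objects is the preceding proposition, which shows that $\JFib{p}$ together with $\JFC{p}$ is a comprehension category.

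\textbf{1-cells.} Let $F:p\to q$ be a fibration morphism. I first check that $\JFun{F}=(\ft{\JFun{F}},\fb{\JFun{F}})$ is a fibration morphism.
\begin{itemize}
\item The square commutes by construction, since $\ft{\JFun{F}}$ is induced by the universal property of the 2-pullback.
\item For cartesian arrows, recall that cartesian arrows in $\JDomE{p}$ are exactly the pairs $(g,h)$ with $h$ cartesian for $p$ (equivalently, with $g$ cartesian for $\JFib{p}$). Hence $\ft{\JFun{F}}(g,h)=(\fb{\JFun{F}}g,\ft{F}h)$ is cartesian, because $\ft{F}$ preserves cartesian arrows.
\end{itemize}
Next I need the natural isomorphism $\alpha$ relating $\jcomp{}\arcmp\ft{\JFun{F}}$ and $\Arr{(\fb{\JFun{F}})}\arcmp\jcomp{}$. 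I will show that these two functors are \emph{strictly equal}, so that $\alpha$ can be taken to be the identity.
\begin{itemize}
\item On an object $((X,\vec A),A_{n+1})$, both sides give the arrow $(\id{\fb{F}X},\ini{\num n}{\num{n+1}},\{\id{\ft{F}A_i}\})$. This uses that $\ft{F}$ preserves identities and that $\fb{\JFun{F}}$ leaves the function component unchanged.
\item On an arrow $f=(g,h)$, I compare $\carr{\ft{\JFun{F}}f}=(\fb{F}\fp g,\sp g+1,\ft{F}\tp g\sqcup \ft{F}h)$ with $\fb{\JFun{F}}(\carr f)$, and they coincide componentwise.
\end{itemize}
The condition $\cod\alpha=\idtwo{}$ is then trivial.

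\textbf{2-cells.} For a 2-cell $\phi:F\nt G$, $\JFun{\phi}$ is a 2-cell of fibrations by construction. Its compatibility condition with $\alpha,\beta$ reduces, since both are identities, to the equation $\jcomp{}\ft{\JFun{\phi}}=\Arr{(\fb{\JFun{\phi}})}\jcomp{}$. I check this componentwise: at $((X,\vec A),A_{n+1})$ the component of $\ft{\JFun{\phi}}$ is $(\ft{\Fffp{\phi}}_{(X,\vec A)},\ft\phi_{A_{n+1}})$. Its image under $\jcomp{}$ has top arrow $(\fb\phi_X,\id{},\{\ft\phi_{A_i}\}_{i\le n+1})$ and bottom arrow $\ft{\Fffp{\phi}}_{(X,\vec A)}$, and this is exactly the square $\Arr{(\ft{\Fffp\phi})}$ evaluated at $\jcomp{}Y$.

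\textbf{Functoriality.} I use that $\Fp$ is itself a strict 2-functor: its action is componentwise application of $\fb F$, $\ft F$ and the components of $\phi$, so identities and composites are preserved on the nose. The top components $\ft{\JFun{F}}$ and $\ft{\JFun{\phi}}$ are then determined by the universal property of the 2-pullback. By the uniqueness clause, $\ft{\JFun{(G\arcmp F)}}=\ft{\JFun{G}}\arcmp\ft{\JFun{F}}$, $\ft{\JFun{\Id{}}}=\Id{}$, and likewise for vertical and horizontal composition of 2-cells. Composition of 1-cells in $\JComp$ also involves the isomorphisms, but $\beta*\alpha$ of two identities is the identity, so this agrees with the 1-cell $\JFun{(G\arcmp F)}$ carrying the identity isomorphism.

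\textbf{Main obstacle.} The step needing real care is the strict equality $\jcomp{}\arcmp\ft{\JFun F}=\Arr{(\fb{\JFun F})}\arcmp\jcomp{}$ on arrows. It depends on $\fb{\JFun F}$ preserving the function component $\sp{}$, so that $(\sp g)+1$ is preserved, and on the family $\tp{}$ commuting with the disjoint union $\sqcup h$. I expect this to be a direct unfolding of the definitions, but it is the only place where the specific combinatorics of $\FFFP{p}$ enter the argument.
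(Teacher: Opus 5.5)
Your proposal is correct and takes essentially the same route as the paper. You check strict preservation of comprehension componentwise, on objects and on the arrows $\carr{f}$, so the comparison isomorphism is the identity. You then derive the 2-cell equation $\jcomp{}\ft{\JFun\phi}=\Arr{(\fb{\JFun\phi})}\jcomp{}$, and obtain strict functoriality from the uniqueness part of the universal property of the 2-pullback. You also make explicit the preservation of cartesian arrows, which the paper leaves implicit.

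The only slip is the parenthetical ``equivalently, with $g$ cartesian for $\JFib{p}$''. As written it makes no sense, since $g$ is an arrow of the base $\FFFP{p}$ and it is the pair $(g,h)$ that is cartesian. Your argument only uses the correct characterization via $h$, so nothing is affected.
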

			
			\begin{proof}
				It is easy to see that $\ft{\JFun{F}}$ preserves comprehensions on the nose.
				In fact, consider an arbitrary arrow $f=(g,h):((X,\vec{A}), A_{n+1})\rightarrow ((X',\vec{B}), B_{m+1})$ in $\JDomE{p}$. Then $\JFC{q}\arcmp \ft{\JFun{F}}(f)$ is the following square:
				
				\[\begin{tikzcd}
					{(\fb{F}X,(\ft{F}\vec{A}, \ft{F}A_{n+1}))} & {(\fb{F}X',(\ft{F}\vec{B}, \ft{F}B_{m+1}))} \\
					{(\fb{F}X,\ft{F}\vec{A})} & {(\fb{F}X',\ft{F}\vec{B})}
					\arrow["\carr{\ft{\JFun{F}}(f)}", from=1-1, to=1-2]
					\arrow["\JFC{q} \ft{\JFun{F}}Y", from=1-1, to=2-1]
					\arrow["\JFC{q} \ft{\JFun{F}}Z", from=1-2, to=2-2]
					\arrow["\fb{\JFun{F}}(g)", from=2-1, to=2-2]
				\end{tikzcd}\]
				
				Instead, $\Arr{(\fb{\JFun{F}})}\arcmp \JFC{p} (f)$ is the following:
				
				\[\begin{tikzcd}
					{(\fb{F}X,(\ft{F}\vec{A}, \ft{F}A_{n+1}))} & {(\fb{F}X',(\ft{F}\vec{B}, \ft{F}B_{m+1}))} \\
					{(\fb{F}X,\ft{F}\vec{A})} & {(\fb{F}X',\ft{F}\vec{B})}
					\arrow["\fb{\JFun{F}}(\carr{f})", from=1-1, to=1-2]
					\arrow["\fb{\JFun{F}}\JFC{p} Y", from=1-1, to=2-1]
					\arrow["\fb{\JFun{F}}\JFC{p} Z", from=1-2, to=2-2]
					\arrow["\fb{\JFun{F}}(g)", from=2-1, to=2-2]
				\end{tikzcd}\]
				
				First, $\JFC{q} \ft{\JFun{F}}Y=\fb{\JFun{F}}\JFC{p} Y$ because every component is the same. Furthermore, $\carr{\ft{\JFun{F}}(f)}=\fb{\JFun{F}}(\carr{f})$ again because every component is the same. This proves that $\JFun{F}$ is a morphism of comprehension categories.
				
				Moreover, it is easy to see also that $\JFun{\alpha}:\JFun{F}\to\JFun{G}$ is a 2-cell in $\Fib$ and that the equality $\lwhisk{\JFC{q}}\ft{\JFun{\alpha}}=\Arr{(\fb{\JFun{\alpha}})}\rwhisk{\JFC{p}}$ holds, proving that it is a 2-cell in $\JComp$.
				
				Finally, the preservation of compositions and identities follows by the universal property of the 2-pullback.
			\end{proof}
			
			\begin{exa}\label{ex:cond1}
				Let $p:\ct{E}\to\ct{B}$ be a fibration with fibred terminal objects $\term{}$. Then its completion $\LFib{p}$ is an example of comprehension category with fibred terminal objects which does not satisfy the first condition of \zcref{lem:ess-im}. In fact, given $(X,())$ in $\FFFP{p}$ one has that $\jcomp{\LFib{p}}\term{\LFib{p}}(X,())$ does not have sections, since there are no functions $\num{0}\to\num{1}$.
			\end{exa}

			Now we can show that $\JFree:\Fib\to\JComp$ and $\forgJ:\JComp\to \Fib$ are a bi-adjoint pair. The first step to do this is defining unit and counit of the bi-adjunction. The naturality diagrams will not commute strictly in general, but only up to iso. Hence we need to require the unit and the counit to be pseudo-natural, meaning that we not only need to index them on 0-cells, but also on 1-cells: the latter components will be the invertible 2-cells filling the naturality squares.
			Given a fibration $p:\ct{E}\rightarrow \ct{B}$, we define a fibration morphism $\twont{\eta}_p:p\rightarrow \forgJ\JFib{p}$ using the universal property of the 2-pullback applied to the terminal functor of $\fffp{p}$ and the identity of the total category of $p$, which are respectively $\term{\fffp{p}}$ and $\Id{\ct{E}}$.
			\[\begin{tikzcd}
				{\ct{E}} && \\
				{\ct{B}} & {\JDomE{p}} & {\ct{E}} \\
				& {\FFFP{p}} & {\ct{B}}
				\arrow["p"', from=1-1, to=2-1]
				\arrow["\ft{\twont{\eta}_p}"{description}, dashed, from=1-1, to=2-2]
				\arrow["{\Id{\ct{E}}}", from=1-1, to=2-3, bend left]
				\arrow["{\term{\fffp{p}}}"', from=2-1, to=3-2]
				\arrow[from=2-2, to=2-3]
				\arrow["{\JFib{p}}", from=2-2, to=3-2]
				\arrow["p", from=2-3, to=3-3]
				\arrow["{\fffp{p}}"', from=3-2, to=3-3]
			\end{tikzcd}\]
			Furthermore, given $F:p\to q$ a fibration morphism, the 2-cell $\twont{\eta}_F$ is the identity, since $\fb{\twont{\eta}_q}\arcmp \fb{F}=\fb{\LFun{F}}\arcmp\fb{\twont{\eta}_p}$ and $\ft{\twont{\eta}_q}\arcmp \ft{F}=\ft{\LFun{F}}\arcmp\ft{\twont{\eta}_p}$. 
			
Explicitly, $\twont{\eta}_p$ acts as follows:
$\fb{\twont{\eta}_p}:\ct{B}\rightarrow\FFFP{p}$ is the fibred terminal object functor $\term{\fffp{p}}$, and $\ft{\twont{\eta}_p}:\ct{E}\rightarrow \JDomE{p}$ sends an object $A$ to itself over its basis, i.e. to $((pA,()), A)$. Its action on arrows is defined in the obvious way.
			\[\begin{tikzcd}
				{\ct{E}} & {\JDomE{p}} \\
				{\ct{B}} & {\FFFP{p}}
				\arrow["{\ft{\twont{\eta}_p}}", from=1-1, to=1-2]
				\arrow["p"', from=1-1, to=2-1]
				\arrow["{\JFib{p}}", from=1-2, to=2-2]
				\arrow["\fb{\twont{\eta}_p}"', from=2-1, to=2-2]
			\end{tikzcd}\]
	
The definition of the counit 
$\twont{\epsilon}$ of the bi-adjunction is more delicate: 
we will define the functors involved by induction on the length of the list which they apply to. 
In particular, we will construct a family of morphisms which represent the iterated comprehensions as described before, and use them to define the counit.

			\begin{defi}\label{def:comp-maps}
				Let $p:\ct{E}\rightarrow \ct{B}$ be a comprehension category. 
				First, given an object $(X,\vec{A})$ with length $n$ we define a family of arrows (and their domains) $\C{0}{k}:\fb{\twont{\epsilon}_p}(X,\vec{A}\rstn_k)\rightarrow X$ for any $k\leq n$ by induction on $k$:
				\begin{itemize}[align=left]
					\item[\emph{k=0:}] $\C{0}{0}=\id{X}:X\rightarrow X$;
					\item[\emph{k+1:}] Let $A_{k+1}^*$ be the reindexing of $A_{k+1}$ along $\C{0}{k}$. Then $\C{0}{k+1}=\C{0}{k}\arcmp\chi A_{k+1}^*$.
				\end{itemize}	
				Then one can define $\C{i}{k}:\fb{\twont{\epsilon}_p}(X,\vec{A}\rstn_k)\rightarrow \fb{\twont{\epsilon}_p}(X,\vec{A}\rstn_i)$ for $i\leq k$ by induction on $k-i$:
				\begin{itemize}[align=left]
					\item[\emph{k-i=0:}] $\C{k}{k}=\id{\fb{\twont{\epsilon}_p}(X,\vec{A}\rstn_k)}$;
					\item[\emph{k-i+1:}] $\C{i-1}{k}= \chi A_{i}^*\arcmp \C{i}{k}$.
				\end{itemize}
			\end{defi}
			
Notice that, in the above definition, there is a slight abuse of notation: 
there are two arrows denoted by $\C{0}{k}$. 
However, this is not problematic as 
one can easily see by induction that they actually coincide.
Moreover, one has that arrows $\C{i}{k}$ are well-behaved under composition in the following sense.
			
			\begin{lem}\label{lem:C-coherence}
				For any $i\leq k\leq j$, we have $\C{i}{k}\arcmp \C{k}{j} = \C{i}{j}$. 
			\end{lem}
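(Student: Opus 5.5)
The plan is a direct induction on $j-k$, mirroring the recursive definition of $\C{i}{k}$ in \zcref{def:comp-maps}. Fix $i\leq k$ and the object $(X,\vec{A})$. The second clause of the definition unfolds to $\C{i-1}{k}=\chi A_i^*\arcmp\C{i}{k}$, so $\C{i}{k}$ is the composite $\chi A_{i+1}^*\arcmp\chi A_{i+2}^*\arcmp\cdots\arcmp\chi A_k^*$. Here each $A_l^*$ is the reindexing of $A_l$ along $\C{0}{l-1}$, which lives over $\fb{\twont{\epsilon}_p}(X,\vec{A}\rstn_{l-1})$. The lemma then amounts to associativity of this composite, split at position $k$.

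First I will prove the auxiliary fact that $\C{i}{k+1}=\C{i}{k}\arcmp\chi A_{k+1}^*$ for all $i\leq k$, i.e.\ that the maps can also be built by appending a comprehension on the right. This goes by induction on $k-i$.
\begin{itemize}
\item Base case $i=k$: both sides equal $\chi A_{k+1}^*$, since $\C{k}{k+1}=\chi A_{k+1}^*\arcmp\C{k+1}{k+1}=\chi A_{k+1}^*$ and $\C{k}{k}$ is an identity.
\item Inductive step: $\C{i-1}{k+1}=\chi A_i^*\arcmp\C{i}{k+1}=\chi A_i^*\arcmp\C{i}{k}\arcmp\chi A_{k+1}^*=\C{i-1}{k}\arcmp\chi A_{k+1}^*$.
\end{itemize}

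The main claim follows by induction on $j-k$, with $i\leq k$ fixed.
\begin{itemize}
\item Base case $j=k$: this is $\C{i}{k}\arcmp\id{}=\C{i}{k}$.
\item Inductive step: using the auxiliary fact twice together with the hypothesis,
\[\C{i}{k}\arcmp\C{k}{j+1}=\C{i}{k}\arcmp\C{k}{j}\arcmp\chi A_{j+1}^*=\C{i}{j}\arcmp\chi A_{j+1}^*=\C{i}{j+1}.\]
\end{itemize}

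The only delicate point is the well-definedness issue the paper flags. The symbol $\C{0}{k}$ is defined twice, and the reindexings $A_l^*$ used in the second family must be the ones taken along the first-family $\C{0}{l-1}$. I would check that these agree by applying the auxiliary fact with $i=0$ and comparing with the first definition, by a simultaneous induction on $k$. Once this is established, all domains and codomains match and the composites above are well typed. I expect this bookkeeping to be the only real obstacle; the rest is associativity.
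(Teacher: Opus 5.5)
Your proof is correct and follows the paper's own route: first show $\C{i}{k+1}=\C{i}{k}\arcmp\jcomp{}\reind{A_{k+1}}$ by induction on $k-i$, then obtain the lemma by induction on $j-k$. Your remark on the two meanings of $\C{0}{k}$ matches the paper's note that they coincide by induction. That coincidence is not needed for the composites to be well typed, though, because the objects and the reindexings $\reind{A_l}$ are defined from the first family alone.
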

			
			\begin{proof}
				First, we want to prove that for any $i\leq k$ we have that $\C{i}{k+1}=\C{i}{k}\arcmp \jcomp{}\reind{A_{k+1}}$. This follows by a straightforward induction on $k-i$.
				
				Given that, the result follows easily by induction on $j-k$.

			\end{proof}
			
			Intuitively, the object $\fb{\twont{\epsilon}_p}(X,\vec{A})$ is given by taking the reindexing of $A_1$ along the identity, then the reindexing of $A_2$ along the comprehension of (the reindexing of) $A_1$, and so on until we get to the domain of the comprehension of (the reindexing of) $A_n$. The next lemma will let us define the action of $\fb{\twont{\epsilon}_p}$ on morphisms.
			
			\begin{lem}\label{lem:Cdef}
				Given $f:(X,\vec{A})\rightarrow(X',\vec{B})$, there is a unique family of morphisms $\{g_{i}:\fb{\twont{\epsilon}_p}(X,\vec{A})\to \fb{\twont{\epsilon}_p}(X',\vec{B}\rstn_i)\}_{i=0..m}$, where $n,m$ are respectively the lengths of $\vec{A},\vec{B}$, such that $\C{0}{i}\arcmp g_i=\fp{f}\arcmp \C{0}{n}$ and $g_i=\jcomp{}\reind{B_{i+1}}\arcmp g_{i+1}$.
				\[\begin{tikzcd}
					& {\fb{\twont{\epsilon}_p}(X',(\vec{B},B_{i+1}))} \\
					{\fb{\twont{\epsilon}_p}(X,\vec{A})} & {\fb{\twont{\epsilon}_p}(X',\vec{B})} \\
					\\
					X & {X'}
					\arrow["{\chi B_{i+1}^*}", from=1-2, to=2-2]
					\arrow["{g_{i+1}}", from=2-1, to=1-2, bend left]
					\arrow["{g_i}"', from=2-1, to=2-2]
					\arrow["{\C{0}{n}}"', from=2-1, to=4-1]
					\arrow["{\C{0}{i}}", from=2-2, to=4-2]
					\arrow["{\fp{f}}"', from=4-1, to=4-2]
				\end{tikzcd}\]
			\end{lem}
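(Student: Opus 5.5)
We construct the family $\{g_i\}$ by induction on $i$, from $i=0$ up to $i=m$, using at each step the universal property of the pullback square that $\jcomp{}$ assigns to the cartesian lifting defining $\reind{B_{i+1}}$.

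First, a caveat about the statement. Read literally, the two displayed equations do not pin down $g_{i+1}$ from $g_i$. They only say that $g_{i+1}$ is some lift of $g_i$ through $\jcomp{}\reind{B_{i+1}}$. For uniqueness to make sense, the family must also be determined by the third component $\tp{f}=\{f_i\}$, namely by its ``top'' component into $\comp{}B_{i+1}$, as in the definition of $\carr{f}$. I will add this condition explicitly and prove existence and uniqueness with respect to it.

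For the base case we must have $\C{0}{0}\arcmp g_0=\fp{f}\arcmp\C{0}{n}$. Since $\C{0}{0}=\id{X'}$, this forces $g_0\colon=\fp{f}\arcmp\C{0}{n}$, which gives both existence and uniqueness.

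For the inductive step, assume $g_i$ is defined and satisfies $\C{0}{i}\arcmp g_i=\fp{f}\arcmp\C{0}{n}$.
\begin{enumerate}
\item By definition, $\reind{B_{i+1}}$ is the reindexing of $B_{i+1}$ along $\C{0}{i}$. Its cartesian lifting is sent by $\jcomp{}$ to a pullback square with sides $\jcomp{}\reind{B_{i+1}}$, $\jcomp{}B_{i+1}$, $\C{0}{i}$ and $\comp{}\carar{(\C{0}{i})}{B_{i+1}}$.
\item Hence giving $g_{i+1}$ with $\jcomp{}\reind{B_{i+1}}\arcmp g_{i+1}=g_i$ amounts to giving a map $h_{i+1}\colon\fb{\twont{\epsilon}_p}(X,\vec{A})\to\comp{}B_{i+1}$ with $\jcomp{}B_{i+1}\arcmp h_{i+1}=\C{0}{i}\arcmp g_i=\fp{f}\arcmp\C{0}{n}$.
\item Let $k=\sp{f}(i+1)$. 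Take $h_{i+1}\colon=\comp{}\big(f_{i+1}\arcmp\carar{(\C{0}{k-1})}{A_k}\big)\arcmp\C{k}{n}$, where the cartesian arrow goes $\reind{A_k}\to A_k$ and $f_{i+1}\colon A_k\to B_{i+1}$.
\item The required equation holds because $\jcomp{}$ maps this composite to a square over $\fp{f}\arcmp\C{0}{k-1}$. Together with $\C{0}{k}=\C{0}{k-1}\arcmp\jcomp{}\reind{A_k}$ and \zcref{lem:C-coherence} ($\C{0}{k}\arcmp\C{k}{n}=\C{0}{n}$), this yields $\jcomp{}B_{i+1}\arcmp h_{i+1}=\fp{f}\arcmp\C{0}{n}$.
\end{enumerate}

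The pullback then gives a unique $g_{i+1}$ with $\jcomp{}\reind{B_{i+1}}\arcmp g_{i+1}=g_i$ and with top component $h_{i+1}$. It remains to check the first equation at $i+1$:
\[\C{0}{i+1}\arcmp g_{i+1}=\C{0}{i}\arcmp\jcomp{}\reind{B_{i+1}}\arcmp g_{i+1}=\C{0}{i}\arcmp g_i=\fp{f}\arcmp\C{0}{n}.\]
Uniqueness of the whole family follows by induction, since each $g_{i+1}$ is forced by $g_i$ together with $h_{i+1}$.

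The main obstacle is the index bookkeeping in the commutation check of step 4. One has to confirm that $\jcomp{}\reind{A_k}\arcmp\C{k}{n}=\C{k-1}{n}$, which is the recursive definition in \zcref{def:comp-maps}, and then invoke \zcref{lem:C-coherence} with the correct indices. Once that is settled, everything else is a direct application of the pullback property.
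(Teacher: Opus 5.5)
Your proof is correct and is essentially the paper's own argument: it uses the same forced base case $g_0=\fp{f}\arcmp\C{0}{n}$, and the same inductive step through the pullback square that $\jcomp{}$ assigns to the cartesian lifting of $B_{i+1}$ along $\C{0}{i}$, whose top leg is $\comp{}(\tp{f})_{i+1}$ composed with the comprehension of the cartesian lifting of $A_{\sp{f}(i+1)}$ and with $\C{\sp{f}(i+1)}{n}$. Your caveat about uniqueness is also well founded: the paper's appeal to ``the universal property of the pullback together with the other required equation'' tacitly fixes this top leg, since with only the two stated equations, any $g_m$ satisfying $\C{0}{m}\arcmp g_m=\fp{f}\arcmp\C{0}{n}$ would yield a valid family $g_i\colon=\C{i}{m}\arcmp g_m$.
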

			
			\begin{proof}
				We define $g_i$ by induction. 
				\begin{itemize}[align=left]
					\item[\emph{i=0:}] $g_0=\fp{f}\arcmp\C{0}{n}:\fb{\twont{\epsilon}_p}(X,\vec{A})\rightarrow X'$;
					\item[\emph{i+1:}] Consider the square $\chi h$ in $\ct{B}$, where $h:B_{i+1}^{*}\to B_{i+1}$ is cartesian over $\C{0}{i}$: this is a pullback. Then consider the diagram
					\[\begin{tikzcd}
						&& {} & {\fb{\twont{\epsilon}_p}(X,\vec{A}\rstn_{\sp{f}(i+1)})} \\
						{\fb{\twont{\epsilon}_p}(X,\vec{A})} && {\comp{} A_{\sp{f}(i+1)}} & {\fb{\twont{\epsilon}_p}(X,\vec{A}\rstn_{\sp{f}(i+1)-1})} \\
						& {\fb{\twont{\epsilon}_p}(X',(\vec{B}\rstn_i,B_{i+1}))} & {\comp{} B_{i+1}} & X \\
						& {\fb{\twont{\epsilon}_p}(X',\vec{B}\rstn_i)} & {X'}
						\arrow[from=1-4, to=2-3]
						\arrow["{\C{\sp{f}(i+1)-1}{\sp{f}(i+1)}}", from=1-4, to=2-4]
						\arrow["{\C{\sp{f}(i+1)}{n}}", bend left, from=2-1, to=1-4]
						\arrow["g_{i+1}"', dashed, from=2-1, to=3-2]
						\arrow["g_i"', bend right, from=2-1, to=4-2]
						\arrow["{\comp{}(\tp{f})_{i+1}}"', from=2-3, to=3-3]
						\arrow[from=2-3, to=3-4]
						\arrow["{\C{0}{\sp{f}(i+1)-1}}", from=2-4, to=3-4]
						\arrow[from=3-2, to=3-3]
						\arrow["\chi B_{i+1}^*", from=3-2, to=4-2]
						\arrow["{\chi B_{i+1}}"', from=3-3, to=4-3]
						\arrow["{\fp{f}}"', from=3-4, to=4-3]
						\arrow["{\C{0}{i}}", from=4-2, to=4-3]
					\end{tikzcd}\]
					where $g_{i+1}$ is the unique arrow given by the universal property of the pullback. 
				\end{itemize}
				By construction we have that $g_i=\jcomp{}\reind{B_{i+1}}\arcmp g_{i+1}$. By inductive hypothesis we know $\C{0}{i}\arcmp g_i=\fp{f}\arcmp \C{0}{n}$, hence $\C{0}{i+1}\arcmp g_{i+1}=\C{0}{i}\arcmp \C{i}{i+1}\arcmp g_{i+1}=\C{0}{i}\arcmp g_i=\fp{f}\arcmp \C{0}{n}$ remembering that $\C{i}{i+1}=\jcomp{}\reind{B_{i+1}}$.
				The unicity is forced for $i=0$ by the required equation $g_0=\fp{f}\arcmp\C{0}{n}$ and for $i>0$ by the universal property of the pullback together with the other required equation.
			\end{proof}
			
We use the family just described to define the action of $\twont{\epsilon}_p$ on morphisms, setting 
\[ \fb{\twont{\epsilon}_p}f\colon = g_m \] 
It is not difficult to see that, 
given two composable morphisms $f$ and $h$, we have that $\fb{\twont{\epsilon}_p}(f\arcmp h)=\fb{\twont{\epsilon}_p}f\arcmp \fb{\twont{\epsilon}_p}h$.
This is a consequence of \zcref{lem:Cdef}: it is enough to notice that the family corresponding to $f$ precomposed with $\fb{\twont{\epsilon}_p}h$ satisfies the required conditions.	Furthermore, if $f$ is the identity, then the family $\{\C{i}{n}\}_{i=0..n}$ satisfies the required conditions, hence $\fb{\twont{\epsilon}_p}(\id{(X,\vec{A})})=\C{n}{n}=\id{\fb{\twont{\epsilon}_p}(X,\vec{A})}$.
Therefore, $\fb{\twont{\epsilon}_p}:\FFFP{p}\rightarrow \ct{B}$ is indeed a functor. 

			Given $Y=((X,\vec{A}), A_{n+1})$ in $\JDomE{p}$, we define $\ft{\twont{\epsilon}_p}(Y)$ as the reindexing of $A_{n+1}$ along $\C{0}{n}$. Given also $Z=((X',\vec{B}), B_{m+1})$ and an arrow $f=(g,h):Y\to X$, we define $\ft{\twont{\epsilon}_p}f$ as the unique arrow given by cartesianity of the cartesian lifting of $\C{0}{m}$ at $B_{m+1}$ over $\fb{\twont{\epsilon}_p}g$.
			\[\begin{tikzcd}
				{\ft{\twont{\epsilon}_p}Y} & {A_{n+1}} \\
				{\ft{\twont{\epsilon}_p}Z} & {B_{m+1}} \\
				{\fb{\twont{\epsilon}_p}(X,\vec{A})} & X \\
				{\fb{\twont{\epsilon}_p}(X',\vec{B})} & {X'}
				\arrow[from=1-1, to=1-2]
				\arrow["{\ft{\twont{\epsilon}_p}f}"', dashed, from=1-1, to=2-1]
				\arrow["h", from=1-2, to=2-2]
				\arrow[from=2-1, to=2-2]
				\arrow["{\C{0}{n}}", from=3-1, to=3-2]
				\arrow["{\fb{\twont{\epsilon}_p}g}"', from=3-1, to=4-1]
				\arrow["{\fp{g}}", from=3-2, to=4-2]
				\arrow["{\C{0}{m}}"', from=4-1, to=4-2]
			\end{tikzcd}\]

			The pair $\twont{\epsilon}_p=(\fb{\twont{\epsilon}_p},\ft{\twont{\epsilon}_p})$ is a morphism of comprehension categories. In particular, comprehension is preserved on the nose.
			
			\begin{prop}
				$\twont{\epsilon}_p$, together with the identity 2-cell, is a morphism in $\JComp$.
			\end{prop}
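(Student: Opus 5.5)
To show that $\twont{\epsilon}_p=(\fb{\twont{\epsilon}_p},\ft{\twont{\epsilon}_p})$, with the identity 2-cell, is a morphism in $\JComp$ from $\JFib{p}$ to $p$, I would check three things in turn. First, the square of functors commutes: $p\arcmp\ft{\twont{\epsilon}_p}=\fb{\twont{\epsilon}_p}\arcmp\JFib{p}$. Second, $\ft{\twont{\epsilon}_p}$ is a functor that preserves cartesian arrows. Third, comprehension is preserved strictly: $\jcomp{p}\arcmp\ft{\twont{\epsilon}_p}=\Arr{(\fb{\twont{\epsilon}_p})}\arcmp\JFC{p}$. Once the third holds, the identity is a legitimate natural isomorphism $\alpha$, and the condition $\lwhisk{\cod}\alpha=\idtwo{}$ is automatic.

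The first two points are routine. On objects, $\ft{\twont{\epsilon}_p}((X,\vec{A}),A_{n+1})=\reind{(\C{0}{n})}A_{n+1}$ lies over $\fb{\twont{\epsilon}_p}(X,\vec{A})$. On arrows, $\ft{\twont{\epsilon}_p}(g,h)$ lies over $\fb{\twont{\epsilon}_p}g$ by its construction through the cartesian lifting. Functoriality follows from the uniqueness in the universal property of cartesian arrows, together with functoriality of $\fb{\twont{\epsilon}_p}$ (already established).

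For cartesianness, take $(g,h)$ cartesian in $\JDomE{p}$. Then $h$ is cartesian in $p$, since $\JDomE{p}$ is a pullback of $p$. The arrow $\ft{\twont{\epsilon}_p}(g,h)$ satisfies $\carar{(\C{0}{m})}{B_{m+1}}\arcmp\ft{\twont{\epsilon}_p}(g,h)=h\arcmp\carar{(\C{0}{n})}{A_{n+1}}$. The right-hand side is a composite of cartesian arrows, hence cartesian. By the standard two-out-of-three property for cartesian arrows (if $k\arcmp l$ and $k$ are cartesian, so is $l$), $\ft{\twont{\epsilon}_p}(g,h)$ is cartesian.

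The main work is the third point. On objects, with $Y=((X,\vec{A}),A_{n+1})$, the arrow $\JFC{p}Y:(X,(\vec{A},A_{n+1}))\to(X,\vec{A})$ has $\fp{\cdot}=\id{X}$, $\sp{\cdot}$ the inclusion $\num{n}\hookrightarrow\num{n+1}$, and identity components. I would compute $\fb{\twont{\epsilon}_p}$ of it using \zcref{lem:Cdef}: the family $g_i=\C{i}{n+1}$ for $i\le n$ satisfies the required equations by \zcref{lem:C-coherence}, since $\C{0}{i}\arcmp\C{i}{n+1}=\C{0}{n+1}$. By uniqueness, $\fb{\twont{\epsilon}_p}(\JFC{p}Y)=\C{n}{n+1}=\jcomp{}\reind{A_{n+1}}=\jcomp{p}(\ft{\twont{\epsilon}_p}Y)$. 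On arrows $f=(g,h)$, I must show $\comp{}(\ft{\twont{\epsilon}_p}f)=\fb{\twont{\epsilon}_p}(\carr{f})$. Write $\carr{f}=(\fp{g},\sp{g}+1,\tp{g}\sqcup h)$. Again by \zcref{lem:Cdef}, $\fb{\twont{\epsilon}_p}(\carr{f})$ is the last member $g'_{m+1}$ of the unique family associated to $\carr{f}$, and it is determined by the pullback $\jcomp{}$ of the cartesian lifting of $\C{0}{m}$ at $B_{m+1}$. Concretely, it is the unique arrow into $\comp{}\reind{B_{m+1}}$ whose composite with $\jcomp{}\reind{B_{m+1}}$ is $\fb{\twont{\epsilon}_p}g$ and whose composite with $\comp{}\carar{(\C{0}{m})}{B_{m+1}}$ is $\comp{}h\arcmp\comp{}\carar{(\C{0}{n})}{A_{n+1}}$. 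Here I use $\sp{g}+1$ at $m+1$ equals $n+1$, so $\C{n+1}{n+1}=\id{}$.

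It then suffices to check that $\comp{}(\ft{\twont{\epsilon}_p}f)$ satisfies both equations. The first is the functoriality of $\jcomp{}$ applied to the fact that $\ft{\twont{\epsilon}_p}f$ lies over $\fb{\twont{\epsilon}_p}g$. The second is $\comp{}$ applied to the defining equation of $\ft{\twont{\epsilon}_p}f$. The only delicate part is bookkeeping: identifying the earlier members $g'_i$ of the family for $\carr{f}$ with those for $g$, so that $g'_m=\fb{\twont{\epsilon}_p}g$. This follows from uniqueness in \zcref{lem:Cdef}, because the families for $g$ and $\carr{f}$ agree on indices $\le m$ after precomposition with $\jcomp{}\reind{A_{n+1}}$. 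Naturality of the identity 2-cell is then trivial.
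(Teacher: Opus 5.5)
Your proof is correct and has the same skeleton as the paper's. Cartesianness comes from $\carar{(\C{0}{m})}{B_{m+1}}\arcmp\ft{\twont{\epsilon}_p}(g,h)=h\arcmp\carar{(\C{0}{n})}{A_{n+1}}$. Strict preservation of comprehension is split into an object identity (\zcref{lem:jcoun1}) and an arrow identity (\zcref{lem:jcoun2}), and the arrow identity is settled by the universal property of the pullback $\jcomp{}\carar{(\C{0}{m})}{B_{m+1}}$.

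You differ in how the two identities are proved. For objects, the paper inducts on $n$. You instead read off $\fb{\twont{\epsilon}_p}(\JFC{p}Y)=\C{n}{n+1}$ from uniqueness in \zcref{lem:Cdef}, applied to the family $\{\C{i}{n+1}\}_{i\le n}$ together with \zcref{lem:C-coherence}. This is shorter, and it is the same device the paper uses to show that $\fb{\twont{\epsilon}_p}$ preserves identities. For arrows, the paper gets the bottom leg of $\fb{\twont{\epsilon}_p}(\carr{f})$ from three ingredients: functoriality of $\fb{\twont{\epsilon}_p}$, the equation $\JFC{p}Z\arcmp\carr{f}=g\arcmp\JFC{p}Y$, and \zcref{lem:jcoun1}. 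You instead compare the family of $\carr{f}$ with the family of $g$ precomposed with $\jcomp{}\reind{A_{n+1}}$. The paper's route is slightly more economical, since functoriality is already available; yours does not rely on it. You also make explicit the functoriality of $\ft{\twont{\epsilon}_p}$, which the paper leaves tacit.

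Two points need tightening.

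\emph{A typing slip.} The composite of $\fb{\twont{\epsilon}_p}(\carr{f})$ with $\jcomp{}\reind{B_{m+1}}$ is $\fb{\twont{\epsilon}_p}g\arcmp\jcomp{}\reind{A_{n+1}}$, not $\fb{\twont{\epsilon}_p}g$, because the domains differ. So the correct statement is $g'_m=\fb{\twont{\epsilon}_p}g\arcmp\jcomp{}\reind{A_{n+1}}$. Your last sentence shows this is what you mean, and it is exactly the bottom leg supplied by the square $\jcomp{}(\ft{\twont{\epsilon}_p}f)$.

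\emph{The uniqueness you invoke.} The two equations displayed in \zcref{lem:Cdef} do not by themselves force uniqueness. Each $g_{i+1}$ is fixed by the pullback only once its composite with $\comp{}\carar{(\C{0}{i})}{B_{i+1}}$ is also prescribed, namely as $\comp{}(\tp{f})_{i+1}\arcmp\comp{}\carar{(\C{0}{j-1})}{A_{j}}\arcmp\C{j}{n}$ with $j=\sp{f}(i+1)$. Every time you use uniqueness you must check this leg too. Both checks go through:
\begin{itemize}
\item For $\{\C{i}{n+1}\}$ it holds trivially, since $\JFC{p}Y$ has identity components and $\sp{\JFC{p}Y}$ is the inclusion.
\item For $\{g_i\arcmp\jcomp{}\reind{A_{n+1}}\}_{i\le m}$ it follows from $\C{j}{n}\arcmp\C{n}{n+1}=\C{j}{n+1}$ (\zcref{lem:C-coherence}), because $\carr{f}$ agrees with $g$ in its first $m$ components.
\end{itemize}
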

			
			\begin{proof}
				First we need to show that it is a morphism of fibrations, so that it preserves cartesian morphisms. Consider a cartesian arrow $(f,g):((X,\vec{A}),A_{n+1})\to ((Y,\vec{B}), B_{m+1})$. We have that $g:A_{n+1}\to B_{m+1}$ is cartesian (over $\fp{f}$). Given then $h:\ft{\twont{\epsilon}_p}((X,\vec{A}),A_{n+1})\to A_{n+1}$ and $k:\ft{\twont{\epsilon}_p}((Y,\vec{B}), B_{m+1})\to B_{m+1}$ the cartesian liftings of, respectively, $\C{0}{n}$ and $\C{0}{m}$, we have that $\ft{\twont{\epsilon}_p}f$ is given by cartesianity of $k$. Then $\C{0}{m}\arcmp\ft{\twont{\epsilon}_p}f=\fp{f}\arcmp\C{0}{n}$ is cartesian, and we conclude that $\ft{\twont{\epsilon}_p}f$ is cartesian as well.
				
				Now we only need to show that it preserves comprehensions strictly, i.e. $(\Arr{(\fb{\twont{\epsilon}_p})}\arcmp\JFC{p})=(\jcomp{p}\arcmp \ft{\twont{\epsilon}_p})$. This is a straightforward consequence of \zcref{rmk:lemma}, \zcref{lem:jcoun1} and \zcref{lem:jcoun2}.
			\end{proof}
			
			We also need to define the counit indexed by morphisms of comprehension categories. Consider $F:p\to q$, together with $\alpha:(\Arr{(\fb{F})}\arcmp \jcomp{p})\nt (\jcomp{q}\arcmp \ft{F})$ a morphism in $\JComp$, $((X,\vec{A}), A_{n+1})$ an object in $\JDomE{p}$, and fix a cleavage of $p$ and $q$, respectively. We define simultaneously $(\fb{\Jcoun{F}})_{(X,\vec{A})}$ and $(\ft{\Jcoun{F}})_{((X,\vec{A}), A_{n+1})}$ by induction on the length $n$ of $\vec{A}$. For $1\leq k\leq n$, consider the families of cartesian arrows $i_k:A_k^*\to A_k$ over $\C{0}{k-1}$ and $j_k:(\ft{F}A_{k})^*\to \ft{F}A_k$ cartesian over $d^0_{k-1}$, where we denote by $d^i_j$ the maps defined in \zcref{def:comp-maps} w.r.t. $q$. 
			
			For $n=0$, we set $(\fb{\Jcoun{F}})_{(X,())}\colon=\id{\fb{F}X}$ and $(\ft{\Jcoun{F}})_{((X,()), A_{1})}$ as the unique vertical arrow obtained by cartesianity. Notice that $(\ft{\Jcoun{F}})_{((X,()), A_{1})}$ is over $(\fb{\Jcoun{F}})_{(X,())}$. The latter is trivially invertible, being the identity. The former is invertible since $\ft{F}$ preserves cartesianity.
			\[\begin{tikzcd}
				{\ft{F}A_1^*} \\
				{(\ft{F}A_1)^*} & {\ft{F}A_1} \\
				{\fb{F}X} & {\fb{F}X}
				\arrow["{(\ft{\Jcoun{F}})_{((X,()), A_{n+1})}}"', dashed, from=1-1, to=2-1]
				\arrow["{\ft{F}i_1}", from=1-1, to=2-2]
				\arrow["{j_1}"', from=2-1, to=2-2]
				\arrow["{\id{\fb{F}X}}", from=3-1, to=3-2]
			\end{tikzcd}\]
				
			For $n+1$, we set $(\fb{\Jcoun{F}})_{(X,\vec{A})}\colon=\comp{q}(\ft{\Jcoun{F}})_{((X,\vec{A}\rstn_n), A_{n+1})}\arcmp\psi_{A_{n+1}^*}$, where $\psi:\fb{F}\comp{p}\nt\comp{q}\ft{F}$ is $\lwhisk{\dom}\alpha$. It is iso since it is composition of two invertible morphisms ($(\ft{\Jcoun{F}})_{((X,\vec{A}\rstn_n), A_{n+1})}$ is iso by inductive hypothesis). Then we set $(\ft{\Jcoun{F}})_{((X,\vec{A}), A_{n+2})}$ as the unique arrow over $(\fb{\Jcoun{F}})_{(X,\vec{A})}$ given by cartesianity. Again, this is invertible because $\ft{F}$ preserves cartesian arrows.
			\[\begin{tikzcd}
				& {\ft{F}A_{n+2}^*} \\
				& {(\ft{F}A_{n+2})^*} & {\ft{F}A_{n+2}} \\
				& {\fb{F}\comp{p}A_{n+1}^*} \\
				{\comp{q}\ft{F}A_{n+1}^*} && {\fb{F}X} \\
				& {\comp{q}(\ft{F}A_{n+1})^*}
				\arrow["{{(\ft{\Jcoun{F}})_{((X,\vec{A}), A_{n+2})}}}"', dashed, from=1-2, to=2-2]
				\arrow["{{\ft{F}i_{n+2}}}", from=1-2, to=2-3]
				\arrow["{{j_{n+2}}}"', from=2-2, to=2-3]
				\arrow["\psi_{A_{n+1}^*}"', from=3-2, to=4-1]
				\arrow["{{\fb{F}\C{0}{n+1}}}", from=3-2, to=4-3]
				\arrow["{{(\fb{\Jcoun{F}})_{(X,\vec{A})}}}"', from=3-2, to=5-2]
				\arrow["{{\comp{q}(\ft{\Jcoun{F}})_{((X,\vec{A}\rstn_n), A_{n+1})}}}"', from=4-1, to=5-2]
				\arrow["{{d^0_{n+1}}}"', from=5-2, to=4-3]
			\end{tikzcd}\]
			
			Let us show that the downside diagram commutes, which allows us to use cartesianity. By \zcref{lem:C-coherence} and using the definition of $\C{0}{n+1}$, we know that $\C{0}{n}\arcmp \jcomp{p}A_{n+1}^*=\C{0}{n+1}$, and analogously $d^0_n\arcmp \jcomp{q}(\ft{F}A_{n+1})^*=d^0_{n+1}$.
			Furthermore, we have $\jcomp{q}\ft{F}A_{n+1}^* \arcmp \psi_{A_{n+1}^*} = \fb{F}\jcomp{p}A_{n+1}^*$ since it is a component of $\alpha$, and $\jcomp{q}(\ft{F}A_{n+1}^*) \arcmp \comp{q}(\ft{\Jcoun{F}})_{((X,\vec{A}\rstn_n),A_{n+1})} = (\fb{\Jcoun{F}})_{(X,\vec{A}\rstn_n)} \arcmp \jcomp{q}\ft{F}A_{n+1}^*$ since it is image of $(\ft{\Jcoun{F}})_{((X,\vec{A}\rstn_n),A_{n+1})}$ under $\jcomp{q}$.
			Finally, we have $d^0_n\arcmp(\fb{\Jcoun{F}})_{(X,\vec{A}\rstn_n)}=\fb{F}\C{0}{n}$ by inductive hypothesis. These equalities let us conclude that the diagram below commutes.
			\[\begin{tikzcd}
				{\fb{F}\comp{p}A_n^*} & {\fb{F}\comp{p}A_{n+1}^*} & {\comp{q}\ft{F}A_{n+1}^*} \\
				{\comp{q}(\ft{F}A_n)^*} & {\fb{F}X} & {\comp{q}(\ft{F}A_{n+1})^*}
				\arrow["{(\fb{\Jcoun{F}})_{(X,\vec{A}\rstn_n)}}"', from=1-1, to=2-1]
				\arrow["{\fb{F}\C{0}{n}}"{description}, from=1-1, to=2-2]
				\arrow["{\fb{F}\jcomp{p}A_{n+1}^*}"', from=1-2, to=1-1]
				\arrow["{\psi_{A_{n+1}^*}}", from=1-2, to=1-3]
				\arrow["{\fb{F}\C{0}{n+1}}", from=1-2, to=2-2]
				\arrow["{\jcomp{q}\ft{F}A_{n+1}^*}"', bend right, from=1-3, to=1-1]
				\arrow["{\comp{q}(\ft{\Jcoun{F}})_{((X,\vec{A}\rstn_n),A_{n+1})}}", from=1-3, to=2-3]
				\arrow["{d^0_n}"', from=2-1, to=2-2]
				\arrow["{\jcomp{q}(\ft{F}A_{n+1})^*}", bend left, from=2-3, to=2-1]
				\arrow["{d^0_{n+1}}", from=2-3, to=2-2]
			\end{tikzcd}\]

\begin{rem}
	The construction of $\Jcoun{F}$ shows that the naturality square of $\Jcoun{p}$ does not necessarily commute even if $F$ preserves comprehension on the nose. In fact, in order to prove that $\Jcoun{F}$ is the identity, one would also need that $F$ strictly preserves the cleavage.
\end{rem}

We are finally ready to prove the main result of this section.			
			
			\begin{thm}\label{thm:jfree}
				The 2-functor $\JFree$ is left bi-adjoint to the 2-functor $\forgJ$.
			\end{thm}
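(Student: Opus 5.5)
We prove the biadjunction $\JFree\dashv\forgJ$ through its unit--counit presentation. The data are already in place: the pseudo-natural unit $\twont{\eta}$, whose 1-cell components are $\twont{\eta}_F=\mathrm{id}$, and the pseudo-natural counit $\twont{\epsilon}$, whose 1-cell components $\Jcoun{F}$ were constructed by induction on the length of lists. The proof then has three parts. First, check that $\twont{\eta}$ and $\twont{\epsilon}$ are genuinely pseudo-natural transformations. Second, give invertible modifications (the triangulators) $\Phi:\twont{\epsilon}\JFree\arcmp\JFree\twont{\eta}\Rrightarrow\mathrm{id}_{\JFree}$ and $\Psi:\forgJ\twont{\epsilon}\arcmp\twont{\eta}\forgJ\Rrightarrow\mathrm{id}_{\forgJ}$. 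Third, verify the swallowtail identities. In practice many triangulator components will be identities or canonical cartesian-comparison isos, so the swallowtail equations reduce to uniqueness of factorizations through cartesian arrows.

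For pseudo-naturality of $\twont{\eta}$, we use that $\JFree$ preserves everything on the nose. Both composites $\ft{\twont{\eta}_q}\arcmp\ft{F}$ and $\ft{\JFun{F}}\arcmp\ft{\twont{\eta}_p}$ send $A$ to $((\fb{F}pA,()),\ft{F}A)$. The identity 2-cells are therefore coherent, and this coherence follows from the universal property of the 2-pullback.

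For $\twont{\epsilon}$, we first show that each $\Jcoun{F}$ is natural. The naturality of its base component $(\fb{\Jcoun{F}})$ is proved by induction on the length $n$, using the uniqueness clause of \zcref{lem:Cdef}. Naturality of the top component then follows by the uniqueness of liftings through the cartesian arrows $j_k$. Next we check the 2-cell condition of $\JComp$ against $\alpha$; this is immediate from the defining formula $(\fb{\Jcoun{F}})_{(X,\vec{A})}=\comp{q}(\ft{\Jcoun{F}})\arcmp\psi$. We also check compatibility with composition of 1-cells and with 2-cells $\phi$. Each of these is an equation between arrows into a reindexed object, so each is settled by showing that both sides lie over the same base arrow and agree after postcomposition with a cartesian arrow. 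The base parts are handled by the same induction on length.

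For the triangulators, take $\Psi_{(p,\chi)}$ first. The composite $\forgJ\twont{\epsilon}_p\arcmp\twont{\eta}_p$ sends $X$ to $\fb{\twont{\epsilon}_p}(X,())=X$ and $A$ to $\reind{(\id{X})}A$. So $\Psi$ is given by $\id{}$ on the base and by the canonical vertical iso $\reind{(\id{X})}A\cong A$ on top, which is the identity when the cleavage is split.

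For $\Phi_p$, the composite $\twont{\epsilon}_{\JFib{p}}\arcmp\JFun{\twont{\eta}_p}$ sends $(X,\vec{A})$ to the iterated comprehension in $\JFib{p}$ of the singleton-context versions $((X,()),A_i)$. Concretely, one reindexes $A_{k}$ along $\C{0}{k-1}$ in $\FFFP{p}$; since $\C{0}{k-1}$ has identity first component, this yields $(X,(\vec{A}\rstn_{k-1},\reind{(\id{X})}A_k))$ up to the cartesian liftings chosen for $\JFib{p}$. Iterating gives $(X,\vec{A}')$ with $A'_i\cong A_i$ vertically. We take $\Phi$ to be this vertical iso, built explicitly as $(\id{X},\id{\num{n}},\{\text{canonical isos}\})$.

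I expect the main obstacle to be this component $\Phi$. To handle it, first prove by induction on $k$ that the maps $\C{i}{k}$ computed in $\JFib{p}$ are $(\id{X},\ini{\num{i}}{\num{k}},\{\text{canonical isos}\})$. From this, the top component of $\Phi$ is again a cartesian-factorization iso. Finally, check the modification condition for $\Phi$ against the $\Jcoun{\JFun{F}}$, together with the swallowtails. This again reduces to the components $\tp{}$ being uniquely determined by cartesianity, so both sides coincide.
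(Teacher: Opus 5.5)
Your plan is correct and follows the paper's proof: it establishes pseudo-naturality of $\twont{\eta}$ (in fact strict, with identity components) and of $\twont{\epsilon}$, and supplies two invertible triangulators, of which your $\Psi$ is exactly the paper's $\alpha_p$ built from $\delta^{-1}:\reind{(\id{X})}A\cong A$. The one real difference is your $\Phi$: \zcref{prop:trian-id-2} asserts $\twont{\epsilon}_{\JFib{p}}\arcmp\JFun{\twont{\eta}_p}=\id{\JFib{p}}$ on the nose, but with the cleavage $\JFib{p}$ inherits from $p$ this holds only when identities lift to identities, whereas your canonical iso $(\id{X},\id{\num{n}},\{\reind{(\id{X})}A_i\cong A_i\})$ works for any cleavage (and the swallowtail checks you add are not needed for the statement, since a unit, a counit and invertible triangulators already determine a biadjunction).
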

			
			\begin{proof}
				The pseudo-naturality of unit and counit are proved in \zcref{prop:junit} and \zcref{prop:jcounit}. Triangular identities are shown in \zcref{prop:trian-id-1} and \zcref{prop:trian-id-2}.
			\end{proof}

\section{From Jacobs to Lawvere-Ehrhard comprehension categories}
\label{sect:JtoL} 

Our next goal is to describe a universal construction turning any comprehension category into a Lawvere-Ehrhard one. 
Taking inspiration from the characterization we have proved in \zcref{sect:lcomp-vs-jcomp}, 
we will do this in two steps. 
First, we will show how to freely add fibred terminal objects to a comprehension category in such a way that they are preserved by the comprehension functor, i.e., their comprehension is an isomorphism. 
This corresponds to extending a dependent type theory with a unit type in every context inhabited by a unique term. 
Then, we will describe how to turn a comprehension category with fibred terminal objects preserved by the comprehension functor into a Lawvere-Ehrhard comprehension category. This construction is the one that forces the isomorphism between terms and global type morphisms that characterizes Lawvere-Ehrhard comprehension categories. 
As in the previous section, these free constructions are first defined at the level of objects, then extended to the considered 2-categories and finally proved to be left bi-adjoint using the triangular identities. In particular, we will use the fact that the forgetful functor $\LtoJ:\LComp\to\JComp$ factors through the forgetful functor $\forgT:\JTComp\to\JComp$. To improve readability, auxiliary technical lemmas are postponed to \zcref{app:B}. 

Let us start by introducing comprehension categories with terminals.

\begin{defi}
	A \define{comprehension category with terminals} is a comprehension category whose underlying fibration $p$ has fibred terminal objects and these are preserved by $\jcomp{}$.
\end{defi}

This means that a comprehension category with terminals is such that the natural transformation $\jcomp{}\term{}:\comp{}\term{}\nt \id{\ct{B}}$, given by the comprehension of fibred terminal objects, is invertible.

Comprehension categories with terminals can be organized into a 2-category $\JTComp$ which is the 2-full 2-subcategory of $\JComp$ on morphisms which preserve fibred terminal objects. Hence, there is a forgetful 2-functor $\forgT:\JTComp\to\JComp$ which forgets the presence of fibred terminal objects.

Note that by \zcref{prop:comp-term-inv} the 2-functor $\LtoJ:\LComp\to\JComp$ maps a LECC to a comprehension category with terminals. It also maps a 1-cell $F$ of $\LComp$ to a 1-cell of $\JTComp$, since $F$ preserves fibered terminal objects when seen as a morphism of fibrations. Then it maps 2-cells of $\LComp$ to 2-cells of $\JTComp$, since the latter is 2-full in $\JComp$. 
These facts imply that $\LtoJ:\LComp\to\JComp$ factors through $\forgT:\JTComp\to\JComp$, giving rise to a forgetful 2-functor $\forgLE:\LComp\to\JTComp$.
Then,  in the rest of this section, we will describe the left bi-adjoints of 
$\forgT$ and $\forgLE$, 
thus obtaining by composition a left bi-adjoint of $\LtoJ$.

\subsection{The free comprehension category with terminals}

The process that freely adds fibred terminals to a comprehension category is easy to understand.
In fact, it is enough to add a formal terminal object to each fibre, and set its comprehension equal to the identity.
More precisely, given a comprehension category $p:\ct{E}\to\ct{B}$, 
we first consider the category $\JTDomE{p}$  obtained by adding to $\ct{E}$ an object $\term{}X$  for every object $X$ in $\ct{B}$ and 
by setting $\Hom{\JTDomE{p}}(\term{}X,\term{}Y) = \Hom{\ct{B}}(X,Y)$  and $\Hom{\JTDomE{p}}(A, \term{}X) = \Hom{\ct{B}}(pA, X)$, where $A$ ranges over objects of $\ct{E}$.
Note that morphisms with domain $\term{}X$ exist only if the codomain is of the form $\term{}Y$.
Equivalently, arrows into an object from $\ct{E}$ must come from $\ct{E}$ themselves.
Composition and identities in $\JTDomE{p}$ are defined in the obvious way, notably, 
if $f : A\to B$ is an arrow from $\ct{E}$ and $g : B \to \term{}X$ is a new arrow, we have 
$g\arcmp f = g\arcmp p(f)$, i.e., we use the composition of $\ct{B}$. 
Then, we consider the functor $\JTFib{p} : \JTDomE{p}\to\ct{B}$ defined as follows: 
on objects and arrows from $\ct{E}$ it acts as $p$, 
it maps an object $\term{}X$ to $X$ and  a morphism $f \in \Hom{\JTDomE{p}}(Q,\term{} X) = \Hom{\ct{B}}(\JTFib{p}(Q),X)$ to itself. 
Notice also that the assignment $X\mapsto\term{} X$ trivially extends to a functor 
$\term{} : \ct{B} \to \JTDomE{p}$. 

\begin{prop}
The functor $\JTFib{p} : \JTDomE{p} \to \ct{B}$ is a fibration and $\term{} : \ct{B} \to \JTDomE{p}$ is a fibred terminal object functor. 
\end{prop}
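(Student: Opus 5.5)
The plan is to verify the two claims directly from the explicit description of $\JTDomE{p}$, splitting every argument according to whether the objects involved come from $\ct{E}$ or are new objects of the form $\term{}X$.

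First I would check that $\JTFib{p}$ is a well-defined functor. On $\ct{E}$ it agrees with $p$. On new arrows it is the identity on hom-sets $\Hom{\ct{B}}(\JTFib{p}Q,X)$. Functoriality on mixed composites $g\arcmp f$, with $f$ from $\ct{E}$ and $g$ new, holds because the composite is by definition $g\arcmp p(f)$.

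Next I would construct cartesian liftings. Take $f:X\to Y$ in $\ct{B}$ and an object over $Y$. If the object is some $B$ from $\ct{E}$, I take the $p$-cartesian lifting $\carar{f}{B}:\reind{f}B\to B$. If the object is $\term{}Y$, I take $f$ itself, seen as an arrow $\term{}X\to\term{}Y$. I then verify cartesianity by cases on the domain $Q$ of a competitor $g':Q\to\text{codomain}$:
\begin{itemize}
\item In the first case, $g':Q\to B$ has codomain in $\ct{E}$, so $Q$ must lie in $\ct{E}$, because no arrow goes from a new object into an old one. Any factorisation $h:Q\to\reind{f}B$ also lies in $\ct{E}$. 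Existence and uniqueness then reduce exactly to $p$-cartesianity of $\carar{f}{B}$.
\item In the second case, arrows $Q\to\term{}X$ are precisely arrows $\JTFib{p}Q\to X$ in $\ct{B}$. So given $v$ with $f\arcmp v=\JTFib{p}(g')$, the unique $h$ over $v$ is $v$ itself, and $g'=f\arcmp h$ holds by the composition rule. Uniqueness holds because $\JTFib{p}$ is injective on these hom-sets.
\end{itemize}

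Finally I would show that $\term{}$ is a fibred terminal object functor, using the Remark's characterisation. First, $\JTFib{p}\arcmp\term{}=\Id{\ct{B}}$ by definition. Second, for any $Q$ and any $X$, the map $\Hom{\JTDomE{p}}(Q,\term{}X)\to\Hom{\ct{B}}(\JTFib{p}Q,X)$ given by $\JTFib{p}$ is the identity. This bijection is natural, so it exhibits the adjunction $\JTFib{p}\dashv\term{}$ with identity counit, and hence $\term{}$ is right adjoint right inverse. Equivalently, I can check directly that for each $u:\JTFib{p}Q\to X$ there is a unique arrow $Q\to\term{}X$ over $u$.

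I expect the main obstacle to be the case analysis in the cartesianity proof for liftings at old objects. The key point is that no arrow goes from $\term{}X$ into $\ct{E}$, so no new arrows can interfere with the universal property inherited from $p$. I would also check associativity of mixed composition once, to confirm that $\JTDomE{p}$ is a category.
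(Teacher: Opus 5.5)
Your proposal is correct and follows essentially the same argument as the paper. Liftings at objects of $\ct{E}$ are inherited from $p$ because no arrow goes from a new object into an old one. The lifting at $\term{}Y$ is $f$ itself, unique because $\JTFib{p}$ is the identity on the new hom-sets. Finally, $\term{}$ is right adjoint right inverse to $\JTFib{p}$ via $\Hom{\JTDomE{p}}(Q,\term{}X)=\Hom{\ct{B}}(\JTFib{p}Q,X)$; your additional checks of functoriality and associativity are steps the paper leaves implicit.
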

\begin{proof}
Consider $f:X\to Y$ in $\ct{B}$. 
Cartesian liftings of $f$ at objects $A$ from $\ct{E}$ are exactly those of $p$, as the only arrows in $\JTDomE{p}$ into $A$ are those coming from $\ct{E}$. 
We now show that the cartesian lifting of $f$ at $\term{} Y$ is $f$ itself. 
In fact, consider $g$ in a $\Hom{\JTDomE{p}}(Q,\term{}Y) = \Hom{\ct{B}}(\JTFib{p}Q,Y)$ and $h:\JTFib{p}Q\to X$ such that $f\arcmp h = g$. 
By definition of $\JTDomE{p}$ and $\JTFib{p}$, the only morphism over $h$ with respect to $\JTFib{p}$ is $h$ itself, which trivially makes  the required triangle commute. 

To check that $\term{}$ is a fibred terminal object functor, it suffices to verify that it is a right adjoint right inverse of $\JTFib{p}$, but this is trivial because we have $\JTFib{p}(\term{} X) = X$ and $\Hom{\JTDomE{p}}(Q,\term{}X) = \Hom{\ct{B}}(\JTFib{p}Q,X)$. 
\end{proof} 

We can also define 
a functor $\jcomp{\JTFib{p}}:\JTDomE{p}\to\Arr{\ct{B}}$ as follows: 
on objects and arrows from $\ct{E}$ it acts as $\jcomp{p}$, 
it maps an object $\term{}X$ to the identity $\id{X} : X \to X$, and 
it maps an arrow $f \in \Hom{\JTDomE{p}}(Q,\term{}X) = \Hom{\ct{B}}(\JTFib{p}Q,X)$ to the commutative square 
\[\begin{tikzcd}
  \comp{\JTFib{p}}Q 
  \ar[r, "f\arcmp\jcomp{\JTFib{p}}A"]
  \ar[d, "\jcomp{\JTFib{p}}A"']
& X 
  \ar[d, "\id{X}"]
\\
  \JTFib{p}Q 
  \ar[r, "f"'] 
& X 
\end{tikzcd}\]

\begin{prop}
$(\JTFib{p},\jcomp{\JTFib{p}})$ is a comprehension category with terminals. 
\end{prop}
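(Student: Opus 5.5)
We must check three things: that $\jcomp{\JTFib{p}}$ is a well-defined functor, that $\cod\arcmp\jcomp{\JTFib{p}}=\JTFib{p}$, and that it preserves cartesian arrows. Preservation of fibred terminal objects is then immediate, since $\jcomp{\JTFib{p}}\term{}X=\id{X}$ is an isomorphism. The fibration and terminal-object structure are already provided by the previous proposition.

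\emph{Functoriality.} On the copy of $\ct{E}$ the functor coincides with $\jcomp{p}$, so functoriality there is inherited. For the new arrows into $\term{}X$, the assigned square commutes trivially: $\id{X}\arcmp (f\arcmp\jcomp{}Q)=f\arcmp\jcomp{}Q$. Note that for $Q=\term{}Y$ the top side reduces to $f$ itself, because $\jcomp{}\term{}Y=\id{Y}$. For composition, take $g:Q\to\term{}X$ and $h:Q'\to Q$. Since the composite $g\arcmp h$ is computed as $g\arcmp\JTFib{p}(h)$ in $\ct{B}$, we must show
\[
g\arcmp\JTFib{p}(h)\arcmp\jcomp{}Q'=g\arcmp\jcomp{}Q\arcmp\comp{}h .
\]
This follows from the commutativity of the square $\jcomp{}h$, namely $\jcomp{}Q\arcmp\comp{}h=\JTFib{p}(h)\arcmp\jcomp{}Q'$. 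This argument covers both cases, $h$ from $\ct{E}$ and $h$ a new arrow. Identities on $\term{}X$ go to $(\id{X},\id{X})$. The equation $\cod\arcmp\jcomp{}=\JTFib{p}$ holds by construction.

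\emph{Cartesian arrows.} In the previous proposition, cartesian liftings were computed as those of $p$ at objects of $\ct{E}$, and as $f$ itself at $\term{}Y$. Since cartesian arrows are unique up to vertical isomorphism, every cartesian arrow is either a cartesian arrow of $p$ or an arrow $\term{}X\to\term{}Y$ composed with a vertical isomorphism.
\begin{itemize}
\item At objects of $\ct{E}$: we must make sure that arrows of $\ct{E}$ cartesian for $p$ stay cartesian in $\JTDomE{p}$, which holds since no new arrows enter objects of $\ct{E}$. Their images are then pullbacks by the hypothesis on $\jcomp{p}$.
\item At $\term{}Y$: a vertical isomorphism $Q\cong\term{}Y$ forces $Q$ to be $\term{}$-shaped, since no arrows leave $\term{}X$ towards $\ct{E}$. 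So the cartesian arrow is $f:\term{}X\to\term{}Y$, and its image is the square with top side $f$, bottom side $f$ and identity vertical sides. This square is trivially a pullback.
\end{itemize}

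The main obstacle is this classification of cartesian arrows in $\JTDomE{p}$. If an arrow $Q\to\term{}Y$ with $Q$ in $\ct{E}$ were cartesian, its image would be the square $(f\arcmp\jcomp{}Q,f)$ against $\id{}$, which is a pullback only when $\jcomp{}Q$ is invertible. I expect to rule this case out using the uniqueness part of cartesianness. If it cannot be excluded in general, I would instead argue directly: such a $Q$ would be vertically isomorphic to $\term{}X$, which the hom-set structure forbids. In that case only the two cases above remain, and both have already been handled.
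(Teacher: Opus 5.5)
Your proposal is correct and follows the paper's proof. Fibred terminals go to identities. The cartesian arrows are either $p$-cartesian arrows of $\ct{E}$, which are sent to pullbacks by hypothesis, or arrows $\term{}X\to\term{}Y$, which are sent to trivially pullback squares.

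You also check functoriality and justify the classification of cartesian arrows, both of which the paper leaves implicit. Your vertical-isomorphism argument already rules out the case $A\to\term{}Y$ with $A$ in $\ct{E}$, so the hedging in your last paragraph is unnecessary. Note that it is the existence part of cartesianness that fails there, not uniqueness: factoring the same base arrow, viewed as $\term{}pA\to\term{}Y$, along $\mathrm{id}$ would need an arrow $\term{}pA\to A$, and none exists.
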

\begin{proof}
Since we already know that $\jcomp{\JTFib{p}}$ maps fibred terminal objects to isomorphisms, it suffices to check that it maps cartesian arrows to pullbacks. 
This is trivial for cartesian arrows coming from $\ct{E}$ as on those $\jcomp{\JTFib{p}}$ acts as $\jcomp{p}$. 
The only other cartesian arrows have shape $f : \term{}X \to \term{}Y$, which are mapped to the square 
\[\begin{tikzcd}
  X \ar[r, "f"]
    \ar[d, "\id{X}"']
& Y \ar[d,"\id{Y}"]
\\
  X \ar[r, "f"']
& Y
\end{tikzcd}\]
that is obviously a pullback. 
\end{proof}

The construction of the comprehension category with terminals $(\JTFib{p},\jcomp{\JTFib{p}})$
extends to a 2-functor $\JTFree:\JComp\to\JTComp$ in the following way. 
Given a 1-cell $F:p\to q$, its image $\JTFun{F}$ is such that 
$\fb{\JTFun{F}} = \fb{F}$ and 
$\ft{\JTFun{F}}$ acts like $\ft{F}$ on objects and arrows from the total category of $p$, 
it maps $\term{p}X$ to $\term{q}\fb{F}X$, and 
a morphism $f \in\Hom{\JTDomE{p}}(Q,\term{p}X)$ to $\fb{F}f \in \Hom{\JTDom{q}}(\ft{\JTFun{F}}Q, \term{q}\fb{F}X)$. 
Given a 2-cell $\alpha$, its image $\JTFun\alpha$ is defined by 
$\fb{\JTFun\alpha}=\fb\alpha$, 
$\ft{\JTFun\alpha}_A = \ft\alpha_A$, for $A$ an object from original total category of the domain fibration of $\alpha$, and 
$\ft{\JTFun\alpha}_{\term{}X} = \fb\alpha_X$, for $X$ an object in the base of the domain fibration of $\alpha$. 

In order to prove that $\JTFree$ is universal, we define two pseudo-natural transformation that will constitute the unit and counit of the biadjunction $\JTFree\dashv \forgT$.

			The unit is easy to describe: given a comprehension category $p$, the morphism $\JTun{p}:p\to \forgT \JFib{p}$ is given by the inclusion on the total category, and the identity on the base. This morphism preserves comprehension on the nose: 
			
The counit requires slightly more attention: 
comprehensions will not be preserved strictly in general, but they will be preserved only up to iso. 
This fact is a consequence of having set the comprehension of fibred terminal objects equal to the identity, while for an arbitrary comprehension category with terminals the comprehension of fibred terminal objects is just required to be invertible.
Given a comprehension category with terminals $p:\ct{E}\to\ct{B}$, the morphism $\JTcoun{p}:\JTFib{p}\to p$ is defined by setting $\fb{\JTcoun{p}}\colon = \Id{\ct{B}}$ and $\ft{\JTcoun{p}}$ does nothing on objects and morphisms coming from $\ct{E}$, and maps new morphisms $f\in\Hom{\JTDomE{p}}(A,\term{}X)=\Hom{\ct{B}}(\JTFib{p}A,X)$ to the unique morphism $!:A\to \term{p}X$ over $f:pA\to X$. 
If $F:p\to q$ is a morphism in $\JTComp$, we define $\fb{\JTcoun{F}}\colon = \Id{\ct{B}}$ and $\ft{\JTcoun{F}}:\ft{F}\arcmp\ft{\JTcoun{p}}\nt \ft{\JTcoun{q}}\arcmp \ft{\JFun{F}}$ by using the universal property of fibred terminal objects.
			
			Using the unit and the counit just defined, we can prove the main theorem of this section.

			\begin{thm}
				The 2-functor $\JTFree:\JComp\to\JTComp$ is left bi-adjoint to the forgetful functor $\forgT:\JTComp\to\JComp$.
			\end{thm}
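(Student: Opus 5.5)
I will follow the same strategy used for \zcref{thm:jfree}: first show that the unit $\JTun{}$ and the counit $\JTcoun{}$ just described are pseudo-natural transformations $\twoId{\JComp}\nt\forgT\JTFree$ and $\JTFree\forgT\nt\twoId{\JTComp}$, and then verify the two triangular identities up to invertible modifications (in fact, I expect them to hold strictly or up to canonical isomorphisms coming from fibred terminality).

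For the unit, I first check that $\JTun{p}$ is a 1-cell of $\JComp$. It preserves cartesian arrows because cartesian liftings in $\JTFib{p}$ at objects from $\ct{E}$ are exactly those of $p$. It preserves comprehension on the nose, since $\jcomp{\JTFib{p}}$ restricts to $\jcomp{p}$ on $\ct{E}$, so the associated isomorphism is the identity. Given $F:p\to q$ in $\JComp$, I note that $\ft{\JTFun{F}}$ acts as $\ft{F}$ on $\ct{E}$ and $\fb{\JTFun{F}}=\fb{F}$. Hence the naturality square commutes strictly, and I set $\JTun{F}$ to be the identity 2-cell. The compatibility with the comprehension isomorphism $\alpha$ of $F$ holds because $\JTFun{F}$ carries the same $\alpha$ on the old part. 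Pseudo-naturality conditions with respect to composition and 2-cells are then immediate.

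For the counit, I first check that $\JTcoun{p}$ is a functor. Composites of the form $g\arcmp f$ with $g$ new are sent to $!_{g\arcmp pf}$, which equals $!_g\arcmp f$ by uniqueness of maps into fibred terminals. It preserves cartesian arrows: those from $\ct{E}$ are untouched, while $f:\term{}X\to\term{}Y$ goes to $\term{p}f$, which is cartesian because reindexing preserves terminals. It preserves fibred terminal objects, so it lands in $\JTComp$. Its comprehension isomorphism is the identity on old objects, and on $\term{}X$ it is the square $(\jcomp{}\term{p}X,\id{X})$ relating $\id{X}$ and $\jcomp{p}\term{p}X$. This square is invertible precisely because $p$ is a comprehension category with terminals. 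Naturality of this isomorphism on new arrows $f:A\to\term{}X$ reduces to $\jcomp{}\term{p}X\arcmp\comp{}(!_f)=f\arcmp\jcomp{}A$, which follows from $\cod\arcmp\jcomp{}=p$ applied to $!_f$. The 2-cell $\JTcoun{F}$ has components that are identities on old objects and, on $\term{}X$, the unique iso $\ft{F}\term{p}X\to\term{q}\fb{F}X$, since $F$ preserves fibred terminals. I then check its compatibility with the comprehension isomorphisms, which again reduces to equalities of maps into terminal-comprehension squares. Pseudo-naturality axioms follow from uniqueness of maps into fibred terminals.

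Finally come the triangular identities. The composite $\JTcoun{\JTFree p}\arcmp\JTFree(\JTun{p})$ sends old objects to themselves and new ones to themselves, with the new terminals of the doubly-free construction identified with $\term{}X$. It is the identity up to the canonical isomorphism, and I expect it to be literally the identity. The composite $\forgT(\JTcoun{p})\arcmp\JTun{\forgT p}$ is the inclusion of $\ct{E}$ followed by the counit, which is $\Id{}$ on $p$. In both cases the coherence modifications are identities. The main obstacle is the bookkeeping of the comprehension isomorphisms in the counit: one must verify that the 2-cells $\JTcoun{F}$ satisfy the $\JComp$ 2-cell condition with respect to the composite isomorphisms $\beta*\alpha$. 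I would handle this by observing that every component involved is a morphism into $\jcomp{}$ of a fibred terminal, hence is determined by its base part, so the required equalities of squares reduce to trivial equalities in $\ct{B}$.
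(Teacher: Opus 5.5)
Your proposal is correct and follows the paper's own proof. You show that $\JTun{}$ is pseudo-natural (in fact strictly natural, with identity 2-cells) and that $\JTcoun{}$ is pseudo-natural (with $\JTcoun{F}$ given by the isos witnessing that $F$ preserves fibred terminals), and you observe that both triangular identities hold strictly, because the unit is the inclusion of $\ct{E}$ and the counit is the identity on $\ct{E}$ and on the base. Your additional checks fill in details the paper leaves as routine: functoriality of the counit, the invertible comprehension square at $\term{}X$, and the 2-cell condition, which you settle by noting that morphisms into the invertible $\jcomp{}\term{}X$ are determined by their base component.
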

			
			\begin{proof}
				The unit and counit of the bi-adjunction are, respectively, $\JTun{}$ and $\JTcoun{}$. They are pseudo-natural by \zcref{lem:JTun-pseudonat} and \zcref{lem:JTcoun-pseudonat}. Both triangular identities are trivial since $\ft{\JTun{p}}$ is the inclusion and $\ft{\JTcoun{p}}$ does nothing on elements of $\ct{E}$, while the bottom components $\fb{\JTun{p}}$ and $\fb{\JTcoun{p}}$ are the identity.
			\end{proof}

			\subsection{The free Lawvere-Ehrhard comprehension category}

We now describe the free Lawvere-Ehrhard comprehension category over a comprehension category with terminals. In this case, the construction is obtained by using the universal property of a 2-coequalizer. This will guarantee that the unit and the counit are actually 2-natural and that triangular identities are strict, giving rise to a 2-adjunction.

The construction can be sketched as follows: given a comprehension category with terminals $p:\ct{E}\to\ct{B}$, we want to force it to satisfy the conditions of \zcref{thm:ess-im}, that is, to have an isomorphism between terms and global type morphisms.
By \zcref{lem:term-pres} 
we know that there is a functor mapping global type morphisms to terms. 
Following \zcref{thm:ess-im}, we need a way to turn it into an isomorphism.
This requires to solve two problems: 
on the one hand, we have to add ``missing'' global type morphisms and, on the other, 
we have to identify those that correspond to the same term. 
Indeed, recall that two global type morphisms $f,g : \term{}X \to B$ correspond to the same term if 
$\comp{}f\arcmp \eta_X = \comp{}g\arcmp \eta_X$, where $\eta_X : X \to \comp{}\term{}X$ is the inverse of $\jcomp{}\term{}X$.

We start by adding to $\ct{E}$ morphisms corresponding to terms, i.e., sections of comprehension arrows, closing them under composition. In particular we consider the category $\Tr{p}$ 
with the same objects as $\ct{E}$ and where morphisms from $A$ to $B$ 
are either maps $f:A\to B$ in $\ct{E}$ or maps $t:pA\to \comp{}B$ in $\ct{B}$. 
The composition is given as follows: 
for two morphisms from $\ct{E}$ it is the same as in $\ct{E}$, 
for two morphisms $t : pA\to\comp{}B$ and $s : pB\to\comp{}C$ it  is given by $s\arcmp\jcomp{}B\arcmp t$, 
for morphisms $f:A\to B$ and $s:pB\to \comp{}C$ it is given by $s\arcmp pf$, and, finally, 
for morphisms $t:pA\to \comp{}B$ and $g:B\to C$ it is given by $\comp{}g\arcmp t$.
It is easy to verify that this composition  is associative and has identities given by those of $\ct{E}$. The following diagram shows all three non trivial possible compositions.

\[\begin{tikzcd}
	{\comp{}A} && {\comp{}B} && {\comp{}C} & \\
	\\
	& pA && pB && pC
	\arrow["{\jcomp{}A}"', from=1-1, to=3-2]
	\arrow["{\comp{}g}", from=1-3, to=1-5]
	\arrow["{\jcomp{}B}"', from=1-3, to=3-4]
	\arrow["{\jcomp{}C}", from=1-5, to=3-6]
	\arrow["t", from=3-2, to=1-3]
	\arrow["pf"', from=3-2, to=3-4]
	\arrow["s"', from=3-4, to=1-5]
\end{tikzcd}\]

Clearly, there is a functor $\Trfib{p}$ from $\Tr{p}$ into $\ct{B}$, which acts as $p$ on objects and on morphisms from $\ct{E}$ and 
it maps a morphism $t:pA\to\comp{}B$ to $\jcomp{}B\arcmp t$ 
(it postcomposes $t$ with the comprehension of $B$). Its functoriality is immediate looking at the previous diagram.
Moreover, such a functor is a fibration: 
the cartesian lifting $\carar{f}{A}$ of $f$ at $A$ is the same as the one with respect to $p$. 
The factorization is given by cartesianity for arrows coming from $p$, and by the universal property of the pullback $\jcomp{}\carar{f}{A}$ for morphisms of the new form. 
Moreover, it is clear that $\ct{E}$ is a (non-full) subcategory of $\Tr{p}$.

Notice that we can endow $\Trfib{p}$ with a comprehension structure which makes it into a comprehension category. In fact, consider the functor $\jcomp{\Trfib{p}}:\Tr{p}\to \Arr{\ct{B}}$ defined as follows: it acts as $\jcomp{}$ on objects and morphisms from $\ct{E}$, and it maps a morphism $t:pA\to\comp{}B$ to the square
\[\begin{tikzcd}
	{\comp{}A} && {\comp{}B} \\
	pA && pB
	\arrow["{t\arcmp\jcomp{}A}", from=1-1, to=1-3]
	\arrow["{\jcomp{}A}"', from=1-1, to=2-1]
	\arrow["{\jcomp{}B}", from=1-3, to=2-3]
	\arrow["t"{description}, from=2-1, to=1-3]
	\arrow["{\jcomp{}B\arcmp t}"', from=2-1, to=2-3]
\end{tikzcd}\]
It is straightforward to check that it is a functor. It maps cartesian morphisms to pullbacks since so does $\jcomp{}$ and 
cartesian morphisms of $\Trfib{p}$ and of $p$ coincide.

At this point, we have constructed a comprehension category where every term ``comes from'' a corresponding type morphism. 
However, such a type morphism is not necessarily global as this comprehension category has not fibred terminal objects in general and, moreover, 
there can be more than one type morphism inducing the same term. 
All these issues are essentially due to the fact that the total category of the fibration now has too many morphisms.
To solve this problem,  we need to identify those type morphisms in $\Tr{p}$ that should be global and correspond to the same term and, 
to obtain a category, we also need to close this identification under composition. 
Altogether, this means that we have to identify morphisms in $\Tr{p}$ which correspond to the same constant morphism in $\ct{E}$, that is, a morphism that factors through a fibred terminal object. 

To this end, we will compute a coequalizer of two functors into $\Tr{p}$ from a category of constant morphisms in $\ct{E}$. 
We would like to consider 
a category $\Const{p}$ where objects are the same as $\ct{E}$ and 
morphisms from $A$ to $B$ are morphisms $f : \term{} pA\to B$ in $\ct{E}$. 
Two arrows $f : \term{}pA\to B$ and $g : \term{}pB \to C$ can be composed as in  the diagram below.

			\[\begin{tikzcd}
				{\term{}pA} & B & {\term{}pB} & C
				\arrow["f", from=1-1, to=1-2]
				\arrow["{!}", from=1-2, to=1-3]
				\arrow["g", from=1-3, to=1-4]
			\end{tikzcd}\]

It is easy to see that this composition is associative. 
However, it has no identities, intuitively because identity arrows in $\ct{E}$ are not constant. 
Hence, the structure that we get is actually weaker than a category, and it is called semicategory.

We recall from \cite[Sec. ~4]{mitchell1972dominion}
that a \define{semicategory} is a category without identities, and a semifunctor is an assignment mapping objects to objects and morphisms to morphisms, and 
preserving composition. 
Moreover, for every semicategory we can consider the free category over it, obtained by adding formal identities to every object. 

We now consider the free category over the semicategory $\Const{p}$, denoted by $\ConstCat{p}$. 
Observe that functors from $\ConstCat{p}$ to a category $\ct{C}$ are the same as semifunctors from $\Const{p}$ to $\ct{C}$ (regarded as a semicategory). 
By applying this fact to the restriction of $\ct{E}$ to $\Const(p)$, we get a functor 
$\ConstFib{p}:\ConstCat{p}\to\ct{B}$. 
However, in general it is not a fibration, because it lacks cartesian liftings.

We now define the two functors $\ConstCat{p}\to\Tr{p}$ that we will then coequalize to obtain the final construction. 
We do this by transposing two semifunctors
$\incl,\triang:\Const{p}\to\Tr{p}$. 
These are the identity on objects and, 
given $f\in \Hom{\Const{p}}(A,B) = \Hom{\ct{E}}(\term{}pA, B)$, 
 we set $\incl f\colon=f\arcmp !$ and $\triang f\colon=\comp{} f\arcmp \eta_X$. 
With a slight abuse of notation, we will identify them with 
the unique functors from $\ConstCat$ they induce via transposition. 

We are now ready to define the functor that will be the free Lawvere-Ehrhard comprehension category. 

			\begin{defi}
				The functor $\LEFib{p}:\LEDomE{p}\to\ct{B}$ is the coequalizer of $\incl$ and $\triang$, seen as 1-cells from $\ConstCat{p}$ to $\ConstFib{p}$ in $\Cattwo/\ct{B}$.
				\[\begin{tikzcd}
					{\ConstCat{p}} && {\Tr{p}} && {\LEDomE{p}} \\
					{\ct{B}} && {\ct{B}} && {\ct{B}}
					\arrow["\triang"', shift right=2, from=1-1, to=1-3]
					\arrow["\incl", shift left=2, from=1-1, to=1-3]
					\arrow["{\ConstFib{p}}"', from=1-1, to=2-1]
					\arrow["{\quot{p}}"{description}, from=1-3, to=1-5]
					\arrow["{\Trfib{p}}"{description}, from=1-3, to=2-3]
					\arrow["{\LEFib{p}}", from=1-5, to=2-5]
					\arrow["{\Id{\ct{B}}}"', from=2-1, to=2-3]
					\arrow["{\Id{\ct{B}}}"{description}, from=2-3, to=2-5]
				\end{tikzcd}\]
			\end{defi}

Since coequalizers in $\Cattwo/\ct{B}$ are computed as in $\Cattwo$, 
we know that $\LEDomE{p}$ is itself the coequalizer of  $\incl$ and $\triang$ regarded as 1-cells in $\Cattwo$.

From \cite[Prop 3.16 and Prop. 4.1]{bednarczyk1999generalized} 
it follows that the coequalizer is computed by quotienting morphisms of $\Tr{p}$ 
by the arrow congruence $\sim$ generated by the relation $R$ such that 
 $f R g$ if and only if there exists $h$ in $\ConstCat{p}$ such that $\incl h=f$ and $\triang h=g$. 
In other words, $\sim$ is the smallest equivalence relation that extends $R$ and it is stable under composition, that is, 
if $f\sim f'$ and $g\sim g'$ then $g\arcmp f \sim g' \arcmp f'$.

			\begin{prop}
				The functor $\LEFib{p}$ is a fibration.
			\end{prop}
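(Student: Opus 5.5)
The plan is to show that the images under $\quot{p}$ of the cartesian liftings of $p$ are cartesian for $\LEFib{p}$. Recall first the relevant features of the coequalizer. It is computed as a quotient of $\Tr{p}$ by the congruence $\sim$. So $\quot{p}$ is the identity on objects and surjective on each hom-set. Moreover, $\LEFib{p}\arcmp\quot{p}=\Trfib{p}$, so $\sim$-related arrows have the same image in $\ct{B}$. Given $f:X\to Y$ and $B$ over $Y$, we take the cleavage arrow $\carar{f}{B}:\reind{f}B\to B$ of $p$. It is cartesian for $\Trfib{p}$, as observed when $\Tr{p}$ was introduced. We claim that $\quot{p}(\carar{f}{B})$ is cartesian for $\LEFib{p}$.

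Existence of factorizations is immediate. Given $[k]:A'\to B$ in $\LEDomE{p}$ and $v$ with $\LEFib{p}[k]=f\arcmp v$, we choose a representative $k$ in $\Tr{p}$. We factor it through $\carar{f}{B}$ in $\Tr{p}$ over $v$, obtaining some $h$, and we take $[h]$.

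Uniqueness is the real content. We must show that if $[h_1]$ and $[h_2]$ lie over the same $v$ and $\carar{f}{B}\arcmp h_1\sim\carar{f}{B}\arcmp h_2$, then $h_1\sim h_2$. It suffices to prove the following statement ($\ast$): if $k\sim k':A'\to B$, then for every cartesian $c:\reind{f}B\to B$ (from the cleavage of $p$) and every $v$ over which both factor, the factorizations $k/c$ and $k'/c$ through $c$ satisfy $k/c\sim k'/c$. We prove ($\ast$) by showing that the relation $\approx$ defined by ``$k\approx k'$ iff they have the same codomain and ($\ast$) holds for them'' is a congruence containing $R$. Since $\sim$ is the least such congruence, it is then contained in $\approx$.

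$\approx$ is clearly an equivalence relation, by uniqueness of factorizations. It is stable under precomposition because $(k\arcmp m)/c=(k/c)\arcmp m$. For postcomposition we use the strategy of reducing to the case where one side is generated. For related pairs $g\sim g'$ we have $(g\arcmp k)/c=(g/c')\arcmp(k/c'')$ only after refactoring, so instead we treat $\approx$-stability under postcomposition by an induction on the generation of $\sim$. This needs a compatibility check for the three kinds of composites in $\Tr{p}$: the factorization of a composite $g\arcmp k$ through $c$ is computed via the universal property of the cartesian arrow $c$ in $\ct{E}$, or of the pullback $\jcomp{}c$ for term-shaped arrows.

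The main obstacle, which I would handle first, is showing $R\subseteq\approx$. Take a constant $h:\term{}pA''\to B$, so that $\incl h=h\arcmp !$ and $\triang h=\comp{}h\arcmp\eta$. Factor $h$ through $c$ in $\ct{E}$ over the appropriate base map; call the result $h'$, itself a constant arrow $\term{}pA''\to\reind{f}B$. Then $\incl h/c=\incl h'$ directly. For the other side, we use that $\jcomp{}c$ is a pullback together with the naturality of $\eta$, the inverse of $\jcomp{}\term{}$. This should give $\triang h/c=\comp{}h'\arcmp\eta=\triang h'$, by uniqueness of the pullback factorization: both sides agree after composing with $\comp{}c$ and with $\jcomp{}\reind{f}B$. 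Hence the factorizations are again $R$-related, and so $\sim$-related. The general case, where $h$ is a formal identity of $\ConstCat{p}$ or a composite in the semicategory, follows since $\incl$ and $\triang$ are functors.
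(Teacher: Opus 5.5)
Your overall strategy is the paper's: push the $p$-cartesian liftings through $\quot{p}$, get existence for free, and prove uniqueness by showing that factorization through a cartesian arrow respects $\sim$, checked along the generation of $\sim$. Your generator computation ($\incl h/c=\incl h'$ and $\triang h/c=\triang h'$, via the pullback $\jcomp{}c$ and naturality of $\eta$) is exactly the paper's base case.

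The genuine gap is the step you treat as clear: closure of $\approx$ under precomposition. The identity $(k\arcmp m)/c=(k/c)\arcmp m$ presupposes that $k$ itself factors through $c$, i.e.\ that $\Trfib{p}k$ factors through $pc$. You only know $\Trfib{p}k\arcmp\Trfib{p}m=pc\arcmp v$. If $pc$ is an inclusion of sets and $\Trfib{p}m$ lands in its image, $k/c$ need not exist. Even when $\Trfib{p}k=pc\arcmp v'$, the arrow $(k/c)\arcmp m$ lies over $v'\arcmp\Trfib{p}m$, not necessarily over $v$. This is the direction carrying the real content, since it amounts to $\sim$ being stable under reindexing.

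Conversely, postcomposition, which you leave to an unspecified (and, as phrased, circular) induction on the generation of $\sim$, is the easy direction. Let $w$ be the common image of $k,k'$ and $c'$ the chosen lifting of $w$ at their codomain. Write $k=c'\arcmp k_0$ and $k'=c'\arcmp k_0'$, so $k_0\sim k_0'$ by $k\approx k'$. By cartesianity of $c$, $g\arcmp c'=c\arcmp e$. Then $(g\arcmp k)/c=e\arcmp k_0\sim e\arcmp k_0'=(g\arcmp k')/c$ because $\sim$ is a congruence. Finally, $\approx$ must also require $\Trfib{p}k=\Trfib{p}k'$; otherwise arrows over different base maps are vacuously related and transitivity fails.

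The repair is to build precomposition into the invariant. Say $k\approx k'$ if $\Trfib{p}k=\Trfib{p}k'$ and the following holds: for every $m$ into their domain, every cleavage arrow $c$ into their codomain and every $v$ with $\Trfib{p}(k\arcmp m)=pc\arcmp v$, one has $(k\arcmp m)/c\sim(k'\arcmp m)/c$. Precomposition is then immediate, and postcomposition is the argument above applied to $k\arcmp m$. For the generators you must compare $\incl s\arcmp m$ with $\triang s\arcmp m$:
\begin{itemize}
\item If $m$ lies in $\ct{E}$ over $u$, these are $\incl(s\arcmp\term{}u)$ and $\triang(s\arcmp\term{}u)$, since $!\arcmp m=\term{}u\arcmp !$ and $\eta$ is natural. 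So your computation with $m$ an identity applies.
\item If $m$ is a term arrow into $A''$, they coincide on the nose, because applying $\jcomp{}$ to $!:A''\to\term{}pA''$ gives $\comp{}!=\eta_{pA''}\arcmp\jcomp{}A''$.
\end{itemize}
Taking $m$ an identity, $k=c\arcmp h_1$ and $k'=c\arcmp h_2$ then yields uniqueness. For what it is worth, the paper's own induction has the same defect. Its composition case factors the outer arrow $a$ through $f$, which requires $\Trfib{p}a$ to factor through $pf$, so this strengthening is needed there too.
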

			
			\begin{proof}
				We claim that cartesian liftings are obtained by applying $\quot{}$ to the correspondent cartesian liftings in $\Tr{p}$. In fact, given $f:A^*\to A$ cartesian with respect to $\Trfib{p}$, consider $h:B\to A$ in $\Tr{p}$ and $g:Z\to X$ in $\ct{B}$ such that $\Trfib{p}h=\Trfib{p}f\arcmp g$. We need to show that there is a unique $r:B\to A^*$ in $\LEDomE{p}$ over $g$ such that $\quot{}h=\quot{}f\arcmp r$. Equivalently, we want to find a $t:B\to A^*$ in $\Tr{p}$ over $g$ such that $\quot{}h=\quot{}f\arcmp \quot{}t$, and for any $t'$ over $g$ which satisfies the same equation we want to have $t\sim t'$.
				
The existence of $t$ is granted by the cartesianity of $f$ with respect to $\Trfib{p}$. 
So we only need to show the following: given $h':B\to A$ in $\Tr{p}$ which factorizes along $f$ in $h'=f\arcmp t'$, if $h\sim h'$ then $t\sim t'$. We do this by induction on the definition of $\sim$.
				
				\begin{itemize}
					\item if $h=h'$, then $t=t'$ by cartesianity of $f$, hence $t\sim t'$;
					\item if $h\sim h''$ and $h''\sim h'$, then by cartesianity we get $t''$ such that $h''=f\arcmp t''$. By inductive hypothesis $t\sim t''$ and $t''\sim t'$, hence $t\sim t'$ by transitivity;
					\item if $h=a\arcmp b$ and $h'=a'\arcmp b'$ with $a\sim a'$ and $b\sim b'$, then we can factor $a$ and $a'$ in $a=f\arcmp c$ and $a'=f\arcmp c'$ by cartesianity. Notice that $\Trfib{p}a=\Trfib{p}a'$ since they are in the same equivalence class, hence we are allowed to use the cartesianity of $f$. Finally we have $c\sim c'$ by inductive hypothesis, hence $t=c\arcmp b\sim c'\arcmp b'=t'$, where the two equalities hold again by cartesianity of $f$;
					\item if there is $s$ in $\ConstCat{p}$ such that $\incl s=h$ and $\triang s= h'$, then by cartesianity there is $c:\term{}Z\to\reind{A}$ such that $s=f\arcmp c$. Furthermore $t=c\arcmp !$, by unicity of the cartesian factorization. Then using the universal property of the pullback it is clear that $\comp{}c\arcmp \eta_Z=t'$. These two facts mean that $\incl c=t$ and $\triang c=t'$, hence $t\sim t'$;
					\[\begin{tikzcd}
						{\comp{}\term{}Z} &&&&&&&&& \\
						&& {\comp{}\reind{A}} && {\comp{}A} && {\term{}Z} \\
						& Z &&&&& B \\
						&& X && Y &&& {\reind{A}} && A
						\arrow["{\comp{}c}"{description}, from=1-1, to=2-3]
						\arrow["{\comp{}s}"{description}, bend left, from=1-1, to=2-5]
						\arrow["{\jcomp{}\term{}Z}"', bend right, from=1-1, to=3-2]
						\arrow["{\comp{}f}", from=2-3, to=2-5]
						\arrow["{\jcomp{}\reind{A}}"{description}, from=2-3, to=4-3]
						\arrow["{\jcomp{}A}"{description}, from=2-5, to=4-5]
						\arrow["c"{description}, dashed, from=2-7, to=4-8]
						\arrow["s", from=2-7, to=4-10]
						\arrow["{\eta_Z}"{description}, from=3-2, to=1-1]
						\arrow["{t'}", from=3-2, to=2-3]
						\arrow["g"', from=3-2, to=4-3]
						\arrow["{!}", from=3-7, to=2-7]
						\arrow["t"', from=3-7, to=4-8]
						\arrow["h"{description}, from=3-7, to=4-10, crossing over]
						\arrow["pf"', from=4-3, to=4-5]
						\arrow["f"', from=4-8, to=4-10]
					\end{tikzcd}\]
					\item if there is $s$ in $\ConstCat{p}$ such that $\triang s= h$ and $\incl s=h'$, then it is analogous to the precedent case.
				\end{itemize}
			\end{proof}

			The fibration $\LEFib{p}$ can be given a comprehension structure which turns it to a JCC. In particular, the comprehension functor $\jcomp{\Trfib{p}}:\Tr{p}\to\Arr{\ct{B}}$ induces a functor $\jcomp{\LEFib{p}}:\LEDomE{p}\to\Arr{\ct{B}}$ by using the universal property of the coequalizer. It is clear in fact that $\jcomp{\Trfib{p}}$ coequalizes $\incl$ and $\triang$.

			\begin{prop}
				The pair $(\LEFib{p}, \jcomp{\LEFib{p}})$ is a comprehension category with terminals.
			\end{prop}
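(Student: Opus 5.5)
We check the three requirements in turn: that $\jcomp{\LEFib{p}}$ is a comprehension functor over $\LEFib{p}$, that $\LEFib{p}$ has fibred terminal objects, and that these are sent by $\jcomp{\LEFib{p}}$ to isomorphisms. Throughout, write $\quot{}:\Tr{p}\to\LEDomE{p}$ for the coequalizer map. Recall that $\jcomp{\LEFib{p}}$ is the unique functor with $\jcomp{\LEFib{p}}\arcmp\quot{}=\jcomp{\Trfib{p}}$.

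\emph{Comprehension structure.} We have
\[
\cod\arcmp\jcomp{\LEFib{p}}\arcmp\quot{}=\cod\arcmp\jcomp{\Trfib{p}}=\Trfib{p}=\LEFib{p}\arcmp\quot{}.
\]
Since $\quot{}$ is an epimorphism in $\Cattwo$ (it is a coequalizer), this gives $\cod\arcmp\jcomp{\LEFib{p}}=\LEFib{p}$. For cartesian arrows, the proof of the previous proposition shows that cartesian liftings in $\LEFib{p}$ can be taken to be $\quot{}f$ with $f$ cartesian in $\Tr{p}$, and the cartesian arrows of $\Tr{p}$ are exactly those of $p$. Any cartesian arrow of $\LEFib{p}$ is then such a $\quot{}f$ precomposed with a vertical isomorphism. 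Its image under $\jcomp{\LEFib{p}}$ is therefore $\jcomp{p}f$, which is a pullback, composed with an isomorphism of $\Arr{\ct{B}}$, and so it is again a pullback.

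\emph{Fibred terminal objects.} We take the candidate $\quot{}(\term{}X)=\term{}X$, where $\term{}$ is the fibred terminal object functor of $p$. Fix $A$ over $Y$ and $f:Y\to X$ in $\ct{B}$; we need exactly one arrow $A\to\term{}X$ over $f$ in $\LEDomE{p}$.

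Existence is immediate: the class of $!_f:A\to\term{}X$ in $\ct{E}$ lies over $f$.

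Uniqueness is the main step. Every arrow of $\LEDomE{p}$ is the class of some arrow of $\Tr{p}$ over $f$, so it suffices to show that every such arrow of $\Tr{p}$ is $\sim$-equivalent to $!_f$. There are two cases.
\begin{enumerate}
\item If the arrow comes from $\ct{E}$, it equals $!_f$ by fibred terminality in $p$.
\item Otherwise it is a term $t:Y\to\comp{}\term{}X$ with $\jcomp{}\term{}X\arcmp t=f$. Because $\jcomp{}\term{}X$ is invertible with inverse $\eta_X$, this forces $t=\eta_X\arcmp f$. Now consider $h:=\term{}f$ as a morphism $A\to\term{}X$ of $\Const{p}$, i.e.\ $h:\term{}Y\to\term{}X$ in $\ct{E}$. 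On one side, $\incl h=\term{}f\arcmp{!}=!_f$. On the other, $\triang h=\comp{}\term{}f\arcmp\eta_Y=\eta_X\arcmp f=t$, where the middle equality is naturality of $\eta$; this holds because $\eta$ is the inverse of the natural isomorphism $\jcomp{}\term{}$. Hence $!_f\,R\,t$, so $t\sim{!_f}$.
\end{enumerate}
This shows that $\term{}X$ is fibred terminal in $\LEFib{p}$, and that $\term{}$ composed with $\quot{}$ is a fibred terminal object functor.

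\emph{Preservation of terminals.} Finally, $\jcomp{\LEFib{p}}(\term{}X)=\jcomp{p}(\term{}X)$, which is invertible by hypothesis on $p$. Hence the fibred terminal objects of $\LEFib{p}$ are preserved by the comprehension functor. The only delicate point in the whole argument is the uniqueness step; it relies exactly on the terms into $\term{}X$ being identified with constant maps through the relation $R$ used to build the coequalizer.
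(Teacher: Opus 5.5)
Your proof is correct and follows the paper's argument: the fibred terminals are those of $p$, arrows into $\term{}X$ coming from $\ct{E}$ are forced to be $!$, terms into $\term{}X$ are identified with $!$ through the relation $R$, and cartesianity and preservation of terminals are inherited from $\jcomp{p}$ via $\quot{}$. Your version is slightly more explicit than the paper's. It handles arrows over an arbitrary $f$, witnessing $t\,R\,{!_f}$ with $h=\term{}f$ and naturality of $\eta$, whereas the paper checks only vertical arrows with the implicit witness $h=\id{\term{}X}$; and it accounts for cartesian arrows that are only vertically isomorphic to the chosen liftings.
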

			
			\begin{proof}
				First we need to show that $\LEFib{p}$ has fibred terminal objects. We claim that they are the same as fibred terminal objects of $p$. There is always the map $!:A\to \term{}X$ in $\Tr{p}$. We need to show that if $f:A\to\term{}X$ is vertical in $\Tr(p)$, then $f\sim !$. If $f$ is an arrow in $\ct{E}$, then $f$ is equal to $!$. Otherwise, $f:X\to \comp{}\term{}X$ is equal to $\eta_X$ since $\jcomp{}\arcmp f=\id{X}$. In both cases $f\sim !$.
				
				We already know that $\jcomp{\LEFib{p}}:\LEDomE{p}\to\Arr{\ct{B}}$ is a functor. The only thing left to prove is that it maps cartesian morphisms to pullbacks and preserves fibred terminal objects. The former condition is trivial since cartesian morphisms in $\LEDomE{p}$ are equivalence classes of cartesian morphisms in $\ct{E}$, and $\jcomp{\LEFib{p}}$ acts on them in the same way as $\jcomp{}$.
				The preservation of fibred terminal objects is trivial since $\jcomp{\LEFib{p}}$ coincides with $\jcomp{}$ on objects, and the latter preserves fibred terminal objects by assumption.
			\end{proof}
			
			We now use the characterization given in \zcref{sect:lcomp-vs-jcomp} to finally show that this construction yields indeed a LECC.
			
			\begin{prop}
				$\LEFib{p}$ is a Lawvere-Ehrhard comprehension category. 
			\end{prop}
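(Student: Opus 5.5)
We will apply \zcref{lem:ess-im} (equivalently \zcref{thm:ess-im}) to the comprehension category with terminals $(\LEFib{p},\jcomp{\LEFib{p}})$ established in the previous proposition. Its fibred terminal objects are (the images under $\quot{}$ of) those of $p$, and comprehension agrees with $\jcomp{}$ on objects. Hence $\comp{}\term{}X$ is the same object as for $p$, and $\jcomp{}\term{}X$ is invertible with inverse $\eta_X$. The first condition of \zcref{lem:ess-im} is then immediate: take $s_X:=\eta_X$. It is natural in $X$ because $\eta$ is the inverse of the natural isomorphism $\jcomp{}\term{}$.

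For the second condition, fix $A$ over $X$ and a section $t:X\to\comp{}A$ of $\jcomp{}A$. For existence, regard $t$ as a morphism $\term{}X\to A$ of $\Tr{p}$, i.e. a map $p\term{}X=X\to\comp{}A$. It lies over $\jcomp{}A\arcmp t=\id{X}$, so it is vertical. Set $t^\#:=\quot{}t$. Its comprehension in $\LEDomE{p}$ is computed via $\jcomp{\Trfib{p}}$ and has top component $t\arcmp\jcomp{}\term{}X$. Precomposing with $\eta_X$ gives $\comp{}t^\#\arcmp\eta_X=t$.

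Uniqueness is the main obstacle. Let $u:\term{}X\to A$ be a vertical morphism of $\Tr{p}$ with $\comp{}u\arcmp\eta_X=t$. We split into two cases.
\begin{itemize}
\item If $u$ is a new morphism $X\to\comp{}A$, the comprehension computation gives $u=t$.
\item If $u$ comes from $\ct{E}$, then $u$ is a constant morphism in $\Hom{\Const{p}}(\term{}X,A)=\Hom{\ct{E}}(\term{}X,A)$, since $\term{}p\term{}X=\term{}X$ up to the canonical identification. The relation $R$ identifies $\incl u=u\arcmp{!}=u$ with $\triang u=\comp{}u\arcmp\eta_X=t$. Hence $\quot{}u=\quot{}t$.
\end{itemize}
So every vertical arrow $\term{}X\to A$ in $\LEDomE{p}$ satisfying the equation equals $t^\#$. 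The point needing care is that morphisms in $\LEDomE{p}$ are equivalence classes. We should check that the equation $\comp{}(-)\arcmp\eta_X=t$ depends only on the class, which holds because $\jcomp{\LEFib{p}}$ is induced from $\jcomp{\Trfib{p}}$. We should also check that every class has a representative of one of the two forms above, which is true since every morphism of $\Tr{p}$ is one or the other. We would also verify that ${!}:\term{}X\to\term{}X$ is the identity, so that $\incl u=u$ exactly.

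Finally, naturality. Suppose $f:A\to B$ in $\LEDomE{p}$ is represented by a morphism of $\Tr{p}$, and $s\arcmp pf=\comp{}f\arcmp t$. We must show $s^\#\arcmp\term{}pf=f\arcmp t^\#$. Both sides are morphisms $\term{}X\to B$ over $pf$. Rather than reducing to vertical arrows via cartesian factorization, we compare them directly. The composite $f\arcmp t$ in $\Tr{p}$ is the new morphism $\comp{}f\arcmp t$ when $f$ comes from $\ct{E}$. When $f$ is itself new it is $f\arcmp\jcomp{}A\arcmp t=f$. The composite $s\arcmp\term{}pf$ is $s\arcmp pf$. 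By the hypothesis these coincide as morphisms of $\Tr{p}$ in the first case; in the second case we use the comprehension equation instead. So the equality already holds in $\Tr{p}$, hence after applying $\quot{}$.

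The conditions of \zcref{lem:ess-im} are therefore satisfied, and $\LEFib{p}$ is a Lawvere-Ehrhard comprehension category.
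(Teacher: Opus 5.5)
Your proof is correct and follows the paper's route. Like the paper, you verify the two conditions of \zcref{lem:ess-im}, set $t^\#:=\quot{}t$, and prove uniqueness by relating any competing representative to $t$ through the relation $R$. Your uniqueness step is in fact a little more careful than the paper's. You split into new morphisms, which the comprehension computation forces to equal $t$, and old ones, which $R$ relates to $t$; the paper instead claims that every morphism out of $\term{}X$ lies in the image of $\incl$ or $\triang$. You also spell out the naturality calculation that the paper only calls ``easy''.
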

			
			\begin{proof}
				We want to use the characterization of \zcref{lem:ess-im}. The first condition is satisfied since $\jcomp{\LEFib{p}}$ coincides with $\jcomp{}$ on objects, and $\jcomp{}$ preserves terminals by assumption.
				
				Consider now $A$ over $X$ and $t:X\to\comp{}A$ a section of $\jcomp{\LEFib{p}}A$. We need to show that there is a unique vertical morphism $t^{\#}:\term{}X\to A$ such that $t=\comp{\LEFib{p}}t^{\#}\arcmp \eta_X$. First of all, one can notice that $t$ is a morphism from $\term{}X$ to $A$ in $\Tr(p)$, hence we define $t^{\#}\colon= \quot{} t$. It is clear that $t=\comp{\LEFib{p}}t^{\#}\arcmp\eta_X$, since $\comp{\LEFib{p}}t^{\#}=\comp{\Trfib{p}}t=t\arcmp\jcomp{}\term{}X$. For the unicity, suppose that $s:\term{}X\to A$ in $\Tr{p}$ is such that $\quot{}s$ satisfies the same equation. Since the domain of $s$ is $\term{}X$, there is a morphism $h$ in $\ConstCat{p}$ which is mapped to $s$ itself by either $\incl$ or $\triang$. In the former case, $s\sim \triang h$ by the base case of the definition of $\sim$, while in the latter it holds by reflexivity. This means that in both cases the following holds:
				\[s\sim \triang h=\comp{}h\arcmp\eta_X=\comp{\Trfib{p}}s\arcmp\eta_X=\comp{\LEFib{p}}\quot{}s\arcmp\eta_X=\comp{\LEFib{p}}\quot{}t\arcmp\eta_X=\comp{}t\arcmp\eta_X\sim t\]
				
				To prove the naturality, consider $f:A\to B$ in $\LEDomE{p}$, $t:X\to\comp{}A$ and $s:Y\to\comp{}B$ such that $s\arcmp \LEFib{p}f=\comp{\LEFib{p}}f\arcmp t$. An easy calculation shows that $s^{\#}\arcmp \term{\LEFib{p}}\LEFib{p}f=f\arcmp t^{\#}$.
			\end{proof}
			
			Now we proceed in proving that this construction is not only 2-functorial, but also provides a left 2-adjoint to the forgetful functor $\forgLE:\LComp\to\JTComp$. 
			
			Consider a morphism $F:p\to q$ in $\JTComp$, with $p:\ct{E}\to\ct{B}$ and $q:\ct{E'}\to\ct{B'}$. We define the functor $\Tr{F}:\Tr{p}\to\Tr{q}$ which acts as $\ft{F}$ on objects and morphisms from $\ct{E}$, and it maps arrows $t:pA\to\comp{p}B$ to $\Tr{F}t\colon = \alpha^{-1}_B\arcmp\fb{F}t$ where $\alpha:\comp{q}\ft{F}\to\fb{F}\comp{p}$ is the natural isomorphism preserving comprehension. This functor induces a morphism of comprehension categories $\Trfib{F}:\Trfib{p}\to\Trfib{q}$. First of all, it is a fibration morphism since every cartesian morphism of $\Trfib{p}$ comes from $\ct{E}$, and $\ft{F}$ preserves them.
			The preservation of comprehension of objects and morphisms coming from $\ct{E}$ is a consequence of $F$ being a morphism of comprehension categories. If instead $f\in\Hom{\ct{B}}(pA,\comp{p}B)\subseteq\Hom{\Tr{p}}(A,B)$, we have that $\fb{F}\comp{\Tr{p}}f=\fb{F}f\arcmp \fb{F}\jcomp{p}A$ and $\comp{\Tr{q}}\Tr{F}f=\alpha^{-1}_B\arcmp \fb{F}f\arcmp \jcomp{q}\ft{F}A$. These are again isomorphic since $F$ preserves comprehension.
			
			We also define the functor $\ConstCat{F}:\ConstCat{p}\to\ConstCat{q}$ which acts as $\ft{F}$ on objects and maps a morphism $f:\term{p}X\to B$ to $\ConstCat{F}f\colon = \ft{F}f\arcmp \beta^{-1}_X$, where $\beta:\term{q}\ft{F}\to\ft{F}\term{p}$ is the natural isomorphism preserving fibred terminal objects.
			
			Consider now a 2-cell $\gamma:F\to G$ in $\JTComp$. We can use the universal property of the 2-coequalizer to define a 2-cell $\LEFun{\gamma}:\LEFun{F}\to\LEFun{G}$. This gives rise to a 2-functor $\LEFree:\LComp\to\JTComp$.
			
			We are now ready to define the unit and counit of the 2-adjunction $\LEFree\dashv\forgLE$. First, take a comprehension category $p$ with terminals. The morphism $\LEun{p}:p\to\LEFib{p}$ is specified by the composition $\ft{\LEun{p}}:\ct{E}\hookrightarrow\Tr{p}\to\LEDomE{p}$. It clearly preserves terminals and comprehensions.
			
			Consider now a LECC $p$. Notice that in this case a morphism from $A$ to $B$ in $\Tr{p}$ of the form $f:X\to\comp{}A$ corresponds via transposition to a morphism from $A$ to $B$ in $\Const$. The component at $p$ of the counit, $\LEcoun{p}$, is determined by the functor $\Tr{p}\to \ct{E}$ which is the identity on objects, and maps morphisms of the form $f:X\to\comp{}A$ to $\incl (f^{\#})$. These allow us to conclude with the main result of the section.
			
			\begin{thm}
				The 2-functor $\LEFree:\JTComp\to\LComp$ is a left 2-adjoint to the forgetful 2-functor $\forgLE:\LComp\to \JTComp$.
			\end{thm}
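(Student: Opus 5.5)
We prove the statement by exhibiting the unit $\LEun{}$ and counit $\LEcoun{}$ as strict 2-natural transformations and checking the two triangular identities strictly. The key tool throughout is the universal property of the coequalizer defining $\LEDomE{p}$. A functor out of $\LEDomE{p}$ is the same as a functor out of $\Tr{p}$ that coequalizes $\incl$ and $\triang$. Likewise, a natural transformation out of it is one whose whiskering with $\quot{p}$ is given, since $\quot{p}$ is bijective on objects and full. So every verification reduces to a verification on $\Tr{p}$, where morphisms are explicit: either arrows of $\ct{E}$ or sections-like arrows $t:pA\to\comp{}B$ of $\ct{B}$.

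First I would make sure $\LEFree$ is a strict 2-functor.

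\emph{Action on 1-cells.} For a 1-cell $F$ of $\JTComp$, I check that $\Tr{F}\arcmp\incl=\incl\arcmp\ConstCat{F}$ and $\Tr{F}\arcmp\triang=\triang\arcmp\ConstCat{F}$ (up to the specified isomorphisms $\alpha,\beta$). Hence $\quot{q}\arcmp\Tr{F}$ coequalizes, and induces $\ft{\LEFun{F}}$. That this is an LE-morphism follows from Theorem~\zcref{thm:ess-im}: it preserves terminals and comprehensions because it does so on $\ct{E}$, and both structures on $\LEFib{p}$ are inherited from $p$.

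\emph{Action on 2-cells.} For a 2-cell $\gamma$, the same components $\ft{\gamma}_A$ work. Their naturality with respect to new arrows $t$ follows from the 2-cell condition relating $\gamma$ with $\alpha$.

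\emph{Strictness.} Strict functoriality is immediate from uniqueness of induced maps.

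Next come the unit and counit. The unit $\LEun{p}=\quot{p}\arcmp(\ct{E}\hookrightarrow\Tr{p})$ is strictly 2-natural, because $\ft{\LEFun{F}}$ restricted to $\ct{E}$ is $\ft{F}$.

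For the counit, when $p$ is a LECC I define $\Tr{p}\to\ct{E}$ by the identity on $\ct{E}$ and $t\mapsto t^{\#}$, the transpose along $\term{}\dashv\comp{}$ from Lemma~\zcref{lem:prf-trm}, precomposed with $!$. Four points need checking.
\begin{enumerate}
\item \emph{Functoriality.} The three mixed compositions must be respected. For instance, $(s\arcmp\jcomp{}B\arcmp t)^{\#}\arcmp ! = s^{\#}\arcmp!\arcmp t^{\#}\arcmp !$, and this is an exercise with the unit/counit formulas for transposition together with Lemma~\zcref{prop:comp-term-inv}.
\item \emph{Coequalizing.} $(\comp{}h\arcmp\eta)^{\#}=h$ by the transposition bijection, so $\incl$ and $\triang$ are coequalized.
\item \emph{Fibration morphism.} It preserves cartesian arrows, since these come from $\ct{E}$.
\item \emph{LE-morphism.} It lies over $\Id{\ct{B}}$, so it is automatically an LE-morphism.
\end{enumerate}
2-naturality of $\LEcoun{}$ in LE-morphisms $F$ again reduces to $\Tr{p}$, where it amounts to $F$ commuting with transposition. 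This holds because the mates $\theta$ and $\mate{(\theta^{-1})}$ are invertible.

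Finally, the triangular identities.
\begin{itemize}
\item $\LEcoun{\LEFree p}\arcmp\LEFree\LEun{p}=\mathrm{id}$ is checked on generators of $\Tr{p}$. Arrows of $\ct{E}$ are fixed. A new arrow $t$ is sent to $\quot{}(t)^{\#}\arcmp!$. By the proof that $\LEFib{p}$ is LECC, $\quot{}(t)^{\#}$ is $\quot{}t$ itself, so this is $\quot{}t$ again, because $!$ is identified with the identity of $\term{}X$.
\item $\forgLE\LEcoun{p}\arcmp\LEun{\forgLE p}=\mathrm{id}$ is immediate, since the counit is the identity on $\ct{E}$.
\end{itemize}

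I expect the main obstacle to be the functoriality of the counit functor $\Tr{p}\to\ct{E}$ together with the first triangle identity. Both require careful bookkeeping of transposes versus the terminal-object maps $!$. I would organize that bookkeeping via the isomorphism of Proposition~\zcref{prop:prf-trm} between $\gtsigma$ and $\tsigma$.
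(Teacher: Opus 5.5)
Your outline is the paper's: the same unit $\LEun{p}=\quot{p}\arcmp(\ct{E}\hookrightarrow\Tr{p})$, the same counit (identity on $\ct{E}$, $t\mapsto t^{\#}\arcmp{!}$), every check reduced to $\Tr{p}$ through the coequalizer, and strict triangle identities. You also spell out checks the paper leaves implicit: functoriality of the counit on $\Tr{p}$, coequalization, and the 2-cells.

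\textbf{The gap.} The step that fails as written is ``it lies over $\Id{\ct{B}}$, so it is automatically an LE-morphism''. That principle is false. Take $\cod:\Arr{\ct{B}}\to\ct{B}$ as a morphism from the LECC $\cod$ to the LECC $\Id{\ct{B}}$ (with $\term{}=\comp{}=\Id{\ct{B}}$). It lies over the identity and preserves terminals, but its comprehension comparison $\dom f\to\cod f$ is $f$ itself, which need not be invertible.

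Your argument for $\LEFun{F}$ (``it preserves comprehension because it does so on $\ct{E}$'') has the same defect. The comparison $\mate{(\theta^{-1})}_A$ is the transpose of the image of the counit $\coun{}_A$ of $\term{}\dashv\comp{}$ in $\LEFib{p}$, precomposed with $\theta^{-1}$. That counit is not an arrow of $\ct{E}$.

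\textbf{The fix.} The missing computation is the counit itself. By the construction in \zcref{lem:ess-im},
$\coun{}_A=\weakar{A}{A}\arcmp\gel{A}^{\#}=\quot{}(\comp{}\weakar{A}{A}\arcmp\gel{A})=\quot{}(\id{\comp{}A})$,
the class of the new arrow $\id{\comp{}A}$ from $\term{}\comp{}A$ to $A$. With this in hand everything works:
\begin{itemize}
\item $\Tr{F}$ sends it to $\alpha^{-1}_A$, so the comparison for $\LEFun{F}$ is $\alpha^{-1}_A$ (the vertical iso $\theta^{-1}$ does not change the underlying base arrow), hence invertible.
\item $\LEcoun{p}$ sends it to $(\id{\comp{}A})^{\#}=\coun{p}_A$, so the comparison for the counit is the identity. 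The second triangle identity also needs this, at the level of comprehension isomorphisms.
\item $\coun{}_B\arcmp\term{}t=\quot{}(t)$, as an arrow out of $\term{}pA$, is exactly the fact your first triangle identity relies on.
\end{itemize}

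\textbf{Smaller corrections.}
\begin{itemize}
\item The equations $\Tr{F}\arcmp\incl=\incl\arcmp\ConstCat{F}$ and $\Tr{F}\arcmp\triang=\triang\arcmp\ConstCat{F}$ are needed on the nose, not up to $\alpha,\beta$, for $\quot{q}\arcmp\Tr{F}$ to coequalize. They do hold strictly: vertical maps into the fibred terminal $\ft{F}\term{p}X$ are unique, and the invertible arrow $\jcomp{q}\ft{F}\term{p}X$ has a unique section.
\item Functoriality of the counit on $\Tr{p}$ rests on uniqueness of maps into fibred terminals (e.g.\ $!\arcmp\coun{}_B=\term{}\jcomp{}B$ and $!\arcmp f=\term{}pf\arcmp{!}$) and on naturality of $\coun{}$. \zcref{prop:comp-term-inv} is needed elsewhere: it shows that the inverse of $\jcomp{}\term{}X$ used in $\triang$ is the unit of $\term{}\dashv\comp{}$, which is what makes the counit coequalize.
\item In the first triangle identity, $t^{\#}\arcmp{!}$ returns the class of $t$ because, in $\Tr{p}$, a new arrow $t$ out of $\term{}pA$ composed with the vertical $!:A\to\term{}pA$ is $t\arcmp p(!)=t$. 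It is not because $!$ is identified with an identity.
\item 2-naturality of the counit follows from three facts: the commuting square defining $\mate{(\theta^{-1})}$ in the proof of \zcref{thm:l-to-j}, naturality of $\theta$, and uniqueness of maps into terminals. Invertibility of the mates alone does not give it.
\end{itemize}
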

			
			\begin{proof}
				The unit and the counit of the 2-adjunction are respectively $\LEun{}$ and $\LEcoun{}$. They are 2-natural by \zcref{lem:LEun-nat} and \zcref{lem:LEcoun-nat}. Triangular identities follow directly from the definition of $\LEcoun{p}$.
			\end{proof}

			\section{Conclusions}
			\label{sect:conclu} 

In this paper, we have systematically investigated the relationship between two distinct categorical models of type dependency: Jacobs comprehension categories and Lawvere-Ehrhard comprehension categories. 
By comparing their respective fibrations of terms and type morphisms from the unit type, we have identified the structural principle that distinguishes them: 
whereas Jacobs comprehension categories treat type morphisms as independent data, 
in Lawvere-Ehrhard comprehension categories terms are completely determined by type morphisms from the unit type,  that is, 
 these two fibrations are isomorphic. 
Then, we have described three free constructions relating Jacobs and Lawvere-Ehrhard comprehension categories with each other and with plain fibrations. 
The obtained results are summarized in the following diagram of 2-categories: 

\[\begin{tikzcd}
	\Fib && \JComp && \JTComp \\
	\\
	&&& \LComp
	\arrow[""{name=0, anchor=center, inner sep=0}, "\JFree", bend left, from=1-1, to=1-3]
	\arrow[""{name=1, anchor=center, inner sep=0}, "\forgJ", bend left, hook', from=1-3, to=1-1]
	\arrow[""{name=2, anchor=center, inner sep=0}, "\JTFree", bend left, from=1-3, to=1-5]
	\arrow[""{name=3, anchor=center, inner sep=0}, "{\LtoJ}"', from=3-4, to=1-3, hook']
	\arrow[""{name=4, anchor=center, inner sep=0}, "\forgT", hook', from=1-5, to=1-3]
	\arrow[""{name=5, anchor=center, inner sep=0}, "\LEFree", bend left, from=1-5, to=3-4]
	\arrow[""{name=6, anchor=center, inner sep=0}, "\LEFree\arcmp\JTFree"', bend right, from=1-3, to=3-4]
	\arrow[""{name=7, anchor=center, inner sep=0}, "\forgLE", hook', from=3-4, to=1-5]
	\arrow["\dashv"{anchor=center, rotate=-90}, draw=none, from=0, to=1]
	\arrow["\dashv"{anchor=center, rotate=-90}, draw=none, from=2, to=4]
	\arrow["\dashv"{anchor=center, rotate=162}, draw=none, from=5, to=7]
	\arrow["\dashv"{anchor=center, rotate=18}, draw=none, from=6, to=3]
\end{tikzcd}\]

The 2-functor $\JFree$ builds the free comprehension category over a fibration, 
the 2-functor $\JTFree$ turns a comprehension category into a comprehension category with terminals, that is, it freely adds fibred terminal objects preserved by the comprehension functor, and 
$\LEFree$ returns the free Lawvere-Ehrhard comprehension category over a comprehension category with terminals. 
Furthermore, the compositions $\LEFree\circ\JTFree$ and $\LEFree\circ\JTFree\circ\JFree$  give the free Lawvere-Ehrhard comprehension category  over a comprehension category and over a fibration, respectively.

These free constructions capture in a principled categorical way the precise syntactic features that differentiate these models from one another. 
Moreover, they provide us with tools for modularly extending  models with new features and for generating a wide variety of new, free examples. 
Altogether, our analysis highlights 
how the presence of non-trivial type morphisms yields a richer interaction among components of  a dependent type theory, and 
how the notion of Lawvere-Ehrhard comprehension tames this interplay.

\subsubsection*{Related work}
In the last few years, there is a growing interest in the study of type theories and their models supporting some form of type morphisms. 
Coraglia and Emmenegger \cite{coraglia2023categorical} 
propose to view vertical morphisms in a generalized category with families as witnesses for a proof-relevant coercive subtyping. 
They show that every vertical morphism $c : A \to B$ induces a type casting operation transforming terms of type $A$ into terms of type $B$. 
This is essentially the proof-relevant counterpart of the subsumption rule that is usually available in type theories with subtyping. 
They also study how vertical morphisms, and so the type casting operation, interact with the most common type formers. 
Note that, by relying on the equivalence between generalized categories with families and (Jacobs) comprehension categories \cite{emmenegger}, 
these results apply also to the latter ones. 

Adjedj et al.  \cite{adjedj2026adaptt} define AdapTT, a dependent type theory endowed with extra structure in order to provide a general framework to understand type casting operations. 
This work is tightly related to the approach by Coraglia and Emmenegger: 
the semantic model of AdapTT is given using natural models with discrete opfibration, which the authors prove to be equivalent to split generalized categories with families, and so to split comprehension categories.
This means that in the model 
substitution behaves functorially, while in arbitrary comprehension categories it is functorial only up to isomorphism.

Najmaei et al. \cite{najmaei2026semantics} propose a new type theory, dubbed Comprehension Categories Type Theory (CCTT for short), specifically designed to reason synthetically about the structure of comprehension categories.
It is obtained by reflecting semantic features of theory of comprehension categories back into the syntax, notably, vertical morphisms have a dedicated judgement in the syntax. 
In particular models differ from the ones of AdapTT by dropping the splitness requirement: in this sense they give a syntax able to reason about comprehension categories in full generality.

\subsubsection*{Future work} 

We envision several directions for further development.
First, a natural question is whether the bi-adjunctions we have introduced are 2-monadic. More precisely, we already know that all of them induce a pseudo-monad on the domain of the left bi-adjoint, which has an associated 2-category of pseudo-algebras \cite{blackwell1989two}. Proving 2-monadicity then amounts to showing that the codomain of the left bi-adjoint is bi-equivalent to this category of pseudo-algebras. This would ensure that the considered comprehension structures are essentially algebraic concepts, providing us with useful categorical constructions on them (e.g., regarding limits and colimits).

Another natural direction is to study how the construction of the free Lawvere-Ehrhard comprehension category interacts with the type constructors (e.g., $\Pi$-types, $\Sigma$-types, and $\mathsf{Id}$-types) available in the underlying comprehension category. This could help clarify how these type formers interact with the Lawvere-Ehrhard condition and whether their definitions need to be adjusted in this context.

Finally, it would be interesting to develop a syntax for Lawvere-Ehrhard comprehension categories, in the same spirit as the syntax for comprehension categories proposed in \cite{najmaei2026semantics}. This would provide an internal language for Lawvere-Ehrhard comprehension categories, enabling synthetic reasoning about them.

\bibliographystyle{alphaurl} 
\bibliography{bibliography}

\appendix
\section{Proofs of technical results of \zcref{sect:free-jcomp}}\label{app:A}
In this section we prove some technical results concerning \zcref{sect:free-jcomp}.
Here we write $\twont{\eta}$ and $\twont{\epsilon}$ instead of $\Jun{}$ and $\Jcoun{}$.

\subsection{From $\Fib$ to $\JComp$}
\begin{rem}\label{rmk:lemma}
	Given an arrow $f=(g,h):Y=((X,\vec{A}), A_{n+1})\rightarrow Z=((X', \vec{B}), B_{m+1})$, one has $\chi\ft{\twont{\epsilon}_p}(f): \chi A_{n+1}^*\rightarrow \chi B_{m+1}^*$, so it is the square
	\[\begin{tikzcd}
		{\comp{} A_{n+1}^*} & {\comp{} B_{m+1}^*} \\
		{\fb{\twont{\epsilon}_p}(X,\vec{A})} & {\fb{\twont{\epsilon}_p}(X',\vec{B})}
		\arrow["{\comp{} h^*}", from=1-1, to=1-2]
		\arrow["{\chi A_{n+1}^*}"', from=1-1, to=2-1]
		\arrow["{\chi B_{m+1}^*}", from=1-2, to=2-2]
		\arrow["\fb{\twont{\epsilon}_p}g"', from=2-1, to=2-2]
	\end{tikzcd}\]
	
	Analogously, $\Arr{(\fb{\twont{\epsilon}_p})}\JFC{p}(f)$ is the square 
	\[\begin{tikzcd}
		{\fb{\twont{\epsilon}_p}(X,(\vec{A},A_{n+1}))} & {\fb{\twont{\epsilon}_p}((X',(\vec{B},B_{m+1}))} \\
		{\fb{\twont{\epsilon}_p}(X,\vec{A})} & {\fb{\twont{\epsilon}_p}(X',\vec{B})}
		\arrow["{\fb{\twont{\epsilon}_p}\carr{f}}", from=1-1, to=1-2]
		\arrow["{\fb{\twont{\epsilon}_p}\JFC{p} Y}"', from=1-1, to=2-1]
		\arrow["{\fb{\twont{\epsilon}_p}\JFC{p} Z}", from=1-2, to=2-2]
		\arrow["\fb{\twont{\epsilon}_p}g"', from=2-1, to=2-2]
	\end{tikzcd}\]
\end{rem}

So we only need to prove that $\fb{\twont{\epsilon}_p}\JFC{p} Y=\chi A_{n+1}^*$ and that $\comp{} h^*=\fb{\twont{\epsilon}_p}\carr{f}$. These will be shown in the following lemmas.

\begin{lem}\label{lem:jcoun1}
	Let $Y=((X,\vec{A}), A_{n+1})$ in $\JDomE{p}$. Then $\fb{\twont{\epsilon}_p}\JFC{p} Y=\C{n}{n+1}$.
\end{lem}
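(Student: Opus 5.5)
We compute $\fb{\twont{\epsilon}_p}$ on the arrow $\JFC{p} Y:(X,(\vec{A},A_{n+1}))\to(X,\vec{A})$ directly from the definition given after \zcref{lem:Cdef}: $\fb{\twont{\epsilon}_p}f$ is the last member $g_m$ of the unique family $\{g_i\}_{i=0..m}$ characterized there. Here the domain list has length $n+1$ and the codomain list has length $m=n$. We also have $\fp{\JFC{p} Y}=\id{X}$, $\sp{\JFC{p} Y}=\ini{\num{n}}{\num{n+1}}$ and $\tp{\JFC{p} Y}=\{\id{A_i}\}_{i\in\num{n}}$. The strategy is therefore not to unfold the inductive construction, but to exhibit a candidate family and invoke the uniqueness clause of \zcref{lem:Cdef}.

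The candidate is $g_i\colon=\C{i}{n+1}:\fb{\twont{\epsilon}_p}(X,(\vec{A},A_{n+1}))\to\fb{\twont{\epsilon}_p}(X,\vec{A}\rstn_i)$ for $i=0,\dots,n$. We must check the two equations of \zcref{lem:Cdef}.

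First, $\C{0}{i}\arcmp g_i=\fp{f}\arcmp\C{0}{n+1}$. Since $\fp{f}=\id{X}$, this reads $\C{0}{i}\arcmp\C{i}{n+1}=\C{0}{n+1}$, which is an instance of \zcref{lem:C-coherence}.

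Second, $g_i=\jcomp{}\reind{B_{i+1}}\arcmp g_{i+1}$ with $B_{i+1}=A_{i+1}$. This reads $\C{i}{n+1}=\jcomp{}\reind{A_{i+1}}\arcmp\C{i+1}{n+1}$, which is exactly the defining clause $\C{i-1}{k}=\jcomp{}\reind{A_i}\arcmp\C{i}{k}$ of \zcref{def:comp-maps}, up to reindexing. The objects $\fb{\twont{\epsilon}_p}(X,\vec{A}\rstn_i)$ and the reindexings $\reind{A_{i+1}}$ along $\C{0}{i}$ are literally the same whether computed from the list $\vec{A}$ or from $(\vec{A},A_{n+1})$, since they depend only on the first $i$ (resp. $i+1$) entries. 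So both families of $c$-maps agree on their common range. By uniqueness in \zcref{lem:Cdef}, $\fb{\twont{\epsilon}_p}\JFC{p} Y=g_n=\C{n}{n+1}$.

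The main obstacle is a bookkeeping subtlety rather than a genuine difficulty. The uniqueness argument of \zcref{lem:Cdef} is stated for the family built from the pullback squares involving $\comp{}(\tp{f})_{i+1}$ and $\C{\sp{f}(i+1)}{n}$, and uniqueness really follows from the pullback property of $\jcomp{}$ applied to cartesian liftings. So I would double-check that uniqueness holds purely from the two displayed equations, without reference to the $\tp{f}$-components. The lemma as stated claims this, because $g_0$ is forced and each $g_{i+1}$ is determined by its composites with the two legs of a pullback. In our case the leg into $\comp{}B_{i+1}$ involves $\comp{}\id{A_{i+1}}$, which is an identity, so the candidate also satisfies that implicit constraint. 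If needed, I would verify this directly by noting that $\C{i+1}{n+1}$ composed with the top map of the pullback square agrees with the map in the diagram of \zcref{lem:Cdef} when $\sp{f}(i+1)=i+1$.
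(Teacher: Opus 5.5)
Your candidate $g_i=\C{i}{n+1}$ is the right family and the conclusion is correct. However, the step ``by uniqueness in \zcref{lem:Cdef}'' cannot carry the argument. Read literally, with only the two displayed equations, that uniqueness clause is false, and your gloss that these equations fix each $g_{i+1}$ through both legs of a pullback is inaccurate.

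Since $\C{0}{i+1}=\C{0}{i}\arcmp\jcomp{}\reind{B_{i+1}}$, the first equation for $g_{i+1}$ already follows from the second one together with the first equation for $g_i$. So the only constraints are $g_0=\fp{f}\arcmp\C{0}{n}$ and $\jcomp{}\reind{B_{i+1}}\arcmp g_{i+1}=g_i$. These determine $g_{i+1}$ only when $\jcomp{}\reind{B_{i+1}}$ is monic. Indeed they never mention $\sp{f}$ or $\tp{f}$, so they cannot characterize $\fb{\twont{\epsilon}_p}f$: already for $\vec{A}=\vec{B}=(A_1)$ and $\fp{f}=\id{X}$, the value should depend on the vertical map $(\tp{f})_1$.

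What actually determines $g_{i+1}$ in the construction is the second leg into $\comp{}B_{i+1}$. In the notation of that lemma, it is built from $\comp{}(\tp{f})_{i+1}$, the comprehension of the chosen cartesian lifting at $A_{\sp{f}(i+1)}$, and $\C{\sp{f}(i+1)}{n}$. So the check you relegate to ``if needed'' in your last paragraph is the real content of the proof, not an optional double-check.

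It does go through. Argue by induction on $i$, with $g_0=\C{0}{n+1}$ forced. Write $h_{i+1}:\reind{A_{i+1}}\to A_{i+1}$ for the chosen cartesian lifting of $\C{0}{i}$; this is the same arrow whether computed from $\vec{A}$ or from $(\vec{A},A_{n+1})$. Then both $g_{i+1}$ and $\C{i+1}{n+1}$ have lower leg $\C{i}{n+1}$ and upper leg $\comp{}h_{i+1}\arcmp\C{i+1}{n+1}$, because $\sp{f}(i+1)=i+1$ and $(\tp{f})_{i+1}=\id{A_{i+1}}$. The pullback property of $\jcomp{}h_{i+1}$ then gives $g_{i+1}=\C{i+1}{n+1}$, and in particular $g_n=\C{n}{n+1}$.

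With that repair, your route is genuinely different from the paper's. The paper inducts on the length of $\vec{A}$ and only examines the last pullback stage. There it takes as lower leg $\fb{\twont{\epsilon}_p}$ of the composite $\JFC{p}Z\arcmp\JFC{p}Y$, where $Z$ drops the last entry of the context list. It rewrites this via functoriality of $\fb{\twont{\epsilon}_p}$ and the inductive hypothesis for $\JFC{p}Z$.

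You instead compute the whole family $\{g_i\}$ for the single arrow $\JFC{p}Y$ by induction on $i$. This needs neither functoriality of $\fb{\twont{\epsilon}_p}$ nor an outer induction on $n$, and it makes the lower leg explicit at every stage. That is precisely the information the paper's inductive step obtains only through an expression that still involves $\fb{\twont{\epsilon}_p}\JFC{p}Y$ itself. So your version is the more self-contained of the two.
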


\begin{proof}
	By induction on $n$.
	\begin{itemize}[align=left]
		\item[\emph{n=0:}] By definition, $\fb{\twont{\epsilon}_p}\JFC{p} Y= \fp{\JFC{p} Y}\arcmp \C{0}{n+1}=\C{0}{1}$;
		\item[\emph{n+1:}] Let $Z=((X,\vec{A}\rstn _n), A_{n+1})$. By inductive hypothesis $\fb{\twont{\epsilon}_p}\JFC{p} Z=\C{n}{n+1}$. 
		By definition $\fb{\twont{\epsilon}_p}\JFC{p} Y$ is the only arrow defined by the universal property of the pullback
		\[\begin{tikzcd}
			&& {} & {\fb{\twont{\epsilon}_p}(X,\vec{A})} \\
			{\fb{\twont{\epsilon}_p}(X,(\vec{A}, A_{n+2}))} && {\comp{} A_{n+1}} & {\fb{\twont{\epsilon}_p}(X,\vec{A}\rstn_n)} \\
			& {\fb{\twont{\epsilon}_p}(X,\vec{A})} & {\comp{} A_{n+1}} & X \\
			& {\fb{\twont{\epsilon}_p}(X,\vec{A}\rstn_n)} & {X}
			\arrow[from=1-4, to=2-3]
			\arrow["{\C{n}{n+1}}", from=1-4, to=2-4]
			\arrow["{\C{n+1}{n+2}}", from=2-1, to=1-4]
			\arrow["\fb{\twont{\epsilon}_p}\JFC{p} Y", dashed, from=2-1, to=3-2]
			\arrow["\fb{\twont{\epsilon}_p}h"', bend right, from=2-1, to=4-2]
			\arrow["{\id{\comp{} A_{n+1}}}"', from=2-3, to=3-3]
			\arrow[from=2-3, to=3-4]
			\arrow["{\C{0}{n}}", from=2-4, to=3-4]
			\arrow[from=3-2, to=3-3]
			\arrow["\chi A_{n+1}^*", from=3-2, to=4-2]
			\arrow["{\chi A_{n+1}}"', from=3-3, to=4-3]
			\arrow["{\id{X}}"', from=3-4, to=4-3]
			\arrow["{\C{0}{n}}", from=4-2, to=4-3]
		\end{tikzcd}\]
		where $h=\JFC{p} Z\arcmp\JFC{p} Y$.
		So $\fb{\twont{\epsilon}_p}h=\fb{\twont{\epsilon}_p}\JFC{p} Z\arcmp \fb{\twont{\epsilon}_p}\JFC{p} Y=\C{n}{n+1}\arcmp\JFC{p} Y$. Since $\chi A_{n+1}^*=\C{n}{n+1}$ one has that $\C{n+1}{n+2}$ makes the left triangle commute. It clearly makes also the right triangle to commute since the arrows are the same, so by the universal property of the pullback we have that $\fb{\twont{\epsilon}_p}\JFC{p} Y=\C{n+1}{n+2}$.
	\end{itemize}
\end{proof}

\begin{lem}\label{lem:jcoun2}
	Let $f = (g,h):((X,\vec{A}), A_{n+1})\rightarrow ((X',\vec{B}), B_{m+1})$ be an arrow in $\JDomE{p}$. Then $\comp{} h^* = \fb{\twont{\epsilon}_p}\carr{f}$.
\end{lem}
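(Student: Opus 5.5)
We identify both sides as mediating arrows into the same pullback and conclude by uniqueness. By \zcref{lem:jcoun1} the domain and codomain already agree: $\fb{\twont{\epsilon}_p}(X,(\vec{A},A_{n+1})) = \comp{}A_{n+1}^*$ and $\fb{\twont{\epsilon}_p}(X',(\vec{B},B_{m+1})) = \comp{}B_{m+1}^*$, with $\C{n}{n+1}=\jcomp{}A_{n+1}^*$ and $\C{m}{m+1}=\jcomp{}B_{m+1}^*$.

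The arrow $\fb{\twont{\epsilon}_p}\carr{f}$ is by definition $g_{m+1}$ of the family in \zcref{lem:Cdef} for $\carr{f}=(\fp{g},\sp{g}+1,\tp{g}\sqcup h)$. At the top index $i=m$ we have $\sp{\carr{f}}(m+1)=n+1$ and $(\tp{\carr{f}})_{m+1}=h$. The last inductive step therefore makes $g_{m+1}$ the unique arrow into the pullback $\jcomp{}(\carar{\C{0}{m}}{B_{m+1}})$, i.e.\ the square with vertex $\comp{}B_{m+1}^*$ over $\comp{}B_{m+1}$, $\fb{\twont{\epsilon}_p}(X',\vec{B})$ and $X'$, satisfying two conditions:
\[
\jcomp{}B_{m+1}^*\arcmp g_{m+1}=g_m
\quad\text{and}\quad
\comp{}\carar{\C{0}{m}}{B_{m+1}}\arcmp g_{m+1}=\comp{}h\arcmp \comp{}\carar{\C{0}{n}}{A_{n+1}}.
\]
Here the second leg is obtained by composing along $\C{n+1}{n+1}=\id{}$ and the map $\comp{}A_{n+1}^*\to\comp{}A_{n+1}$. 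The first step is to check that $g_m$ for $\carr{f}$ equals $\fb{\twont{\epsilon}_p}g\arcmp\jcomp{}A_{n+1}^*$. To do this, we observe that the family $\{g'_i\arcmp \C{n}{n+1}\}_{i\le m}$, where $g'_i$ is the family for $g$, satisfies the defining equations of \zcref{lem:Cdef} for $\carr{f}$ up to index $m$. It suffices to note that the equations up to index $m$ only involve $\carr{f}$ restricted to $\num{m}$, which agrees with $g$, and to use \zcref{lem:C-coherence}. Uniqueness then gives $g_i=g'_i\arcmp\C{n}{n+1}$.

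Next we recall that $h^*:=\ft{\twont{\epsilon}_p}f$ is the unique arrow over $\fb{\twont{\epsilon}_p}g$ such that $\carar{\C{0}{m}}{B_{m+1}}\arcmp h^* = h\arcmp \carar{\C{0}{n}}{A_{n+1}}$. Applying the functor $\jcomp{}$ to this equation yields a commutative square. Its bottom component gives $\jcomp{}B_{m+1}^*\arcmp\comp{}h^*=\fb{\twont{\epsilon}_p}g\arcmp\jcomp{}A_{n+1}^*$, which is $g_m$ by the first step. Applying $\comp{}$ to the same equation gives $\comp{}\carar{\C{0}{m}}{B_{m+1}}\arcmp\comp{}h^*=\comp{}h\arcmp\comp{}\carar{\C{0}{n}}{A_{n+1}}$. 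Hence $\comp{}h^*$ satisfies both characterizing conditions of $g_{m+1}$, and the uniqueness part of the pullback property yields $\comp{}h^*=\fb{\twont{\epsilon}_p}\carr{f}$.

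The main obstacle is bookkeeping rather than conceptual difficulty. One must verify precisely that the leg $\comp{}A_{n+1}^*\to\comp{}A_{n+1}$ appearing in the pullback diagram of \zcref{lem:Cdef} (for the index $\sp{\carr f}(m+1)=n+1$) is $\comp{}\carar{\C{0}{n}}{A_{n+1}}$ composed with $\C{n+1}{n+1}=\id{}$. One must also handle the restriction argument that identifies the lower part of the family; this is where \zcref{lem:C-coherence} and the uniqueness clause of \zcref{lem:Cdef} must be applied carefully.
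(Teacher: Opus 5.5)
Your proof is correct, and its outer structure matches the paper's. Both $\comp{}h^*$ and $\fb{\twont{\epsilon}_p}\carr{f}$ are exhibited as mediating arrows into the pullback obtained by applying $\jcomp{}$ to the cartesian lifting of $\C{0}{m}$ at $B_{m+1}$. The leg into $\comp{}B_{m+1}$ is matched by applying $\comp{}$ to the defining equation of $h^*$, and the other leg by applying $\jcomp{}$ to $h^*$.

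The difference lies in how you show that this other leg is $g_m=\fb{\twont{\epsilon}_p}g\arcmp\jcomp{}A_{n+1}^*$. Write $Y$ and $Z$ for the domain and codomain of $f$. The paper does not analyse the lower part of the family. It writes the leg as $\fb{\twont{\epsilon}_p}(\JFC{p}Z\arcmp\carr{f})$, which is legitimate because $\jcomp{}B_{m+1}^*\arcmp\fb{\twont{\epsilon}_p}\carr{f}=\fb{\twont{\epsilon}_p}\JFC{p}Z\arcmp\fb{\twont{\epsilon}_p}\carr{f}=\fb{\twont{\epsilon}_p}(\JFC{p}Z\arcmp\carr{f})$ by \zcref{lem:jcoun1} and functoriality. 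It then rewrites this via the commuting square $\JFC{p}Z\arcmp\carr{f}=g\arcmp\JFC{p}Y$ in $\FFFP{p}$, functoriality of $\fb{\twont{\epsilon}_p}$, and \zcref{lem:jcoun1} once more. You instead compare the truncated family of $\carr{f}$ with $\{g'_i\arcmp\C{n}{n+1}\}_{i\le m}$ by induction using \zcref{lem:C-coherence}. In effect, you reprove by hand the instance of functoriality that the paper invokes.

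The paper's version is shorter and reuses established facts. Yours is more self-contained: it needs neither \zcref{lem:jcoun1} nor the (only sketched) functoriality of $\fb{\twont{\epsilon}_p}$. In fact your opening appeal to \zcref{lem:jcoun1} is superfluous, since $\fb{\twont{\epsilon}_p}(X,(\vec{A},A_{n+1}))=\comp{}A_{n+1}^*$ and $\C{n}{n+1}=\jcomp{}A_{n+1}^*$ already hold by \zcref{def:comp-maps}.

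Do make explicit that the uniqueness you invoke includes, at each step, the condition on the leg into $\comp{}B_{i+1}$. That condition is determined by $\sp{\carr{f}}$ and $\tp{\carr{f}}$ on $\num{m}$, which is what your remark about restricting to $\num{m}$ implicitly relies on. The two equations displayed in \zcref{lem:Cdef} involve only $\fp{f}$ and do not determine the family on their own.
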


\begin{proof}		
	\[\begin{tikzcd}
		&& {} & {\fb{\twont{\epsilon}_p}(X,(\vec{A},A_{n+1}))} \\
		{\fb{\twont{\epsilon}_p}(X,(\vec{A},A_{n+1}))} && {\comp{} A_{n+1}} & {\fb{\twont{\epsilon}_p}(X,\vec{A})} \\
		& {\fb{\twont{\epsilon}_p}(X',(\vec{B}, B_{m+1}))} & {\comp{} B_{m+1}} & X \\
		& {\fb{\twont{\epsilon}_p}(X', \vec{B})} & {X'}
		\arrow[from=1-4, to=2-3]
		\arrow["{\C{n}{n+1}}", from=1-4, to=2-4]
		\arrow["{\C{n+1}{n+1}}", from=2-1, to=1-4]
		\arrow["\fb{\twont{\epsilon}_p}\carr{f}"', dashed, from=2-1, to=3-2]
		\arrow["\fb{\twont{\epsilon}_p}(\JFC{p} Z\arcmp \carr{f})"', bend right, from=2-1, to=4-2]
		\arrow["{\comp{}h}"', from=2-3, to=3-3]
		\arrow[from=2-3, to=3-4]
		\arrow["{\C{0}{n}}", from=2-4, to=3-4]
		\arrow[from=3-2, to=3-3]
		\arrow["\chi B_{m+1}^*", from=3-2, to=4-2]
		\arrow["{\chi B_{m+1}}"', from=3-3, to=4-3]
		\arrow["{\fp{g}}"', from=3-4, to=4-3]
		\arrow["{\C{0}{m}}", from=4-2, to=4-3]
	\end{tikzcd}\]			
	Since $\comp{} h^*$	 makes the right triangle to commute, we only need to show that $\fb{\twont{\epsilon}_p}(\JFC{p}Z\arcmp \carr{f})=\chi B_{m+1}^* \arcmp (\comp{} h^*)$. But one has $\fb{\twont{\epsilon}_p}(\JFC{p}Z\arcmp \carr{f})=\fb{\twont{\epsilon}_p}(g\arcmp \JFC{p}Y)=\fb{\twont{\epsilon}_p}(g)\arcmp \chi A_{n+1}^*=\chi B_{m+1}^* \arcmp (\comp{} h^*)$, where the first and the third equalities hold by commutativity of the two diagrams in \zcref{rmk:lemma}. Then by the universal property of the pullback one has that $\fb{\twont{\epsilon}_p}\carr{f}=\comp{} h^*$. 
\end{proof}

\begin{prop}\label{prop:junit}
	$\twont{\eta}$ is a 2-natural transformation.
\end{prop}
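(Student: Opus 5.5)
We have to show that the family $\twont{\eta}_p : p \to \forgJ\JFree p$, indexed by fibrations $p$, is a 2-natural transformation $\Id{\Fib}\nt\forgJ\arcmp\JFree$. The components $\twont{\eta}_F$ have been declared to be identity 2-cells, so the proof has three parts:
\begin{enumerate}
\item each $\twont{\eta}_p$ is a 1-cell of $\Fib$;
\item the naturality squares for 1-cells commute strictly;
\item the 2-cell condition $\forgJ\JFun{\alpha}\rwhisk{\twont{\eta}_p}=\twont{\eta}_q\rwhisk{\alpha}$ holds for every 2-cell $\alpha$.
\end{enumerate}
Because the identities are strict, no pseudo-naturality coherence beyond these equalities needs checking.

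For the first part, recall that $\twont{\eta}_p$ is the map into the 2-pullback induced by $\term{\fffp{p}}$ and $\Id{\ct{E}}$. The required square $\JFib{p}\arcmp\ft{\twont{\eta}_p}=\term{\fffp{p}}\arcmp p$ therefore holds by construction. It remains to see that $\ft{\twont{\eta}_p}$ preserves cartesian arrows. An arrow $h:A\to B$ over $g$ is sent to $(\term{\fffp{p}}g,h)$. Cartesian arrows in a pullback fibration are exactly the pairs whose second component is $p$-cartesian. Hence cartesian arrows are preserved.

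For the second part, fix a fibration morphism $F:p\to q$. We check the two equalities $\fb{\JFun{F}}\arcmp\term{\fffp{p}}=\term{\fffp{q}}\arcmp\fb{F}$ and $\ft{\JFun{F}}\arcmp\ft{\twont{\eta}_p}=\ft{\twont{\eta}_q}\arcmp\ft{F}$ componentwise.
\begin{itemize}
\item On the base: an object $X$ goes to $(\fb{F}X,())$ either way. An arrow $u$ goes to the triple $(\fb{F}u,\,\num{0}\to\num{0},\,\emptyset)$ either way, since $\fb{\JFun{F}}$ keeps list lengths and the empty family.
\item On the total categories: $A$ goes to $((\fb{F}pA,()),\ft{F}A)$ either way, using $q\ft{F}=\fb{F}p$. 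Arrows agree by the explicit description of $\ft{\JFun{F}}$ as $(\fb{\JFun{F}}(g),\ft{F}h)$.
\end{itemize}
Alternatively, one can invoke the uniqueness part of the universal property of the 2-pullback: both composites have the same projections to $\FFFP{q}$ and $\ct{E}'$. I would use this route to avoid manual bookkeeping.

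Compatibility with composition and identities of 1-cells is then automatic, since all $\twont{\eta}_F$ are identities and $\JFree$ is a 2-functor by \zcref{lem:J2fun}.

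For the third part, fix $\alpha:F\nt G$ in $\Fib$ and compare components.
\begin{itemize}
\item On the base: at $X$, the component of $\fb{\JFun{\alpha}}=\ft{\Fffp{\alpha}}$ at $(X,())$ is the triple $(\fb{\alpha}_X,\id{\num{0}},\emptyset)$. This is exactly $\term{\fffp{q}}(\fb{\alpha}_X)$.
\item On the total categories: at $A$, the component of $\ft{\JFun{\alpha}}$ at $((pA,()),A)$ is the pair of that base component and $\ft{\alpha}_A$. This equals $\ft{\twont{\eta}_q}(\ft{\alpha}_A)$.
\end{itemize}
Again, this can be argued via the 2-dimensional uniqueness of the 2-pullback, since both 2-cells have the same whiskerings along the two projections.

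The only point needing genuine care is the third part: one must make explicit how $\Fp(\alpha)$ acts on objects with empty list, namely that it is $\fb{\alpha}$ in the first coordinate, the identity function, and the empty family. Everything else reduces to the universal property of the 2-pullback.
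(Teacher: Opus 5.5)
Your proposal is correct and follows essentially the same route as the paper, whose proof consists only of observing that the naturality squares commute strictly, $\twont{\eta}_q\arcmp F=\JFun{F}\arcmp\twont{\eta}_p$, so that the components $\twont{\eta}_F$ are identities. You go further by checking that each $\twont{\eta}_p$ preserves cartesian arrows and that the 2-cell axiom holds, correctly isolating the action of $\Fp(\alpha)$ on empty lists as the one point needing care; the paper leaves both of these checks implicit.
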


\begin{proof}
	Let $F:p\rightarrow q$ be a fibration morphism, and consider the diagram
	\[\begin{tikzcd}
		& {\ct{E'}} && {\JDom{\ct{E}'}{q}} \\
		{\ct{E}} && {\JDomE{p}} \\
		& {\ct{B'}} && {\FFFP{q}} \\
		{\ct{B}} && {\FFFP{p}}
		\arrow[""{name=0, anchor=center, inner sep=0}, "{{\ft{\twont{\eta}_{q}}}}", from=1-2, to=1-4]
		\arrow["{{q}}"{pos=0.3}, from=1-2, to=3-2]
		\arrow["{{\JFib{q}}}", from=1-4, to=3-4]
		\arrow["{\ft{F}}", from=2-1, to=1-2]
		\arrow["{{\ft{\twont{\eta}_p}}}"{pos=0.8}, from=2-1, to=2-3, crossing over]
		\arrow["p", from=2-1, to=4-1]
		\arrow[""{name=1, anchor=center, inner sep=0}, "{{\ft{\JFun{F}}}}"', from=2-3, to=1-4]
		\arrow[""{name=2, anchor=center, inner sep=0}, "{{\fb{\twont{\eta}_{q}}}}"{pos=0.2}, from=3-2, to=3-4]
		\arrow["{{\JFib{p}}}"'{pos=0.3}, from=2-3, to=4-3, crossing over]
		\arrow["{\fb{F}}"{pos=0.7}, from=4-1, to=3-2]
		\arrow["{\fb{\twont{\eta}_p}}"', from=4-1, to=4-3]
		\arrow[""{name=3, anchor=center, inner sep=0}, "{{\fb{\JFun{F}}}}"', from=4-3, to=3-4]
		\arrow["\ft{\twont{\eta}_F}"', bend left, shorten <=7pt, shorten >=7pt, Rightarrow, from=1, to=0]
		\arrow["\fb{\twont{\eta}_F}"', bend left, shorten <=5pt, shorten >=5pt, Rightarrow, from=3, to=2]
	\end{tikzcd}\]
	
	It is easy to see that $\twont{\eta}_q\arcmp F=\JFun{F}\arcmp\twont{\eta}_p$.
\end{proof}

\begin{prop}\label{prop:jcounit}
	$\twont{\epsilon}$ is a pseudo-natural transformation.
\end{prop}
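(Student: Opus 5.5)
To prove that $\twont{\epsilon}$ is pseudo-natural, I would check the three standard requirements. The first is that each component $\Jcoun{F}$ is an invertible 2-cell in $\JComp$ filling the square $\twont{\epsilon}_q\arcmp\JFun{F}\nt F\arcmp\twont{\epsilon}_p$. The second is that the assignment $F\mapsto\Jcoun{F}$ is natural with respect to 2-cells $\phi:F\nt G$. The third is that it is compatible with identities and composition of 1-cells.

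\textbf{The components are 2-cells of $\JComp$.} The components $(\fb{\Jcoun{F}})_{(X,\vec{A})}$ and $(\ft{\Jcoun{F}})_{((X,\vec{A}),A_{n+1})}$ have already been defined by simultaneous induction on $n$, and they are invertible. Their compatibility, $q\ft{\Jcoun{F}}=\fb{\Jcoun{F}}\JFib{p}$, holds by construction, since each top component lies over the corresponding bottom one. What remains is naturality in $(X,\vec{A})$ and in arrows of $\JDomE{p}$.

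For the bottom component, fix $f:(X,\vec{A})\to(X',\vec{B})$. Both sides of the naturality square are maps into $\fb{\twont{\epsilon}_q}(\fb{F}X',\ft{F}\vec{B})$, which is an iterated pullback. By the uniqueness clause of \zcref{lem:Cdef}, applied to $q$, it suffices to show two things: that the two composites agree after postcomposition with $d^0_m$, and that they induce compatible families at every level $i$. I would prove this by induction on $i$.
\begin{itemize}
\item At level $0$ both composites equal $\fb{F}\fp{f}\arcmp\fb{F}\C{0}{n}$, using $d^0_n\arcmp\fb{\Jcoun{F}}=\fb{F}\C{0}{n}$.
\item At level $i+1$ I would use that $\psi$ is natural and that $\alpha$ is a natural iso over the identity. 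This transports the pullback square defining $g_{i+1}$ in $p$ to the corresponding one in $q$. Here I use that $\fb{F}$ sends the pullback $\jcomp{p}$ of a cartesian arrow to a square isomorphic, via $\alpha$, to a pullback.
\end{itemize}
Naturality of the top component then follows from uniqueness of factorizations through the cartesian arrows $j_{k}$. Both composites lie over the same base arrow, by the bottom naturality, and agree after postcomposition with $j$.

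Finally, for the 2-cell condition of $\JComp$, the coherence with $\alpha$ and the comprehension isos reduces to the equation $\fb{\Jcoun{F}}_{(X,(\vec{A},A_{n+1}))}=\comp{q}(\ft{\Jcoun{F}})\arcmp\psi$. This is precisely the inductive definition, combined with \zcref{lem:jcoun1} and \zcref{lem:jcoun2}, which show that $\twont{\epsilon}$ preserves comprehension strictly.

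\textbf{Naturality in 2-cells and coherence with composition.} For $\phi:F\nt G$ I need
\[\Jcoun{G}\arcmp(\twont{\epsilon}_q\JFun{\phi})=(\phi\,\twont{\epsilon}_p)\arcmp\Jcoun{F}.\]
I would prove this by induction on the length, using that $\phi$ is a 2-cell of $\JComp$ (so $\fb\phi$ is compatible with $\psi$) and uniqueness of cartesian factorizations. Unit and composition coherence work the same way:
\begin{itemize}
\item $\Jcoun{\Id{}}$ is the identity, because the unique vertical comparison between equal cleavage choices is the identity.
\item $\Jcoun{G\arcmp F}$ equals the pasting of $\Jcoun{G}$ and $\Jcoun{F}$. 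Both are vertical arrows over the same base map factoring $\ft{G}\ft{F}i_k$ through the chosen lift. By induction, their bottom components agree, using that the comprehension iso of $G\arcmp F$ is $\beta*\alpha$.
\end{itemize}

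The main obstacle is the naturality of $\fb{\Jcoun{F}}$ with respect to arbitrary arrows $f$ of $\FFFP{p}$. The reindexing function $\sp{f}$ permutes and duplicates positions, so the inductive step must mix levels $\sp{f}(i+1)$ and $i+1$ across the two fibrations. I would first prove an auxiliary lemma: $\fb{\Jcoun{F}}$ commutes with all the maps $\C{i}{k}$ and $d^i_k$, that is, $d^i_k\arcmp\fb{\Jcoun{F}}_k=\fb{\Jcoun{F}}_i\arcmp\fb{F}\C{i}{k}$. This would be proved using \zcref{lem:C-coherence}, and once it is available the pullback-uniqueness argument goes through levelwise.
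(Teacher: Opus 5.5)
Your proposal is correct and follows the paper's route. The paper only notes that $\Jcoun{F}$ is invertible by construction and leaves the coherence conditions as routine calculations: naturality of the components, the $\JComp$ 2-cell condition, and compatibility with 2-cells, identities and composition. Your plan carries out those calculations accurately.

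One point to make explicit: the uniqueness clause of \zcref{lem:Cdef}, as written, only constrains the projections to lower levels. So at level $i+1$ you must also compare the legs into $\comp{q}\ft{F}B_{i+1}$. That comparison is exactly where naturality of $\psi$ and your auxiliary lemma $d^i_k\arcmp\fb{\Jcoun{F}}_k=\fb{\Jcoun{F}}_i\arcmp\fb{F}\C{i}{k}$ are used, which your level-$(i+1)$ step already relies on.
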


\begin{proof}
	Let $F:p\rightarrow q$ together with $\alpha:\Arr{(\fb{F})}\arcmp \jcomp{p} \nt \jcomp{q}\arcmp \ft{F}$ be a morphism in $\JComp$, and consider the following diagram:
	\[\begin{tikzcd}
		& {\JDom{\ct{E}'}{q}} &&& {\ct{E}'} \\
		{\JDomE{p}} && {\ct{E}} \\
		& {\FFFP{q}} &&& {\ct{B'}} \\
		{\FFFP{p}} && {\ct{B}}
		\arrow[""{name=0, anchor=center, inner sep=0}, "{{{\ft{\Jcoun{q}}}}}", from=1-2, to=1-5]
		\arrow["{{{\JFib{q}}}}"'{pos=0.3}, from=1-2, to=3-2]
		\arrow["{{{q}}}", from=1-5, to=3-5]
		\arrow["{{{\ft{\JFun{F}}}}}", from=2-1, to=1-2]
		\arrow["{{{\ft{\twont{\epsilon}_p}}}}"{pos=0.8}, from=2-1, to=2-3, crossing over]
		\arrow["{{{\JFib{p}}}}"', from=2-1, to=4-1]
		\arrow[""{name=1, anchor=center, inner sep=0}, "{{\ft{F}}}"', from=2-3, to=1-5]
		\arrow[""{name=2, anchor=center, inner sep=0}, "{{{\fb{\Jcoun{q}}}}}"{pos=0.8}, from=3-2, to=3-5]
		\arrow["p"'{pos=0.6}, from=2-3, to=4-3, crossing over]
		\arrow["{{{\fb{\JFun{F}}}}}"{pos=0.7}, from=4-1, to=3-2]
		\arrow["{{\fb{\Jcoun{p}}}}"', from=4-1, to=4-3]
		\arrow[""{name=3, anchor=center, inner sep=0}, "{{\fb{F}}}"', from=4-3, to=3-5]
		\arrow["{\ft{\Jcoun{F}}}"', bend left, shorten <=4pt, shorten >=4pt, Rightarrow, from=1, to=0]
		\arrow["{\fb{\Jcoun{F}}}"', bend left, shorten <=4pt, shorten >=4pt, Rightarrow, from=3, to=2]
	\end{tikzcd}\]
	We have that $\Jcoun{F}$ is an invertible 2-cell by construction. With routine calculations can be shown that the coherence conditions required for the pseudo-naturality are satisfied.
\end{proof}

\begin{prop}\label{prop:trian-id-1}
	Let $p:\ct{E}\to \ct{B}$ together with $\jcomp{p}:\ct{E}\rightarrow \Arr{\ct{B}}$ be a comprehension category, and consider the diagram
	\[\begin{tikzcd}
		{\ct{E}} & {\JDomE{p}} & {\ct{E}} \\
		{\ct{B}} & {\FFFP{p}} & {\ct{B}}
		\arrow["{{\ft{\twont{\eta}_p}}}"', from=1-1, to=1-2]
		\arrow[""{name=0, anchor=center, inner sep=0}, "{{{\Id{\ct{E}}}}}", bend left, from=1-1, to=1-3]
		\arrow["p"', from=1-1, to=2-1]
		\arrow["{{\ft{\twont{\epsilon}_p}}}"', from=1-2, to=1-3]
		\arrow["{{{\JFib{p}}}}", from=1-2, to=2-2]
		\arrow["p"', from=1-3, to=2-3]
		\arrow["{\fb{\twont{\eta}_p}}", from=2-1, to=2-2]
		\arrow[""{name=1, anchor=center, inner sep=0}, "{{{\Id{\ct{B}}}}}"', bend right, from=2-1, to=2-3]
		\arrow["{\fb{\twont{\epsilon}_p}}", from=2-2, to=2-3]
		\arrow["{\ft{\alpha_p}}"', shorten >=3pt, Rightarrow, from=1-2, to=0]
		\arrow["{\fb{\alpha_p}}", shorten >=3pt, Rightarrow, from=2-2, to=1]
	\end{tikzcd}\]
	where $\fb{\alpha_p}\colon = \idtwo{\Id{\ct{B}}}$ is the identity natural transformation and $\ft{\alpha_p}\colon = \delta^{-1}$, with $\delta:\Id{\ct{E}}\nt \reind{\id{}}$ the natural isomorphism obtained by reindexing along the identity. Then $\alpha:(\JFree\arcmp \forgJ)\nt \twoId{\JComp}$ is an invertible modification.
\end{prop}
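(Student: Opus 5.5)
We will verify the statement componentwise, following the shape of the diagram. The statement mixes notations: it says $\alpha:(\JFree\arcmp\forgJ)\nt\twoId{\JComp}$, but the diagram is really the first triangle identity, namely the composite $(\twont{\epsilon}\,\JFree\text{-}\forgJ)\circ(\twont{\eta}\,\forgJ)$ at a comprehension category $p$ compared with the identity 1-cell of $\forgJ p$. So the plan has three parts. First, $\alpha_p$ is a well-defined invertible 2-cell of $\Fib$ for each $p$. Second, it is compatible with the comprehension isomorphisms, hence a 2-cell of $\JComp$. Third, it satisfies the modification axiom with respect to 1-cells $F$, using the pseudo-naturality data $\twont{\eta}_F$ and $\Jcoun{F}$.

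For the first part, compute the composite explicitly. The base functor $\fb{\twont{\eta}_p}$ sends $X$ to $(X,())$. By \zcref{def:comp-maps}, $\fb{\twont{\epsilon}_p}(X,())$ is the domain of $\C{0}{0}=\id{X}$, namely $X$. On arrows, \zcref{lem:Cdef} with $m=0$ gives $g_0=\fp{f}\arcmp\C{0}{0}=\fp{f}$. Hence the bottom composite is literally $\Id{\ct{B}}$, and $\fb{\alpha_p}=\idtwo{\Id{\ct{B}}}$ is well typed.

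On the top, $\ft{\twont{\eta}_p}A=((pA,()),A)$, and $\ft{\twont{\epsilon}_p}$ of this object is the reindexing of $A$ along $\C{0}{0}=\id{pA}$, i.e. $\reind{\id{}}A$. An arrow $h$ is sent to the unique arrow over $ph$ obtained by cartesianity. Therefore the top composite is the functor $\reind{\id{}}$, where cartesianity extends $\reind{\id{}}$ to all arrows of $\ct{E}$. The comparison $\delta:\Id{\ct{E}}\nt\reind{\id{}}$ is the unique vertical arrow factoring $\id{A}$ through $\carar{\id{}}{A}$. It is invertible because both arrows are cartesian over an identity, and it is natural by the uniqueness clause of cartesianity. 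We also check that $p\delta^{-1}=\id{}$, so the pair $(\delta^{-1},\idtwo{})$ is a 2-cell of $\Fib$.

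For the second part, the composite 1-cell carries a comprehension isomorphism $\alpha\text{-}$datum. The counit preserves comprehension strictly (Prop. on $\twont{\epsilon}_p$), while $\twont{\eta}_p$ contributes the comprehension datum of the unit at $\forgJ p$. Unwinding the composition rule $\beta*\alpha$, the resulting datum is identity-like. Its components are given by $\comp{}\delta_A$, since $\fb{\twont{\epsilon}_p}\JFC{p}$ at $((X,()),A)$ is $\C{0}{1}=\jcomp{}\reind{\id{}}A$ by \zcref{lem:jcoun1}. The 2-cell condition then reduces to $\jcomp{}\delta^{-1}_A$ having top component $\comp{}\delta^{-1}_A$, which holds by functoriality of $\jcomp{}$ together with $\cod\jcomp{}=p$.

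For the third part, given $(F,\beta):p\to q$, the modification axiom equates two pastings. One is $\alpha_q$ whiskered by $F$. The other is $\alpha_p$ whiskered by $F$, composed with $\Jcoun{F}$ at length-zero lists and with $\twont{\eta}_F=\mathrm{id}$. By definition, the $n=0$ case of $\Jcoun{F}$ is the identity on the base and, on the top, the unique vertical arrow $\ft{F}\reind{\id{}}A\to\reind{\id{}}\ft{F}A$ factoring $\ft{F}\carar{\id{}}{A}$ through $\carar{\id{}}{\ft{F}A}$. Both sides are therefore vertical maps $\ft{F}A\to\reind{\id{}}\ft{F}A$ whose composite with $\carar{\id{}}{\ft{F}A}$ is $\id{\ft{F}A}$, so they agree by uniqueness in the universal property of cartesian arrows.

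The main obstacle is bookkeeping rather than mathematics. The difficulty is getting the direction and whiskering of the pseudo-naturality cells, and the composed comprehension isomorphisms, to line up so that every equation reduces to a uniqueness-of-cartesian-factorization argument.
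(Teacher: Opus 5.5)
Your parts one and three are the paper's proof. The bottom composite is literally $\Id{\ct{B}}$ because $\C{0}{0}=\id{X}$, and the top composite is $\reind{\id{}}$, so $\delta^{-1}$ gives the invertible 2-cell. The compatibility with 1-cells, which the paper dismisses as a routine calculation, is exactly your argument: uniqueness of factorization through $\carar{\id{}}{\ft{F}A}$, using only the length-zero components of $\Jcoun{F}$ and $\twont{\eta}_F=\mathrm{id}$.

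Your second part should be dropped. The composite $\forgJ\twont{\epsilon}_p\arcmp\twont{\eta}_{\forgJ p}$ is only a 1-cell of $\Fib$, so $\alpha_p$ need only be a 2-cell of $\Fib$. Moreover, $\twont{\eta}_{\forgJ p}$ carries no comprehension datum to compose via $\beta*\alpha$. It cannot preserve comprehension even up to iso: an iso between the domains of $\JFC{p}\ft{\twont{\eta}_p}A$ and $\Arr{(\fb{\twont{\eta}_p})}(\jcomp{}A)$ would need an arrow $(\comp{}A,())\to(pA,(A))$ in $\FFFP{p}$, hence a function $\num{1}\to\num{0}$, which does not exist.
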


\begin{proof}
	First, we need to show that $\fb{\twont{\epsilon}_p}\arcmp\fb{\twont{\eta}_p}=\Id{\ct{B}}$. This is just a straightforward consequence of the definitions of $\fb{\twont{\eta}_p}$ and the base case of $\fb{\twont{\epsilon}_p}$.
	
	Afterwards, it is enough to show that $\delta^{-1}:\ft{\twont{\epsilon}_p}\arcmp\ft{\twont{\eta}_p}\nt\Id{\ct{E}}$, since $\delta^{-1}$ is trivially invertible. But this is again obvious by their definition: $\ft{\twont{\epsilon}_p}\ft{\twont{\eta}_p}A=\ft{\twont{\epsilon}_p}((pA,()),A)=\reind{(\C{0}{0})}A=\reind{\id{(pA)}}A$.
	
	Finally, the naturality of $\alpha$ with respect to the 1-cells is the result of a routine calculation. 
\end{proof}

\begin{prop}\label{prop:trian-id-2}
	Let $p:\ct{E}\rightarrow\ct{B}$ be a fibration, and consider the diagram
	\[\begin{tikzcd}
		{\JDomE{p}} & {\JDom{\FFFP{p}}{\JFib{p}}} & {\JDomE{p}} \\
		{\FFFP{p}} & {\FFFP{\JFib{p}}} & {\FFFP{p}}
		\arrow["{\ft{\JFun{\twont{\eta}_p}}}", from=1-1, to=1-2]
		\arrow[""{name=0, anchor=center, inner sep=0}, "{{\Id{\JDomE{p}}}}", bend left, from=1-1, to=1-3]
		\arrow["{\JFib{p}}"', from=1-1, to=2-1]
		\arrow["{\ft{\twont{\epsilon}_{\JFun{p}}}}", from=1-2, to=1-3]
		\arrow["{{\JFib{\JFib{p}}}}", from=1-2, to=2-2]
		\arrow["{{\JFib{p}}}", from=1-3, to=2-3]
		\arrow["{\fb{\JFun{\twont{\eta}_p}}}"', from=2-1, to=2-2]
		\arrow[""{name=1, anchor=center, inner sep=0}, "{{\Id{\Arr{\FFFP{p}}}}}"', bend right, from=2-1, to=2-3]
		\arrow["{\fb{\twont{\epsilon}_{\JFun{p}}}}"', from=2-2, to=2-3]
		\arrow["{\ft{\beta_p}}"', shorten >=3pt, Rightarrow, from=1-2, to=0]
		\arrow["{\fb{\beta_p}}", shorten >=3pt, Rightarrow, from=2-2, to=1]
	\end{tikzcd}\]
	where $\beta_p\colon= \id{\JFib{p}}$. Then $\beta$ is an invertible modification.
\end{prop}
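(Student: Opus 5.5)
The plan is to verify directly that the composite
$\twont{\epsilon}_{\JFun{p}}\arcmp\JFun{(\twont{\eta}_p)}$ is equal on the nose to $\Id{\JFib{p}}$, on both the base and the total component. Once that holds, $\beta_p=\id{\JFib{p}}$ is a well-defined identity 2-cell, hence invertible, and its modification axioms are trivial.

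\emph{Computing the image of $\JFun{(\twont{\eta}_p)}$.} I first unfold $\fb{\JFun{(\twont{\eta}_p)}}=\ft{\Fffp{(\twont{\eta}_p)}}$ on an object $(X,\vec{A})$ of $\FFFP{p}$ of length $n$. Since $\fb{\twont{\eta}_p}=\term{\fffp{p}}$ and $\ft{\twont{\eta}_p}A=((pA,()),A)$, the image is
$$((X,()),\,[((X,()),A_1),\dots,((X,()),A_n)]).$$
On a morphism $f$ it keeps $\sp{f}$, sends $\fp{f}$ to $\term{\fffp{p}}\fp{f}$, and sends each $f_i$ to $(\term{\fffp{p}}p f_i,f_i)$.

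\emph{Computing $\fb{\twont{\epsilon}_{\JFun{p}}}$ on this object.} This uses \zcref{def:comp-maps} for the comprehension category $\JFib{p}$. Here I fix a cleavage of $\JFib{p}$ that is inherited from the 2-pullback but normalized in one respect. Whenever the base arrow $g$ in $\FFFP{p}$ has $\fp{g}=\id{X}$, I take the cartesian lifting of $g$ at $((Y,\vec{B}),B)$ to be $(g,\id{B})$. This pair is cartesian because $\id{B}$ is $p$-cartesian, so the choice is legitimate.

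With this cleavage I prove, by induction on $k\le n$, two facts:
\begin{itemize}
\item $\fb{\twont{\epsilon}_{\JFun{p}}}$ applied to the length-$k$ restriction of the displayed object equals $(X,\vec{A}\rstn_k)$;
\item $\C{0}{k}$ equals the arrow $(X,\vec{A}\rstn_k)\to(X,())$ with first component $\id{X}$, the inclusion $\num{0}\hookrightarrow\num{k}$ as second component, and the empty family as third component.
\end{itemize}
For the inductive step, the reindexing of $((X,()),A_{k+1})$ along $\C{0}{k}$ is $((X,\vec{A}\rstn_k),A_{k+1})$ by the normalized cleavage, since $\fp{\C{0}{k}}=\id{X}$. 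Its $\JFC{p}$-comprehension is exactly the inclusion $(X,\vec{A}\rstn_{k+1})\to(X,\vec{A}\rstn_k)$. Composing gives $\C{0}{k+1}$ of the claimed form.

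\emph{Total component.} The same computation shows that $\ft{\twont{\epsilon}_{\JFun{p}}}$, applied to the image of $((X,\vec{A}),A_{n+1})$, is the reindexing of $((X,()),A_{n+1})$ along $\C{0}{n}$. By the normalized cleavage this is $((X,\vec{A}),A_{n+1})$ itself.

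\emph{Morphisms.} For morphisms I use the uniqueness clause of \zcref{lem:Cdef}. The family given by the restrictions of $f$ itself, namely $g_i=(\fp{f},\sp{f}\rstn,\tp{f}\rstn_i)$, satisfies the defining equations $\C{0}{i}\arcmp g_i=\fp{f}\arcmp\C{0}{n}$ and $g_i=\jcomp{}\reind{B_{i+1}}\arcmp g_{i+1}$. The pullback mediating maps here are literally these restrictions, which follows from the explicit description of pullbacks in the proof that $\JFC{p}$ preserves cartesian arrows. Hence $\fb{\twont{\epsilon}_{\JFun{p}}}(\fb{\JFun{(\twont{\eta}_p)}}f)=g_m=f$. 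For total morphisms $(g,h)$, the unique cartesian factorization through the normalized lifting $(\,\cdot\,,\id{B_{m+1}})$ returns $(g,h)$.

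The main obstacle is the cleavage issue. With an arbitrary cleavage one only gets $\reind{\id{}}$-isomorphisms, as in \zcref{prop:trian-id-1}, and $\beta_p$ would have to be a nontrivial iso. So the first thing to settle is that $\twont{\epsilon}_{\JFun{p}}$ may be, and is intended to be, computed with the normalized cleavage of the 2-pullback described above. If that choice turned out not to be available, I would instead set $\beta_p$ to be the canonical vertical isomorphism built from the $\delta^{-1}$'s, and check the modification axioms routinely.
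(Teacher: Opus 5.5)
Your proposal is correct and is essentially the paper's argument. The paper simply asserts that $\twont{\epsilon}_{\JFib{p}}\arcmp\JFun{(\twont{\eta}_p)}$ is the identity by unfolding the definitions, and your computation is exactly that unfolding: the iterated comprehensions collapse to the inclusions $(X,\vec{A}\rstn_{k+1})\to(X,\vec{A}\rstn_k)$, and the mediating maps are the restrictions of $f$. Note that the operative uniqueness there is that of pullback mediating maps; the two equations listed in \zcref{lem:Cdef} do not by themselves determine $g_{i+1}$.

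Your normalization of the cleavage of $\JFib{p}$ at arrows with $\fp{g}=\id{X}$ is precisely what makes the equality strict. With an arbitrary cleavage one only gets the $\reind{\id{X}}$-isomorphisms, as in \zcref{prop:trian-id-1}. The paper leaves this point implicit.
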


\begin{proof}
	We only need to show that $\twont{\epsilon}_{\JFun{p}}\arcmp \JFun{\twont{\eta}_p}=\id{\JFib{p}}$. This is a straightforward consequence of the definition of $\twont{\eta}$ and $\twont{\epsilon}$.
	
\end{proof}

\section{Proof of technical results of \zcref{sect:JtoL}}\label{app:B}

Here we prove some technical results about \zcref{sect:JtoL}.

\subsection{From $\JComp$ to $\JTComp$}
\begin{lem}\label{lem:JTun-pseudonat}
	$\JTun{}$ is a 2-natural transformation.
\end{lem}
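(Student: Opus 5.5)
We have to show that the unit $\JTun{}$ of $\JTFree\dashv\forgT$ is 2-natural. For every comprehension category $p$, the component $\JTun{p}:p\to\forgT\JTFree p$ is the inclusion $\ct{E}\hookrightarrow\JTDomE{p}$ on total categories and $\Id{\ct{B}}$ on bases. The plan has three steps: check that each component is a 1-cell of $\JComp$, check that the naturality squares commute strictly, and check compatibility with 2-cells.

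First, each $\JTun{p}$ is a morphism in $\JComp$ with the identity comparison 2-cell. Its top component preserves cartesian arrows, because cartesian liftings in $\JTFib{p}$ at objects coming from $\ct{E}$ are by construction those of $p$. It also preserves comprehension on the nose, since $\jcomp{\JTFib{p}}$ restricts to $\jcomp{p}$ on $\ct{E}$. So $\jcomp{\JTFib{p}}\arcmp\ft{\JTun{p}}=\jcomp{p}=\Arr{(\Id{\ct{B}})}\arcmp\jcomp{p}$, and we may take the identity 2-cell.

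Second, for a 1-cell $(F,\alpha):p\to q$ of $\JComp$, we compare $\forgT\JTFun{F}\arcmp\JTun{p}$ with $\JTun{q}\arcmp F$ and show they are equal as 1-cells of $\JComp$, including their comprehension 2-cells. This is what makes $\JTun{F}$ the identity.
\begin{itemize}
\item Bottom components: both composites give $\fb{F}$, since $\fb{\JTFun{F}}=\fb{F}$.
\item Top components: $\ft{\JTFun{F}}$ acts as $\ft{F}$ on objects and arrows of $\ct{E}$, and the inclusion lands exactly there. Hence both composites equal $\ft{F}$ followed by the inclusion of $\ct{E'}$.
\end{itemize}

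The main point to check carefully is the comprehension 2-cell. The composite in $\JComp$ is given by $\beta*\alpha=(\Arr{(\fb{G})}\alpha)\arcmp(\beta\ft{F})$. On $\JTun{q}\arcmp F$ it yields $\alpha$, since the unit contributes identities. On $\forgT\JTFun{F}\arcmp\JTun{p}$ it yields the restriction to $\ct{E}$ of the comprehension isomorphism of $\JTFun{F}$. So we must confirm that the comprehension iso of $\JTFun{F}$ is defined as $\alpha$ on objects of $\ct{E}$ and as the identity on the new objects $\term{}X$. We fix this as the definition of $\JTFun{F}$ as a 1-cell of $\JComp$. The new objects need the identity because $\jcomp{}\term{q}\fb{F}X=\id{\fb{F}X}=\Arr{(\fb{F})}\id{X}$. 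With this definition the two 2-cells coincide.

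Third, for a 2-cell $\phi:(F,\alpha)\nt(G,\beta)$ we check $\JTun{q}\phi=\JTFun{\phi}\JTun{p}$. Both have bottom component $\fb{\phi}$. At an object $A$ of $\ct{E}$, both have top component $\ft{\phi}_A$, because $\ft{\JTFun\phi}_A=\ft\phi_A$. Finally, the 2-functoriality axioms for a pseudo-natural transformation whose 2-cell components are identities reduce to these strict equalities, and they hold trivially since composites of identity 2-cells are identities.
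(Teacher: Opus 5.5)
Your proposal is correct and follows the same route as the paper. The paper's proof consists only of the observation that $\JTFun{F}\arcmp\JTun{p}=\JTun{q}\arcmp F$ holds strictly, and your additional checks are exactly the details it leaves implicit as ``straightforward'': that the comprehension isomorphism of $\JTFun{F}$ is the identity at the new objects $\term{}X$, that the comprehension 2-cells of the two composites agree, and that the unit is compatible with 2-cells.
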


\begin{proof}
	Consider a morphism $F:p\to q$ in $\JComp$. It is straightforward that $\JTFun{F}\arcmp \JTun{p}=\JTun{q}\arcmp F$.
\end{proof}

\begin{lem}\label{lem:JTcoun-pseudonat}
	$\JTcoun{}$ is a pseudo-natural transformation.
\end{lem}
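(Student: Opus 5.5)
To prove that $\JTcoun{}$ is pseudo-natural, I would first fix the data. For every 1-cell $F:p\to q$ of $\JTComp$, with comprehension iso $\alpha$, I must exhibit an invertible 2-cell $\JTcoun{F}: F\arcmp\JTcoun{p}\nt\JTcoun{q}\arcmp\JTFun{F}$ in $\JTComp$. I take the one already sketched before the theorem: $\fb{\JTcoun{F}}$ is the identity on $\fb{F}$, since both bottom components equal $\fb{F}$. The top component is defined on objects of $\JTDomE{p}$ by cases. On an object $A$ coming from $\ct{E}$ both composites give $\ft{F}A$, and I set the component to $\id{\ft{F}A}$. On a new object $\term{}X$, the left composite gives $\ft{F}\term{p}X$ and the right one gives $\term{q}\fb{F}X$. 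Since $F$ preserves fibred terminal objects, there is a unique vertical arrow between them, and it is invertible; this is the component. The whole 2-cell is therefore invertible. Naturality of the top component follows from the same universal property: arrows into a fibred terminal object are determined by their image in the base, and arrows between old objects are unchanged. So it suffices to compare images under $q$, which agree because the bottom component is an identity.

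Next I check that $\JTcoun{F}$ is a 2-cell of $\JComp$, that is, that it is compatible with the comprehension isomorphisms. These are $\alpha$ pasted with the iso of $\JTcoun{p}$ on one side, and $\alpha$ of $\JTFun{F}$ pasted with that of $\JTcoun{q}$ on the other. I split into cases once more. On old objects every iso involved is $\alpha$ itself or an identity, so the condition is trivial. On $\term{}X$ the comprehension isos relate $\comp{}\term{}X$ to $X$ via the inverses of $\jcomp{}\term{}X$. The required equation is then an equality of arrows into $\comp{q}\term{q}\fb{F}X$, which maps isomorphically onto $\fb{F}X$ by $\jcomp{q}\term{q}$. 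It therefore suffices to check it after postcomposition with $\jcomp{q}\term{q}\fb{F}X$, where both sides reduce to the identity of $\fb{F}X$, using $\cod\alpha=\idtwo{}$.

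Finally I verify the pseudo-naturality coherences. For composable $F,G$, the cell $\JTcoun{G\arcmp F}$ must equal the pasting of $\JTcoun{G}$ and $\JTcoun{F}$; for identities, $\JTcoun{\Id{}}$ must be the identity. For a 2-cell $\gamma:F\nt G$, the square relating $\JTcoun{F}$, $\JTcoun{G}$, $\gamma\rwhisk{\JTcoun{p}}$ and $\lwhisk{\JTcoun{q}}\JTFun{\gamma}$ must commute. In every case the bottom components are identities or $\fb{\gamma}$ and agree trivially. The top components at old objects are identities or $\ft{\gamma}_A$ on both sides. At $\term{}X$ both sides are arrows into a fibred terminal object lying over the same base arrow, namely $\fb{\gamma}_X$, or the identity for the composition axiom. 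By uniqueness of arrows into fibred terminals they coincide.

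The only step with real content is the compatibility with the comprehension isos at $\term{}X$ in the second paragraph. There one must track the non-identity iso of $\JTcoun{p}$, which comes from $(\jcomp{p}\term{p})^{-1}$, through the pasting. I expect the reduction via the mono $\jcomp{q}\term{q}\fb{F}X$ to make even this routine.
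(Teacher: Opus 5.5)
Your proposal is correct and follows the same route as the paper. The component $\JTcoun{F}$ is the identity on old objects and the unique invertible vertical arrow $\ft{F}\term{p}X\to\term{q}\fb{F}X$ on new ones, and every coherence condition reduces to uniqueness of arrows into fibred terminal objects; the paper calls these checks routine, while you spell them out, including the compatibility with the comprehension isomorphisms, which you correctly reduce to the invertibility of $\jcomp{q}\term{q}\fb{F}X$.
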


\begin{proof}
	Consider a morphism $F:p\to q$ in $\JTComp$. The 2-cell $\JTcoun{F}$ is invertible since it is the iso testifying that $F$ preserves fibred terminals. Its naturality follows from a routine calculation.
\end{proof}

\subsection{From $\JTComp$ to $\LComp$}

\begin{lem}\label{lem:LEun-nat}
	$\LEun{}:\twoId{\JTComp}\to \forgLE\arcmp \LEFree$ is a 2-natural transformation.
\end{lem}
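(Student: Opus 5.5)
We have to show two things: for every morphism $F:p\to q$ in $\JTComp$ the naturality square $\LEFun{F}\arcmp\LEun{p}=\LEun{q}\arcmp F$ commutes strictly as 1-cells of $\JTComp$; and for every 2-cell $\gamma:F\nt G$ the whiskerings agree, $\LEFun{\gamma}\rwhisk{\LEun{p}}=\LEun{q}\lwhisk{\gamma}$. Both components of the unit have identity base functor, and $\fb{\LEFun{F}}=\fb{F}$. So on bases both composites are $\fb{F}$, and the work is on the total categories.

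For the 1-cells, recall how $\LEFun{F}$ is obtained. The functor $\Tr{F}:\Tr{p}\to\Tr{q}$, followed by $\quot{q}$, coequalizes $\incl$ and $\triang$: we check that $\Tr{F}\arcmp\incl=\incl\arcmp\ConstCat{F}$ and $\Tr{F}\arcmp\triang=\triang\arcmp\ConstCat{F}$ hold up to the relation $\sim$ on $\Tr{q}$. The universal property of the coequalizer $\quot{p}$ then gives $\ft{\LEFun{F}}$, with $\ft{\LEFun{F}}\arcmp\quot{p}=\quot{q}\arcmp\Tr{F}$. Since $\ft{\LEun{p}}$ is the inclusion $\ct{E}\hookrightarrow\Tr{p}$ followed by $\quot{p}$, we get
\[\ft{\LEFun{F}}\arcmp\ft{\LEun{p}}=\quot{q}\arcmp\Tr{F}\arcmp\iota_p ,\]
where $\iota_p$ is the inclusion. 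By definition $\Tr{F}$ acts as $\ft{F}$ on objects and on morphisms coming from $\ct{E}$, so $\Tr{F}\arcmp\iota_p=\iota_q\arcmp\ft{F}$. Hence the composite equals $\ft{\LEun{q}}\arcmp\ft{F}$.

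It remains to compare the structure isomorphisms witnessing preservation of comprehension and of terminals. The unit preserves both strictly: terminals of $\LEFib{p}$ are those of $p$, and $\jcomp{\LEFib{p}}$ agrees with $\jcomp{p}$ on $\ct{E}$. Therefore both composite isos reduce to the iso $\alpha$ of $F$, as a composite with identities.

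For the 2-cells, $\LEFun{\gamma}$ is induced by the 2-dimensional universal property of the coequalizer, so its components satisfy $\LEFun{\gamma}_{\quot{}A}=\quot{q}(\ft{\gamma}_A)$ for objects $A$, and its base part is $\fb{\gamma}$. Whiskering with $\LEun{p}$ evaluates it at objects of $\ct{E}$, giving $\quot{q}(\ft{\gamma}_A)=(\LEun{q}\lwhisk{\gamma})_A$.

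The main obstacle is that $\LEFun{F}$ and $\LEFun{\gamma}$ are only given through universal properties. Each equation has to be turned into a statement about representatives in $\Tr{q}$. For objects and morphisms coming from $\ct{E}$ this is immediate, because on them $\Tr{F}$ and the inclusions agree on the nose. Since only the restriction to $\ct{E}$ matters, the quotient by $\sim$ causes no trouble here.
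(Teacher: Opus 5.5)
Your argument is correct and is essentially the paper's proof, which consists of the commuting diagram expressing $\Tr{F}\arcmp\iota_p=\iota_q\arcmp\ft{F}$ and $\ft{\LEFun{F}}\arcmp\quot{p}=\quot{q}\arcmp\Tr{F}$. Your checks on 2-cells and on the comprehension isomorphisms go beyond what the paper verifies. For the latter, keep in mind that the isomorphism carried by $\forgLE(\LEFun{F})$ is the mate $\mate{(\theta^{-1})}$ produced by $\LtoJ$; unfolding it with the counit $\coun{}_A=\weakar{A}{A}\arcmp\gel{A}^{\#}$ confirms that it agrees with the comprehension isomorphism of $F$ on objects of $\ct{E}$, as you assert.
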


\begin{proof}
	Consider a 1-cell $F:p\to q$ in $\JTComp$. Then the following diagram commutes, proving the claim.
	\[\begin{tikzcd}
		{\ct{E}} && {\Tr(p)} && {\LEDomE{p}} \\
		\\
		{\ct{E'}} && {\Tr(q)} && {\LEDom{\ct{E}'}{q}}
		\arrow[hook, from=1-1, to=1-3]
		\arrow["{\ft{F}}"{description}, from=1-1, to=3-1]
		\arrow["{{\quot{p}}}"{description}, from=1-3, to=1-5]
		\arrow["{\Tr(F)}"{description}, from=1-3, to=3-3]
		\arrow["{\LEFun{F}}"{description}, dashed, from=1-5, to=3-5]
		\arrow[hook, from=3-1, to=3-3]
		\arrow["{{\quot{q}}}"{description}, from=3-3, to=3-5]
	\end{tikzcd}\]
\end{proof}

\begin{lem}\label{lem:LEcoun-nat}
	$\LEcoun{}:\LEFree\arcmp\forgLE\to \twoId{\LComp}$ is a 2-natural transformation.
\end{lem}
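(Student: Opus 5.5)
We must show that for every 1-cell $F:p\to q$ of $\LComp$ the two composites $\ft{F}\arcmp\ft{\LEcoun{p}}$ and $\ft{\LEcoun{q}}\arcmp\ft{\LEFun{(\forgLE F)}}$, functors $\LEDomE{p}\to\ct{E'}$, coincide. Their base components are both $\fb{F}$, since every functor involved lies over $\Id{\ct{B}}$ or over $\fb{F}$. We must also show compatibility with 2-cells. The plan is to work upstairs in $\Tr{p}$ and use the universal property of the coequalizer $\quot{p}$. Both composites are determined by their precomposites with $\quot{p}$, because $\quot{p}$ is a coequalizer and hence epi in $\Cattwo$. It therefore suffices to check that
\[\ft{F}\arcmp\ft{\LEcoun{p}}\arcmp\quot{p} \;=\; \ft{\LEcoun{q}}\arcmp\quot{q}\arcmp\Tr{F}\]
as functors $\Tr{p}\to\ct{E'}$. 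Here we use that $\LEFun{F}\arcmp\quot{p}=\quot{q}\arcmp\Tr{F}$, which is the defining square already drawn in the proof of \zcref{lem:LEun-nat}.

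On objects both sides act as $\ft{F}$. On a morphism $f:A\to B$ coming from $\ct{E}$ both sides give $\ft{F}f$, since $\Tr{F}$ acts as $\ft{F}$ on such maps and the counit fixes them. The real work is on the new morphisms $t:pA\to\comp{p}B$. The left side gives $\ft{F}(\incl(t^{\#}))=\ft{F}(t^{\#}\arcmp !)$, where $t^{\#}$ is the transpose along $\term{p}\dashv\comp{p}$. The right side gives $\incl((\alpha^{-1}_B\arcmp\fb{F}t)^{\#})$, the transpose now taken along $\term{q}\dashv\comp{q}$. Here $\alpha$ is the comprehension-preserving isomorphism supplied by $\LtoJ$ in \zcref{thm:l-to-j}, namely $\mate{(\theta^{-1})}$.

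The key step is to verify that $\ft{F}$ commutes with transposition up to the canonical isomorphisms. Concretely, for $g:X\to\comp{p}B$ we need $\ft{F}(g^{\#})\arcmp\theta^{-1}_X=(\mate{(\theta^{-1})}_B\arcmp\fb{F}g)^{\#}$ as maps $\term{q}\fb{F}X\to\ft{F}B$. This is the standard fact that transposition along a pair of adjunctions is compatible with mates. We would prove it by expanding $g^{\#}=\coun{}_B\arcmp\term{}g$ and using the commuting square relating $\theta^{-1}$, $\mate{(\theta^{-1})}$ and the two counits, again from the proof of \zcref{thm:l-to-j}.

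With this identity in hand, we postcompose with the unique map $!$ into the fibred terminal object. Since $\theta$ is an isomorphism between terminal objects, it is absorbed by uniqueness of $!$ over $\fb{F}(pA)$, so both sides become the same morphism in $\ct{E'}$. We expect the main obstacle to be exactly this bookkeeping of mates, together with checking that the transpose is taken over the right base arrow so that composing with $!$ is legitimate.

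Finally, 2-naturality on 2-cells reduces to the fact that the 2-cells of $\LComp$ are just 2-cells of $\Fib$. For a 2-cell $\phi:F\nt G$ we need $\ft{\phi}\,\ft{\LEcoun{p}}=\ft{\LEcoun{q}}\,\ft{\LEFun{\phi}}$. Both 2-cells have components $\ft{\phi}_A$ on objects, since $\LEFun{\phi}$ is induced from $\phi$ via the 2-dimensional universal property of the coequalizer. Two 2-cells out of $\ft{\LEcoun{p}}$-related functors with equal components are equal, so this part should be immediate.
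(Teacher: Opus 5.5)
Your proof is correct and follows the paper's route. The paper's one-line proof is an appeal to the universal property of the coequalizer $\quot{p}$: maps out of $\LEDomE{p}$ are determined by their precomposite with $\quot{p}$. Your reduction to functors $\Tr{p}\to\ct{E'}$, and the check on the new morphisms via $\coun{q}_{\ft{F}B}\arcmp\term{q}\mate{(\theta^{-1})}_B=\ft{F}\coun{p}_B\arcmp\theta^{-1}_{\comp{p}B}$ plus naturality of $\theta^{-1}$ and uniqueness of maps into fibred terminals, supply exactly the verification the paper leaves implicit. The only blemish is the direction slip in writing $\alpha^{-1}_B\arcmp\fb{F}t$, which you silently correct to $\mate{(\theta^{-1})}_B\arcmp\fb{F}t$ in the key step.
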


\begin{proof}
	It is natural since it is determined by a universal property.
\end{proof}

		\end{document}